\documentclass{article}
\usepackage[utf8]{inputenc}
\usepackage{fullpage}

\usepackage{algpseudocode, algorithm}
\usepackage{amssymb}
\usepackage{amsmath}
\usepackage{amsthm}
\usepackage{amsfonts}
\usepackage{mathtools}
\usepackage{upgreek}
\usepackage{xspace}
\usepackage[charter]{mathdesign}
\usepackage{eulervm}
\usepackage{enumitem}
\usepackage{boxedminipage}
\usepackage[x11names]{xcolor}
\usepackage{graphicx}
\usepackage{framed}
\usepackage{caption}
\usepackage{ifthen}
\usepackage{url}
\usepackage{verbatim}
\usepackage{array}
\usepackage{hyperref}



\def\final{0}  
\def\iflong{\iffalse}
\ifnum\final=0  
\newcommand{\jnote}[1]{{\color{red}[\small {Jeremiah: \bf #1}]\marginpar{\color{red}*}}}
\newcommand{\enote}[1]{{\color{green}[{\small Elena: \bf #1}]\marginpar{\color{red}*}}}
\newcommand{\tanote}[1]{{\color{purple}[\small {Tamalika: \bf #1}]\marginpar{\color{red}*}}}
\else 
\newcommand{\jnote}[1]{}
\newcommand{\enote}[1]{}
\newcommand{\tanote}[1]{}
\newcommand{\todo}[1]{}
\fi

  \newcommand{\Lap}{\ensuremath{{\text{Lap}}}\xspace}

\newcommand{\E}[0]{\mathop{\bbE}\xspace}

\newtheorem{fact}{Fact}

  \newcommand{\eps}[0]{\ensuremath{\varepsilon}}
  \let\epsilon\eps
  
  \newcommand{\cA}{\ensuremath{{\mathcal A}}\xspace}

  \newcommand{\cD}{\ensuremath{{\mathcal D}}\xspace}

  \newcommand{\cM}{\ensuremath{{\mathcal M}}\xspace}
  
  \newcommand{\cO}{\ensuremath{{\mathcal O}}\xspace}

  \newcommand{\cR}{\ensuremath{{\mathcal R}}\xspace}
  \newcommand{\cS}{\ensuremath{{\mathcal S}}\xspace}

  \newcommand{\cX}{\ensuremath{{\mathcal X}}\xspace}
  
  \newcommand{\cZ}{\ensuremath{{\mathcal Z}}\xspace}


  \newcommand{\bbE}{\ensuremath{{\mathbb E}}\xspace}

  \newcommand{\bbR}{\ensuremath{{\mathbb R}}\xspace}


\makeatletter
\newtheorem*{rep@theorem}{\rep@title}
\newcommand{\newreptheorem}[2]{%
\newenvironment{rep#1}[1]{%
 \def\rep@title{\theoremref{##1} Restated}%
 \begin{rep@theorem}}%
 {\end{rep@theorem}}}
\makeatother

\makeatletter
\newtheorem*{rep@lemma}{\rep@title}
\newcommand{\newreplemma}[2] {%
\newenvironment{rep#1}[1]{%
 \def\rep@title{\lemmaref{##1} Restated}%
 \begin{rep@lemma}}%
 {\end{rep@lemma}}}
\makeatother


\newtheorem{theorem}{Theorem}
\newreptheorem{theorem}{Theorem}

\newtheorem{definition}{Definition}
\newtheorem{lemma}{Lemma}

\newreplemma{lemma}{Lemma}
\newtheorem{claim}{Claim}
\newtheorem{corollary}[theorem]{Corollary}

\theoremstyle{definition}
\newtheorem*{remark}{Remark}




\newcommand{\namedref}[2]{\hyperref[#2]{#1~\ref*{#2}}\xspace}

\newcommand{\lemmaref}[1]{\namedref{Lemma}{lem:#1}}

\newcommand{\theoremref}[1]{\namedref{Theorem}{thm:#1}}

\title{Privately Estimating Graph Parameters in Sublinear time}

\author{Jeremiah Blocki, Elena Grigorescu, Tamalika Mukherjee\\ 
Department of Computer Science, Purdue University.\\ \{jblocki, elena-g, tmukherj\}@purdue.edu
\thanks{J. B. and T.M were supported in part by NSF CNS-1931443 and NSF CCF-1910659. E.G and T. M.  were supported in part by NSF CCF-1910659 and NSF CCF-1910411.}
 }

\begin{document}

\maketitle

\begin{abstract}

We initiate a systematic study of algorithms that are both differentially-private and run in sublinear time for several problems in which the goal is to estimate natural graph parameters.
Our main result is a differentially-private $(1+\rho)$-approximation algorithm for the problem of computing the average degree of a graph, for every $\rho>0$. The running time of the algorithm is roughly the same as its non-private version proposed by Goldreich and Ron (Sublinear Algorithms, 2005).
We also obtain the first differentially-private sublinear-time approximation algorithms for the  maximum matching size and the minimum vertex cover size of a graph.

An overarching technique we employ is the notion of \emph{coupled global sensitivity} of randomized algorithms. Related variants of this notion of sensitivity have been used in the literature in ad-hoc ways. Here we formalize the notion and develop it as a unifying framework for privacy analysis of randomized approximation algorithms. 

\end{abstract}

\section{Introduction}
 Graphs are frequently used to model massive data sets (e.g., social networks) where the users are the nodes, and their relationships are the edges of the graphs. These relationships often consist of sensitive information, which drives the need for privacy in this setting. 

Differential Privacy (DP)~\cite{Dwork_McSherry_Nissim_Smith_2017} has become the gold standard in privacy-preserving data analysis due to its compelling privacy guarantees and mathematically rigorous definition. Informally, a randomized function computed on a graph is {\em differentially private} if the distribution of the function's output does not change significantly with the presence or absence of an individual edge (or node). See~\cite{DR14} for a comprehensive tutorial on differential privacy.

\begin{definition}[Differential-privacy]\label{def:DP}
Let $\mathcal{G}_n$ denote the set of all $n$-node graphs. An algorithm $\cA$ is $(\eps,\delta)$ node-DP (resp. edge-DP) if for every pair of node-neighboring (resp. edge-neighboring)\footnote{Graphs $G_1=(V, E_1)$, $G_2=(V, E_2)$ are \emph{node-neighboring}, denoted by $G_1\sim_v G_2$, if there exists a vertex $v\in V$ such that $E_1(V\setminus\{v\})=E_2(V\setminus\{v\})$. Graphs $G_1$ and $G_2$ are \emph{edge-neighboring} i.e., $G_1 \sim_e G_2$ if there exists an edge $e$ such that $E_1 \setminus{\{e\}}=E_2 \setminus{\{e\}}$.} graphs $G_1,G_2 \in \mathcal{G}_n$, and for all sets $\cS$ of possible outputs, we have that $\Pr[\cA(G_1) \in \cS ] \leq e^\eps \Pr [\cA(G_2) \in \cS] + \delta $. When $\delta=0$ we simply say that the algorithm is $\epsilon$-DP.
\end{definition}

Since the graphs appearing in modern applications are massive, it is also often desirable to design {\em sublinear-time} algorithms that approximate natural combinatorial properties of the graph, such as the average degree, the number of connected components, the cost of a minimum spanning tree, the number of triangles, the size of a maximum matching, the size of a minimum vertex cover, etc.
For an excellent survey on sublinear-time algorithms for approximating graph parameters, we refer the reader to \cite{Ron19}.

There has been a lot of work in developing differentially-private algorithms for estimating graph parameters in polynomial-time, with respect to {\em edge differential privacy}, i.e., neighboring graphs that differ by a single edge in Definition~\ref{def:DP}. Nissim, Raskhodnikova, and Smith~\cite{NRS07} demonstrated the first edge-differentially private graph algorithms. They showed how to estimate the cost of a minimum spanning tree and the number of triangles in a graph by calibrating noise to a local variant of sensitivity called {\em smooth sensitivity}. Subsequent works in designing edge differentially-private algorithms for computing graph statistics include~\cite{karwa2011private,hay2009accurate, LG2014, ZCP15}. Gupta, Ligett, McSherry, Roth and Talwar~\cite{gupta2010differentially} gave the first edge differentially-private algorithms for classical graph optimization problems, such as vertex cover, and minimum s-t cut, by making clever use of the exponential mechanism in existing non-private algorithms that solve the same problem. 

An even more desirable notion of privacy in graphs is the notion of {\em node differential privacy} i.e., neighboring graphs that differ by a single node and edges incident to it in Definition~\ref{def:DP}. The concept of node differentially-private algorithms for $1$-dimensional functions (functions that output a single real value) on graphs was first rigorously studied independently by Kasiviswanathan, Nissim, Raskhodnikova and Smith~\cite{kasiviswanathan2013analyzing}, as well as, Blocki, Blum, Datta, and Sheffet~\cite{blocki2013differentially}, and Chen and Zhou~\cite{Chen_2013}. Their techniques were later extended to higher-dimensional functions on graphs~\cite{raskhodnikova2015efficient,BCS2015}. Subsequent works have focused on developing node differentially-private algorithms for a family of network models: stochastic block models and graphons~\cite{Borgs_2018,sealfon2019efficiently}. A more recent line of work has focused on the continual release of graph statistics such as degree-distributions and subgraph counts in an online setting~\cite{SLMVC18,fichtenberger2021differentially}. Gehrke, Lui, and Pass~\cite{gehrke2011towards} introduce a more robust notion of differential privacy called Zero-Knowledge Differential Privacy (ZKDP), which tackles the problem of auxiliary information in social networks. This work uses existing results from sublinear-time algorithms as a building block to achieve ZKDP for several graph problems. However, it is important to note that the final ZKDP mechanisms are not computable in sublinear-time. 

The literature on designing differentially-private algorithms for estimating graph parameters in sublinear time is far less developed. The only paper we are aware of is due to Sivasubramaniam, Li and He~\cite{sivasubramaniamdifferentially}, who give the first sublinear-time differentially-private algorithm for approximating the average degree of a graph. 
Our work addresses this gap by initiating a systematic study of differentially-private sublinear-time algorithms for the problems of estimating the following graph parameters: (1) the average-degree of a graph, (2) the size of a maximum matching, and (3) the size of a minimum vertex cover. As an overarching technique, we formally introduce the notion of {\em Coupled Global Sensitivity} and use it to analyze the privacy of our randomized approximation algorithms.


\subsection{Our Results}

\subsubsection{Privately Approximating the Average Degree}

 We obtain a differentially-private sublinear-time algorithm for estimating the average degree  ${\bar d}_G=\frac{\sum_{v\in V} \deg(v)}{|V|}$, of a graph $G=(V, E)$, with respect to edge-differential privacy, which achieves a multiplicative approximation of $(1+\rho)$, for any constant $\rho >0$. Specifically, our algorithm outputs a value ${\tilde d}$ such that w.h.p. we have $(1-\rho){\bar d}_G\leq {\tilde d}\leq (1+\rho){\bar d}_G,$ for graphs with ${\bar d}_G=\Omega(1).$ Throughout the paper we denote $|V|=n.$
 
 We work in the {\em neighbor-query} model, in which we are given oracle access to a  simple graph $G=(V,E)$, where the algorithm can obtain the identity of the $i$-th neighbor of a vertex $v \in V$ in constant time. If $i> \deg(v)$ for a particular vertex $v$, then $\perp$ is returned. The algorithm may also perform {\em degree queries}, namely for any $v\in V$ it can obtain $\deg(v)$ in constant time.

\begin{theorem}\label{thm:informal-avg-deg}
There is an $\eps$-edge differentially-private $(1+\rho)$-approximation algorithm for estimating the average degree ${\bar d}_G\geq 1$ \footnote{Observe that for ${\bar d}_G=o(1)$ a multiplicative approximation algorithm that can distinguish between two graphs on $n$ vertices, one with $0$ edges, and another with, say $1$ edge, must  sample $\Omega(n)$ vertices, and hence cannot be running in sublinear time.} of a graph $G$ on $n$ vertices that runs in time\footnote{from here on, we use running time and number of queries interchangeably.}  $O(\sqrt{n} \cdot \text{poly}(\log (n)/\rho) \cdot \text{poly}(1/\eps))$ where $\eps^{-1} =  o(\log^{1/4}(n))$.
\end{theorem}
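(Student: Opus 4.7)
The plan is to privatize the Goldreich–Ron sublinear-time estimator for average degree by adding Laplace noise to the sufficient statistics that it releases, and to argue privacy through the coupled global sensitivity framework introduced earlier in the abstract. Recall the Goldreich–Ron scheme: partition vertices by degree into $L=O(\log_{1+\rho} n)$ buckets $B_i=\{v:(1+\rho)^{i-1}<\deg(v)\le(1+\rho)^i\}$; draw a uniform vertex sample $S$ of size $s=\tilde\Theta(\sqrt n)\cdot\mathrm{poly}(1/\rho)$ and record, for each $i$, the count $n_i=|S\cap B_i|$; flag the "heavy" buckets (those holding a large enough fraction of the edge mass); and for each heavy bucket estimate the fraction $\alpha_i$ of edges incident to $B_i$ that lead to a vertex in an equal-or-smaller bucket, by issuing neighbor queries from the sampled vertices in $B_i$. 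The final estimate $\tilde d$ is a weighted combination of the $n_i$'s and $\alpha_i$'s.

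First I would release noised versions $\hat n_i = n_i + \text{Lap}(L/\eps)$ and $\hat\alpha_i = \alpha_i + \text{Lap}(L/(\eps s))$, projected back to their valid ranges, and plug them into the Goldreich–Ron weighted combination. Second, for privacy, I would couple the executions of the algorithm on any pair of edge-neighboring graphs $G_1, G_2$ (differing in one edge $\{u,v\}$) by sharing the full random tape: the vertex sample $S$, all neighbor-index choices, and all Laplace draws. Under this coupling, every degree query returns the identical value unless the queried vertex is $u$ or $v$, and every neighbor query returns the identical value unless the query indexes into exactly the slot of the modified edge at $u$ or $v$. A line-by-line audit of the algorithm then shows that each raw released statistic changes by at most a universal constant between the two coupled runs. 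Applying the Laplace mechanism and basic composition over the $O(L)$ released statistics yields $\eps$-edge-DP.

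For accuracy, I would rely on the Goldreich–Ron guarantee that with high probability the non-private estimate lies in $[(1-\rho)\bar d_G,(1+\rho)\bar d_G]$, and show that the additional perturbation caused by the Laplace noise is $o(\rho\,\bar d_G)$. Concretely, the noise on each $\hat n_i$ has magnitude $\tilde O(L/\eps)$, which is an $\tilde O(L/(\eps s))$ relative error after dividing by $s$, and the noise on each $\hat\alpha_i$ has the same relative scale. Under the stated regime $\bar d_G \ge 1$ and $\eps^{-1}=o(\log^{1/4} n)$, these errors are dominated by the original $\rho$-slack in the Goldreich–Ron analysis, so the overall output is still a $(1+\rho)$-approximation with high probability. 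The running time is unchanged up to the number of queries issued, which remains $O(\sqrt n \cdot \mathrm{poly}(\log n/\rho)\cdot \mathrm{poly}(1/\eps))$.

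The main obstacle I expect is the bucket-boundary effect: shifting the degrees of $u$ and $v$ by one under the edge change can move them across an adjacent bucket boundary, so even under a perfect coupling the raw statistics $n_i$ and $s\alpha_i$ can legitimately differ by $O(1)$, and a naive reading of the algorithm would suggest this difference cascades into divergent branch decisions (e.g. which buckets are classified as heavy). Handling this is exactly where the coupled global sensitivity framework does the work: rather than trying to rule out cascades, I would prove a coupling-invariant stating that every observable quantity of the algorithm is within $O(1)$ between the two runs, and then let the Laplace noise of scale $O(L/\eps)$ absorb this $O(1)$ divergence to yield indistinguishability. Most of the effort in the proof will be in verifying this invariant carefully for every query the Goldreich–Ron algorithm makes, including the heavy-bucket classification step.
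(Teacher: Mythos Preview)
Your privacy argument has a genuine gap at the $\alpha_i$ stage. You claim that under the shared-tape coupling ``each raw released statistic changes by at most a universal constant,'' but this is false for the neighbor-query statistics. Recall that $X(w)=1$ iff the random neighbor $r(w)$ lands in a small (non-heavy) bucket, and the bucket of $r(w)$ is determined by $\deg(r(w))$. When the edge $\{u,v\}$ is toggled, the degrees of $u$ and $v$ shift by one; if this pushes, say, $u$ across a bucket boundary (from a heavy to a small bucket or vice versa), then $X(w)$ flips for \emph{every} sampled $w$ with $r(w)=u$. Since for $w\notin\{u,v\}$ the neighbor set is identical in the two graphs, any coupling forces $r(w)=r'(w)$, and the number of such $w$ can be $\Omega(|S|)$ in the worst case. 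Hence the coupled global sensitivity of the vector $(W_i)_{i\in I}$ is not $O(1)$, and Laplace noise of scale $L/\eps$ does not yield $\eps$-DP.

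The paper fixes exactly this cascade by first adding Laplace noise to \emph{every degree} and then bucketing all vertices (sampled vertices \emph{and} their random neighbors) by the noisy degree $\tilde d(v)$. Conditioned on the released noisy degrees, every vertex's bucket label is now a fixed post-processed quantity, so toggling $\{u,v\}$ can only affect the random-neighbor draw at $u$ and at $v$ themselves; this gives $CGS=O(1)$ for each $W_i$. But noisy-degree bucketing creates a new accuracy problem: for vertices of small true degree the Laplace noise dominates, so $\tilde d(v)$ need not land in an adjacent bucket and $(1+\beta)^i$ is no longer a valid proxy for $\deg(v)$. The paper's main structural change is to merge all low-degree buckets (below an explicit threshold $K\approx\log_{1+\beta}(M_{\rho,n}/\beta)$) into a single bucket $S_1$ and to introduce a separate clamped-degree estimator $\sum_{v\in S_1}(1+X(v))\deg'(v)$ with its own Laplace noise for that bucket. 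The constraint $\eps^{-1}=o(\log^{1/4} n)$ that you invoke but do not explain comes precisely from balancing this noise against the threshold on $|\tilde S_i|$ in the accuracy analysis (cf.\ Claim~\ref{clm:zij-si} and Corollary~\ref{corol:alpha-frac}). Your sketch does not address either the neighbor-bucket cascade or the low-degree regime, and both are essential to getting $(1+\rho)$ rather than the $(2+\rho)$ of prior work.
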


The problem of estimating the average degree of a graph was first studied by Feige ~\cite{feige2006sums}, who gave a sublinear time $(2+\rho)$-approximation (multiplicative) for any constant $\rho >0$, making $\tilde{O}(\sqrt{n})$ degree queries, for any constant $\rho > 0$. Feige also proved that any approximation algorithm that only utilizes degree queries and obtains a $2-o(1)$-approximation requires at least $\Omega(\sqrt{n})$ queries. 
Goldreich and Ron~\cite{goldreich2004estimating} subsequently gave a $(1+\rho)$-approximation using both degree and neighbor queries, running in time $\tilde{O}(\sqrt{n}\cdot poly(1/\rho))$. This bound  is also tight, since every constant-factor approximation algorithm must make $\Omega(\sqrt{n})$ degree and neighbor queries~ \cite{goldreich2004estimating}.  A simpler analysis achieving the same bounds was given by Seshadri ~\cite{seshadhri2015simpler}. Further, Dasgupta, Kumar and Sarlós~\cite{DKS14} studied this problem in the model where access to the graph is via samples, in the context of massive networks where the number of nodes may not be known. They obtain a $(1+\rho)$-approximation that uses roughly $O(\log d_U \cdot \log \log d_U)$ samples where $d_U$ is an upper bound on the maximum degree of the graph.

In recent work, Sivasubramaniam, Li and He~\cite{sivasubramaniamdifferentially} gave a sublinear-time differentially-private algorithm for approximating the average degree of a graph using Feige's~\cite{feige2006sums} algorithm. Their algorithm  achieves a $(2+\rho+o(1))$-approximation for every constant $\rho>0$. They achieve this by calculating a tight bound for the global sensitivity of the final estimate of Feige's algorithm and adding Laplace noise with respect to this quantity appropriately. By contrast, we achieve a $(1+\rho)$-approximation for any constant $\rho > 0$ --- assuming that the privacy parameter is $\epsilon^{-1} = o(\log^{1/4} n)$.

\subsubsection{Privately Approximating the Size of a Maximum Matching and Minimum Vertex Cover}

Given an undirected graph, a set of vertex-disjoint edges is called a \emph{matching}. A matching $M$ is \emph{maximal} if $M$ is not properly contained in another matching. A matching $M$ is \emph{maximum} if for any other matching $M'$, $\vert M \vert \geq \vert M ' \vert$. A \emph{vertex cover} of a graph is a set of vertices that includes at least one endpoint of every edge of the graph. A \emph{minimum} vertex cover is a vertex cover of the smallest possible size. 
 For a minimization problem, we say that a value $\hat{y}$ is an $(\alpha,\beta)$-approximation to $y$ if $y \leq \hat{y} \leq \alpha y + \beta $. For a maximization problem,  we say that a value $\hat{y}$ is an $(\alpha,\beta)$-approximation to $y$ if $\frac{y}{\alpha} - \beta \leq \hat{y} \leq  y$. An algorithm $\cA$ is an $(\alpha, \beta)$-approximation for a value $V(x)$ if it computes an $(\alpha,\beta)$-approximation to $V(x)$ with probability at least 2/3 for any proper input $x$. 

For a graph $G=(V,E)$, we work in the \emph{bounded degree} model, where one can query an $i$-th neighbor ($i\in [d]$) of a vertex in constant time; denote this query as $\text{Nbr}(v,i)$. Here $d$ is the maximum degree of the graph. If $i> \deg(v)$ for a particular vertex $v$, then $\text{Nbr}(v,i)=\perp$. We also assume query access to the degree of a vertex, i.e., one can query $\deg(v)$ for any $v\in V$ in constant time.


\begin{theorem} \label{thm:main-sub-mm}
There is an $\eps$-(node and edge) differentially-private algorithm for the maximum matching problem that reports a $(2,\rho n)$-approximation with probability $1-(2/n^4 + 1/n^{192 \eps/\rho})$, and runs in expected time $\tilde{O}\left((\bar{d}+1)/\rho^2 \right)$,  where $\bar{d}$ is the average degree of the input graph.
\end{theorem}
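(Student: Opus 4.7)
The plan is to privatize the sublinear oracle-based estimator of Yoshida-Yamamoto-Ito, building on Nguyen-Onak, that produces a $(2,\rho n)$-approximation of the maximum matching size $|M^*(G)|$ in $\tilde{O}((\bar{d}+1)/\rho^2)$ expected time. Fix a uniformly random permutation $\pi$ of the edges; this implicitly defines the greedy maximal matching $M_\pi$ (scan edges in $\pi$-order, keeping each edge iff both endpoints are currently unmatched). Since every maximal matching is a $2$-approximation to $|M^*|$, it suffices to estimate $|M_\pi|$. I would sample $s=\Theta(\log n/\rho^2)$ vertices $S$ uniformly at random, use the Nguyen-Onak local oracle to decide, for each $v \in S$, whether $v$ is covered by $M_\pi$, and output $\hat{Z}=\frac{n}{2|S|}\sum_{v\in S}\mathbf{1}\{v\in V(M_\pi)\}$. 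The oracle makes $O(\bar{d}+1)$ queries per vertex in expectation, yielding the claimed runtime, and a Chernoff bound gives $\bigl|\hat{Z}-|M_\pi|\bigr|\le \rho n/4$ except with probability at most $2/n^4$, which is the first failure-probability term in the statement.

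\textbf{Privacy via coupled global sensitivity.} To make the release differentially private, I would output $\hat{Z}+Y$ with $Y\sim \Lap(b)$ and choose $b$ using the coupled global sensitivity (CGS) framework developed earlier in the paper. The coupling fixes the same permutation $\pi$ (extended identically on edges present in both graphs) and the same sample $S$ across neighboring executions on $G$ and $G'$. For an edge-neighbor $G'=G\pm e^*$, the only changes to $M_\pi$ propagate by a cascade of alternating reversals emanating from $e^*$; this cascade has $O(\bar{d}+1)$ expected size---exactly the quantity that drives the Yoshida-Yamamoto-Ito runtime analysis, now repurposed as a privacy tool. For a node-neighbor, the same bound is applied separately to each edge incident to the modified node. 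Because $\hat{Z}$ depends only on the matched-status of the $s$ sampled vertices, the number of sampled vertices whose status actually flips concentrates around $(s/n)$ times the cascade size, and after normalization by $n/(2|S|)$ this bounds the CGS of $\hat{Z}$.

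\textbf{Combining utility and privacy.} Choosing $b$ proportional to the CGS bound above, the CGS-version of the Laplace mechanism (developed earlier in the paper) yields $\eps$-DP against both edge- and node-neighbors. Tracking constants in the Laplace tail bound gives $\Pr[|Y|\ge \rho n/2]\le 1/n^{192\eps/\rho}$, the second failure-probability term. Adding the $\rho n/4$ sampling error and the $\rho n/2$ noise error yields the $(2,\rho n)$-approximation, and a union bound over the two failure events gives the stated success probability $1-(2/n^4+1/n^{192\eps/\rho})$.

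\textbf{Main obstacle.} The delicate step is the CGS analysis of $\hat{Z}$. The greedy matching $M_\pi$ is not $1$-sensitive as a function of the graph under worst-case permutations: a single edge modification can cascade through many reversals. The task is to show that the \emph{sample-based} estimate $\hat{Z}$ is nevertheless insensitive with high probability over $\pi$ and $S$, and to convert this probabilistic bound into a deterministic CGS bound compatible with the Laplace mechanism. Carrying this out uniformly for both edge- and node-neighbors, with a tight enough constant to produce the $192\eps/\rho$ exponent in the failure probability, is where I expect most of the effort.
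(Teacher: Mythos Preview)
Your proposal has a genuine gap in the CGS analysis, and in fact you have misidentified the main obstacle. You assert that a single edge modification can ``cascade through many reversals'' of $M_\pi$ and that one must therefore control an expected cascade size and then somehow upgrade a high-probability bound to a deterministic CGS bound. This last step does not yield pure $\eps$-DP (at best it gives $(\eps,\delta)$-DP), so the approach as written cannot prove the theorem as stated.

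The point you are missing is that although $M_1 \oplus M_2$ (the symmetric difference of the greedy matchings on neighboring graphs, for the \emph{same} fixed $\pi$) can indeed be a long alternating path, the estimator only depends on whether each sampled vertex is \emph{matched}, not on \emph{which} edge matches it. Along an alternating path every interior vertex is matched in both $M_1$ and $M_2$; only the (at most two) endpoints change matched status. The paper proves in Section~\ref{sec:tech-ov-match} that $M_1\oplus M_2$ is a single connected component and hence $\bigl||M_1|-|M_2|\bigr|\le 1$, so the matched-vertex sets satisfy $\bigl||V(M_1)|-|V(M_2)|\bigr|\le 2$. From this, the paper obtains a \emph{deterministic} CGS bound of $n/s$ by coupling the randomness as $(\pi,\{v_1,\dots,v_s\})\mapsto(\pi,\{f_\pi(v_1),\dots,f_\pi(v_s)\})$, where $f_\pi$ is any bijection on $V$ carrying $V(M_1)$ into $V(M_2)$; under this coupling at most two sampled vertices can have differing indicator values (here sampling without replacement is essential). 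No expectation over $\pi$, no cascade-size bound, and no probabilistic-to-deterministic conversion is needed. With $s=16\cdot 24\ln n/\rho^2$ this gives $CGS\le n\rho^2/(16\cdot 24\ln n)$, and the Laplace tail with this scale is exactly what produces the $1/n^{192\eps/\rho}$ term.
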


\begin{theorem}\label{thm:main-sub-vc}
There is an $\eps$-(node and edge) differentially-private algorithm for the minimum vertex cover problem that  reports a $(2,\rho n)$-approximation with probability $1-(2/n^4 + 1/n^{96 \eps/\rho } )$, and runs in expected time $\tilde{O}\left((\bar{d}+1)/\rho^2 \right)$,  where $\bar{d}$ is the average degree of the input graph.
\end{theorem}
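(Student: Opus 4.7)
The plan is to reduce the problem to privately estimating the size of a maximal matching. For any maximal matching $M$ in $G$ one has $|M| \le \mathrm{VC}^*(G) \le 2|M|$, so $2|M|$ is always a $2$-approximation to the minimum vertex cover size. Concretely, if $\hat m$ estimates $|M|$ with additive error $\rho n / 4$, then $\widehat{VC} := 2\hat m + \rho n / 2$ satisfies $\mathrm{VC}^*(G) \le \widehat{VC} \le 2\,\mathrm{VC}^*(G) + \rho n$, which is the required $(2,\rho n)$-approximation in the minimization sense.

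The estimator $\hat m$ I would use is essentially the one underlying \theoremref{main-sub-mm}: a Parnas--Ron-style local oracle which, under a random edge ordering, decides whether a queried edge lies in the induced greedy maximal matching, combined with a sampling step that estimates $|M|$ from $\tilde{O}((\bar d + 1)/\rho^2)$ random edge queries, so the claimed running time is inherited directly. Since $\cA_{VC}$ is then the deterministic map $\hat m \mapsto 2\hat m + \rho n/2$, any $\eps$-DP guarantee for $\hat m$ transfers to $\cA_{VC}$ by post-processing, giving $\eps$-DP with respect to both edge- and node-neighboring inputs.

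For the privacy of $\hat m$ I would invoke the coupled global sensitivity framework promised in the introduction: on neighboring graphs $G$ and $G'$ one couples the random tape (edge ordering and sample of queried edges) of $\hat m$ so that the two implicit maximal matchings differ by $O(1)$ edges in the edge-neighbor case and by at most $O(d_v)$ edges in the node-neighbor case, where $v$ is the differing vertex, and calibrates Laplace noise inside $\hat m$ accordingly. The failure probability then combines the $O(1/n^4)$ sampling-error term from the non-private analysis with a $1/n^{96\eps/\rho}$ Laplace-tail term produced by requiring the injected noise to be absorbed inside the $\rho n / 4$ slack. The main technical obstacle is the node-DP sensitivity bound, since a single vertex change can in principle propagate through the recursive oracle and flip many matching decisions; I would control this by arguing that, under the coupled tape, the oracle's answer on an edge $e$ not touching $v$ changes only if $v$ is ``pivoted'' during the recursion for $e$, an event whose per-edge probability is small enough to keep the expected number of flipped sample answers $O(1)$, which yields the required coupled sensitivity bound and hence the theorem.
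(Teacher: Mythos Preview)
Your overall reduction (estimate a maximal matching size, double it, add slack) and the post-processing argument are fine and match what the paper does. The genuine gap is in your node-DP sensitivity analysis.

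You write that under the coupled tape the two greedy maximal matchings can differ by ``at most $O(d_v)$ edges in the node-neighbor case,'' and then propose a probabilistic ``pivoting'' argument to show the \emph{expected} number of flipped sample answers is $O(1)$. Both the bound and the repair are off. The paper shows (see the discussion of $\cA_{MM}$ in \sectionref{tech-ov-match}) that if one fixes the \emph{same} ranking $\pi$ on both node-neighboring graphs, the symmetric difference $M_1 \oplus M_2$ is a single alternating path/cycle through the differing vertex, hence $\bigl||M_1|-|M_2|\bigr|\le 1$ deterministically --- not $O(d_v)$. This structural lemma is exactly what makes the node-DP case as easy as the edge-DP case; your pivoting argument is unnecessary. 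Moreover, an ``expected $O(1)$ flips'' statement would not yield pure $\eps$-DP via \theoremref{lap-cgs}: coupled global sensitivity is a worst-case-over-the-coupling quantity, so you need a pointwise bound, not an in-expectation one.

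Second, the paper explicitly warns that the identity coupling on the sampled vertices does \emph{not} work here (all sampled vertices could be matched in $M_1$ and unmatched in $M_2$). The correct coupling keeps $\pi$ fixed but uses a bijection $f_\pi:V\to V$ sending matched vertices of $M_1$ to matched vertices of $M_2$; since $\bigl||S_1|-|S_2|\bigr|\le 2$ by \claimref{vc-2}, sampling $s$ vertices without replacement under this bijective coupling gives $\bigl|\sum X_i^{(1)}-\sum X_i^{(2)}\bigr|\le 2$, whence $CGS^v_{\cA_{sub\text{-}VC}}\le 2n/s=n\rho^2/(192\ln n)$ (\theoremref{sub-VC-cgs}). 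With this in hand, Laplace noise at that scale gives both the $\eps$-DP guarantee and, via the tail bound, the $1/n^{96\eps/\rho}$ term you were aiming for.
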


Typically, the privacy parameter $\eps$ is a constant, and so is the approximation parameter $\rho$, in which case the success probability in the theorems above is $1-1/(poly(n))$.

The question of approximating the size of a vertex cover in sublinear-time was first posed by Parnas and Ron~\cite{parnas2007approximating}, who obtained a $(2,\rho n)$-approximation in time $d^{O(\log d/\rho^3)}$, where $d$ is the maximum degree of the graph. Nguyen and Onak \cite{nguyen2008constant}  improved upon this result by giving a $(2,\rho n)$-approximation for the maximum matching problem, and consequently a $(2,\rho n)$-approximation for the vertex cover problem, in time $O(2^{O(d)/\rho^2})$. The result of~\cite{nguyen2008constant} was later improved by Yoshida, Yamamoto and Ito~\cite{yoshida2012improved},  who gave an ingenious analysis of the original algorithm to achieve a running time of $O(d^4/\rho^2)$. Onak, Ron, Rosen and Rubinfeld~\cite{ORR12} proposed a near-optimal time complexity of $\tilde{O}(\bar{d} \cdot \text{poly}(1/\rho))$, where $\bar{d}$ is the average degree of a graph, but Chen, Kannan, and Khanna~\cite{ChenKK20} identified a subtlety in their analysis, which proved to be crucial to their improved time complexity claim. Very recently, building on ideas from the analysis of \cite{yoshida2012improved}, Behnezhad~\cite{behnezhad21} gave a new analysis for achieving a $(2,\rho n)$-approximation to the size of maximum matching and minimum vertex cover in time $\tilde{O}((\bar{d}+1)/\rho^2)$. Behnezhad's result nearly matches the lower bound given by Parnas and Ron~\cite{parnas2007approximating}, who showed that $\Omega(\bar{d}+1)$ queries are necessary for obtaining a $(O(1),\rho n)$-estimate in the case of the maximum matching or minimum vertex cover problem. 

Our final DP algorithm simply runs the non-private approximation algorithm~\cite{behnezhad21} and then adds Laplace noise proportional to the Coupled Global Sensitivity (of the non-private algorithm). Thus, our time complexity is identical to the non-private approximation algorithm. We show that the added Laplace noise is small enough that it preserves the approximation guarantees of the non-private approximation algorithm.

\subsection{Organization} 
We define and motivate the notion of Coupled Global Sensitivity as a privacy tool in Section~\ref{sec:cgs-techniques}. Then we give a high-level overview of the techniques used for our results in Section~\ref{sec:tech-ov}. The formal privacy and accuracy analysis of Theorem~\ref{thm:informal-avg-deg} are in Sections~\ref{sec:privacy-avgdeg} and \ref{sec:accuracy-avgdeg}. The formal analysis for Theorems~\ref{thm:main-sub-mm} and \ref{thm:main-sub-vc} are in Section~\ref{sec:main-sub-mm}. We conclude with some open problems in Section~\ref{sec:open}.  

\subsection{Coupled Global Sensitivity as a Tool in Privacy analysis}\label{sec:cgs-techniques}

{\bf Background and Motivation.} Given a query $f: \mathcal{D} \to \mathbb{R}^d$ a general mechanism to answer the query privately is to compute $f(D)$ and then add noise. The global sensitivity of a function was introduced in the celebrated paper by Dwork, McSherry, Nissim and Smith~\cite{Dwork_McSherry_Nissim_Smith_2017}, who showed that it suffices to perturb the output of the function with noise proportional to the global sensitivity of the function in order to preserve differential privacy.

\begin{definition}[Global sensitivity]\label{def:GS}
For a query $f: \mathcal{D} \to \mathbb{R}^d$, the global sensitivity of $f$ (wrt the $\ell_1$-metric) is given by 
$$ GS_f = \max_{A,B \in \mathcal{D} : A \sim B} \| f(A) - f(B) \|_1 \;. $$
\end{definition}

One can preserve differential privacy by computing $f(D)$ and adding Laplacian noise\footnote{ Here, the probability density function of the Laplace distribution $\Lap(\lambda)$ is $h(z)= \frac{1}{2\lambda} \exp\left( -\frac{\vert z \vert }{\lambda}\right)$.} scaled to the global sensitivity of $f$, where $D$ is a database. However, in many contexts we may not be able to compute the function $f$ exactly. For example, if the dataset $D$ is very large and our algorithm needs to run in sublinear-time or if the function $f$ is intractable e.g., $f(G)$ is the size of the minimum vertex cover. In cases where we cannot compute $f$ exactly, an attractive alternative is to use a randomized algorithm, say $\cA_f$,  to approximate the value of $f$. Given an approximation algorithm $\cA_f$ it is natural to ask whether or not we can add noise to $\cA_f(D)$ to obtain a differentially private approximation of $f(D)$ and (if possible) how to scale the noise. We first observe that computing $\cA_f(D)$ and adding noise scaled to the global sensitivity of $f$ does not necessarily work. Intuitively, this is because the sensitivity of $\cA_f$ can be vastly different from that of $f$. For example, suppose that $GS_f=1$, $f(D) = n = f(D')+1$ for neighboring datasets $D\sim D'$ and that our approximation algorithm guarantees that $0.999 \cdot f(D) \leq \mathcal{A}_f(D) \leq  1.001 \cdot f(D)$. It is possible that $A_f(D) = 1.001 n$ and $A_f(D')=0.999(n-1)$ so that $|A_f(D)-A_f(D')| \geq 0.002 n$ which can be arbitrarily larger than $GS_f$ as $n$ increases.
 
{\bf Coupled Global Sensitivity.} We propose the notion of {\em coupled global sensitivity} of randomized algorithms as a framework for providing general-purpose privacy mechanisms for approximation algorithms running on a database $D$. In this framework, our differentially-private algorithms can follow a unified strategy, in which in the first step a non-private randomized approximation algorithm $\cA_f(D)$ is run on the dataset, and privacy is obtained by adding Laplace noise proportional with the coupled global sensitivity of $\cA_f$\footnote{We note that this is the simplest application of CGS, and as we will see in the analysis of estimating the average degree, we can use CGS to add noise to intermediate quantities used by the randomized algorithm as well.}. The concept of coupled global sensitivity has been used implicitly in prior work on differential privacy e.g., see~\cite{alabi2020differentially,CV_NIPS2013}. Our work formalizes this notion as a general tool that can be used to design and analyze differentially private approximation algorithms. See~Appendix~\ref{sec:cgs-example} for a motivating example.

 {\bf Notation:} When $\cA$ is a randomized algorithm we use the notation $x:=\cA(D; r)$ to denote the output when running $\cA$ on input $D$ with fixed random coins $r$. Similarly, $\cA(D)$ can be viewed as a random variable taken over the selection of the random coins $r$.
\begin{definition}[Coupling]
Let $Z$ and $Z'$ be two random variables defined over the probability spaces $\cZ$ and $\cZ'$, respectively. A coupling of $Z$ and $Z'$, is a joint variable $(Z_c,Z'_c)$ taking values in the product space $(\cZ \times \cZ')$ such that $Z_c$ has the same marginal distribution as $Z$ and $Z'_c$ has the same marginal distribution as $Z'$. The set of all couplings is denoted by ${\sf Couple}(Z,Z')$.
\end{definition}

\begin{definition}[Coupled global sensitivity of a randomized algorithm]
Let $\cA: \mathcal{D} \times \mathcal{R}  \to \bbR^k$ be a randomized algorithm that outputs a real-valued vector.  Then the {\em coupled global sensitivity} of $\cA$ is defined as 
\[CGS_{\cA} := \max_{D_1 \sim D_2}\ \min_{C \in {\sf Couple}(\cA(D_1),\cA(D_2))}\ \max_{(z,z') \in C} \| z -z' \|_1 \]
\end{definition} 

\begin{remark}
We can try to relax the definition of Coupled Global Sensitivity as follows: $CGS_{\cA,
\delta}$ is the minimum value, say $x$ such that for all neighboring inputs $D_1 \sim D_2$, there exists a coupling $C$ such that $\Pr_{(z,z')\sim C}[\vert z - z' \vert > x] \leq \delta$. We need to be careful here as we need to ensure that the minimum value $x$ is always well-defined. If we can ensure this, then we can also show that adding noise proportional to $CGS_{\cA, \delta}$ preserves $(\eps,\delta)$-differential privacy. 
\end{remark}
\begin{fact}\label{fact:cgs-perm}
Let $\cA: \mathcal{D} \times \cR \to \bbR^k$ be a randomized algorithm viewed as a function that takes as input a dataset $\mathcal{D}$ and a random string in the finite set  $\mathcal{R}$, and outputs a real-valued vector. For a finite set $\cR$, denote by $Sym(\cR)$ the symmetric group of all permutations on the elements in $\cR.$ Then,
$$CGS_{\cA} \leq \max_{D_1 \sim D_2} \min_{\sigma\in Sym(\cR)} \max_{R \in \cR} \| \cA(D_1;R) -\cA(D_2; \sigma(R)) \|_1$$
\end{fact}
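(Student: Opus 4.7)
The plan is to show that every permutation $\sigma \in Sym(\cR)$ induces a valid coupling of $\cA(D_1)$ and $\cA(D_2)$, so that the min over all couplings is at most the min over the (smaller) family of permutation-induced couplings. This reduces Fact~\ref{fact:cgs-perm} to a direct verification of marginals.

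First I would fix neighboring inputs $D_1 \sim D_2$ and fix an arbitrary $\sigma \in Sym(\cR)$, and define a joint random variable $(Z_c, Z'_c)$ by sampling $R$ uniformly from $\cR$ (the standard convention for a randomized algorithm's coins) and setting $Z_c := \cA(D_1; R)$ and $Z'_c := \cA(D_2; \sigma(R))$. The marginal of $Z_c$ is by construction the distribution of $\cA(D_1)$. For the marginal of $Z'_c$, I would use that $\sigma$ is a bijection on the finite set $\cR$, so $\sigma(R)$ is also uniform on $\cR$; hence $Z'_c$ is distributed as $\cA(D_2)$. This certifies $C_\sigma := (Z_c, Z'_c) \in {\sf Couple}(\cA(D_1), \cA(D_2))$.

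Next I would bound the cost of $C_\sigma$. Since the support of $C_\sigma$ is contained in $\{(\cA(D_1;R), \cA(D_2;\sigma(R))) : R \in \cR\}$, we have
\[
\max_{(z,z') \in C_\sigma} \|z - z'\|_1 \;\leq\; \max_{R \in \cR} \|\cA(D_1;R) - \cA(D_2;\sigma(R))\|_1.
\]
Taking the minimum over $\sigma \in Sym(\cR)$ on both sides (the right-hand side directly, the left-hand side via the fact that permutation-induced couplings are a subfamily of ${\sf Couple}(\cA(D_1), \cA(D_2))$) yields
\[
\min_{C \in {\sf Couple}(\cA(D_1),\cA(D_2))} \max_{(z,z') \in C} \|z - z'\|_1 \;\leq\; \min_{\sigma \in Sym(\cR)} \max_{R \in \cR} \|\cA(D_1;R) - \cA(D_2;\sigma(R))\|_1.
\]
Finally, taking the maximum over neighboring $D_1 \sim D_2$ gives the claimed inequality.

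The proof is essentially bookkeeping; the only subtle point is justifying that a deterministic permutation $\sigma$ of the coin space produces the correct marginal on the $D_2$ side, which relies on the (standard) assumption that random coins are drawn uniformly from $\cR$ and on $|\cR|$ being finite so that $Sym(\cR)$ is well defined. I do not anticipate a serious obstacle beyond this; the fact is a convenient upper-bound tool that lets one exhibit couplings via explicit reparametrizations of randomness, which is how it will be applied to the average-degree and matching/vertex-cover algorithms later in the paper.
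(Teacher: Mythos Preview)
Your argument is correct and is precisely the intended justification: a permutation of the finite coin space induces a coupling with the right marginals (because a bijection preserves the uniform distribution), and minimizing over this subfamily of couplings can only dominate the minimum over all couplings. The paper itself states Fact~\ref{fact:cgs-perm} without proof, so there is nothing further to compare against; your write-up supplies exactly the verification the paper leaves implicit.
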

The following theorem formalizes the fact that adding noise proportional to the coupled global sensitivity of a randomized algorithm preserves differential privacy (see Appendix~\ref{sec:cgs} for a formal proof).
\begin{theorem}\label{thm:lap-cgs}
Let $\cA: \mathcal{D} \to \bbR^k$ be a randomized algorithm and define the Laplace mechanism 
$\cM_L(D) = \cA(D) + (Y_1, \ldots, Y_k) $, where $Y_i$ are i.i.d. random variables drawn from $\Lap(CGS_\cA/ \eps)$. The mechanism $\cM_L$ preserves $\epsilon$-differential privacy.
\end{theorem}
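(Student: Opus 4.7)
The plan is to prove $\eps$-differential privacy by reducing to the deterministic Laplace mechanism via the optimal coupling promised by the definition of $CGS_\cA$. Fix arbitrary neighboring datasets $D_1 \sim D_2$. By the definition of coupled global sensitivity there exists a coupling $C^* \in {\sf Couple}(\cA(D_1),\cA(D_2))$ such that every pair $(Z_1,Z_2)$ in its support satisfies $\|Z_1-Z_2\|_1 \le CGS_\cA$. (If the outer min is not attained, I would instead pick for each $\tau>0$ a coupling achieving $CGS_\cA+\tau$ and let $\tau\to 0$ at the end; I don't expect this to affect the argument.)

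Next, I would realize $\cM_L$ on the coupled space. Draw $(Z_1,Z_2) \sim C^*$ and, independently of this draw, sample $Y=(Y_1,\ldots,Y_k)$ with i.i.d.\ $Y_i \sim \Lap(CGS_\cA/\eps)$. Because $C^*$ has the correct marginals and $Y$ is independent of $(Z_1,Z_2)$, the random vector $Z_1+Y$ has the same distribution as $\cM_L(D_1)$, and $Z_2+Y$ has the same distribution as $\cM_L(D_2)$. Thus for any measurable $S\subseteq \bbR^k$,
\[
\Pr[\cM_L(D_1) \in S] \;=\; \E_{(Z_1,Z_2)\sim C^*}\!\bigl[\Pr_Y[Z_1+Y \in S \mid Z_1,Z_2]\bigr],
\]
and symmetrically for $D_2$.

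The key deterministic step is the pointwise Laplace bound: for any fixed $z_1,z_2 \in \bbR^k$ with $\|z_1-z_2\|_1 \le CGS_\cA$, the ratio of product Laplace densities satisfies
\[
\frac{\prod_i h(s_i - z_{1,i})}{\prod_i h(s_i - z_{2,i})} \;\le\; \exp\!\left(\frac{\eps\,\|z_1-z_2\|_1}{CGS_\cA}\right) \;\le\; e^\eps
\]
for every $s \in \bbR^k$, by the triangle inequality applied to $|s_i-z_{1,i}|-|s_i-z_{2,i}|$. Integrating this pointwise density ratio over $S$ gives $\Pr_Y[z_1+Y\in S] \le e^\eps \Pr_Y[z_2+Y \in S]$; this is exactly the standard Laplace mechanism calculation, just stated for arbitrary $z_1,z_2$ close in $\ell_1$.

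Finally I would assemble the two pieces. Because every $(Z_1,Z_2)$ in the support of $C^*$ satisfies $\|Z_1-Z_2\|_1 \le CGS_\cA$, the deterministic bound applies conditionally almost surely, giving
\[
\Pr_Y[Z_1+Y\in S \mid Z_1,Z_2] \;\le\; e^\eps \,\Pr_Y[Z_2+Y\in S \mid Z_1,Z_2].
\]
Taking expectation over $C^*$ yields $\Pr[\cM_L(D_1)\in S] \le e^\eps \Pr[\cM_L(D_2)\in S]$, as required. The only real obstacle is the measure-theoretic justification that the ``max'' in the definition is an essential supremum and that an optimal (or near-optimal) coupling exists; this is standard and can be handled by approximating $C^*$ within arbitrarily small slack and passing to the limit, so the core probabilistic argument above is unaffected.
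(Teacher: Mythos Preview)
Your proposal is correct and follows essentially the same route as the paper: pick the coupling achieving (or approximating) $CGS_\cA$, reduce to the standard Laplace density-ratio bound via the triangle inequality, and average over the coupling. Your version is slightly more careful in that you work with measurable sets and conditional expectations rather than pointwise density ratios, and you flag the ``min may not be attained'' edge case, but the core argument is identical.
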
 
{\bf How we use Coupled Global Sensitivity.} In our algorithm for estimating the average degree we divide the algorithm into randomized sub-routines and show that the CGS of these sub-routines is small, therefore enabling us to add Laplacian noise proportional to the CGS and ensure the privacy of each sub-routine, and by composition, the privacy of the entire algorithm (See Theorem~\ref{thm:main-privacy-avgdeg}). Similarly, we show that the existing non-private sublinear-time algorithms for maximum matching and minimum vertex cover have small CGS, therefore enabling us to add Laplace noise proportional to the CGS to their outputs thus making them differentially-private (See Theorems~\ref{thm:sub-MM-cgs},\ref{thm:sub-VC-cgs}).

\subsection{Technical Overview}\label{sec:tech-ov}
\subsubsection{Privately Estimating the Average Degree.}

At a high-level, our private algorithm for estimating the average degree follows the non-private variant of Goldreich and Ron~\cite{goldreich2004estimating}. However, there are several challenges that prevent us from simply being able to add Laplacian noise to the output. We overcome these challenges by first
obtaining a new non-private algorithm with the same approximation ratio as that of~\cite{goldreich2004estimating}, and then further add appropriate amounts of noise in several steps of the algorithm to obtain both privacy and accuracy guarantees. We begin by describing the algorithm of~\cite{goldreich2004estimating}.

{\bf The Goldreich-Ron algorithm~\cite{goldreich2004estimating}.} The strategy of the original non-private algorithm in~\cite{goldreich2004estimating} is to sample a set $S$ of vertices partition them into buckets $S_i$ based on their degrees. In particular, for each $i$ we set $S_i = B_i \cap S$ where the set $B_i$ contains all vertices of degrees ranging between $((1+\beta)^{i-1}, (1+\beta)^{i}]$, where $\beta=\rho/c$ for some constant $c>1$. Intuitively, as long as $|S_i|$ is sufficiently large the quantity $\vert S_i \vert/\vert S \vert$ is a good approximation for $\vert B_i \vert /n$ with high probability. Let $I$ denote the indices $i$ for which $|S_i|$ is sufficiently large. We can partition edges from the graph into three sets (1) edges with both endpoints in $\bigcup_{i \in I} B_i$, (2) edges with exactly one endpoint in $\bigcup_{i \in I} B_i$, and (3) edges with no endpoints in $\bigcup_{i \in I} B_i$. When the threshold for "large buckets" is  tuned appropriately one can show that (whp) type 3 edges can be ignored as there are at most $o(n)$ such edges. 

We could use $(1/\vert S \vert) \sum_{i \in I} \vert S_i \vert (1+\beta)^{i-1}$ as an approximation for $\frac{1}{n} \sum_{i \in I} \sum_{v \in B_i} \mathtt{deg}(v)$. The previous sum counts type (1) edges twice, type (2) edges once and type (3) edges zero times. While it is ok to ignore type (3) edges there could be a lot of type (2) edges which are under-counted. To correct for type (2) edges we can instead try to produce an approximation for the sum $\frac{1}{n} \sum_{i \in I} \sum_{v \in B_i} (1+\alpha_v) \mathtt{deg}(v)$ where $\alpha_v$ denotes the fraction of type (2) edges incident to $v$. Intuitively, $\alpha_v$ is included to ensure that type (2) edges are also counted twice. For each sampled node $v \in S_i$ we can pick a random neighbor $r(v)$ of $v$ and define $X(v) = 1$ if $r(v) \not \in \bigcup_{i \in I} B_i$; otherwise $X(v)=0$. Observe that in the expected value of the random variable is $\mathbb{E}[X(v)] = \alpha_v$. Since $|S_i|$ is reasonably large for each $i \in I$ and $\mathtt{deg}(u) \approx \mathtt{deg}(v)$ for each pair $u,v \in S_i$ we can approximate the fraction of type (2) edges incident to $B_i$ as $W_i/|S_i|$ where $W_i =  \sum_{v \in S_i} X(v)$. Finally, we can use $(1/\vert S \vert) \sum_{i \in I} |S_i| (1+W_i/|S_i|)  (1+\beta)^{i-1}$ as our final approximation for the average degree.

{\bf Challenges to making the original algorithm private by adding noise naively. } The first naive attempt to transform the algorithm of \cite{goldreich2004estimating} into a differentially private approximation would be to add noise to the final output. However, the coupled global sensitivity of this algorithm is large enough that the resulting algorithm is no longer a $(1+\rho)$-approximation.

A second natural strategy to make the above algorithm differentially private is to add Laplace noise to the degree of each vertex and partition vertices in $S$ based on their noisy degrees $\tilde{d}(v) = \mathtt{deg}(v)+Y_v$ where $Y_v \sim \mathtt{Lap}(6/\eps)$. (Note: To ensure that the algorithm still runs in sublinear time we could  utilize lazy sampling and only sample $Y_v \sim \mathtt{Lap}(6/\eps)$ when needed). In particular, we can let $\tilde{S}_i = S \cap \tilde{B}_i$ where $\tilde{B}_i$ denotes the set of all nodes $v$ with noisy degree $\tilde{d}(v)$ ranging between $((1+\beta)^{i-1}, (1+\beta)^{i}]$. Now we can compute $W_i = Z_i +  \sum_{v \in \tilde{S}_i} X(v)$ where $Z_i \sim \mathtt{Lap}(6/\epsilon)$ and return $(1/\vert S \vert) \sum_{i \in I} |\tilde{S}_i| (1+\frac{W_i}{|\tilde{S}_i|})  (1+\beta)^{i-1}$. While the above approach would preserve differential privacy, the final output may not be accurate. The problem is that the noise $Y_v$ may cause a node $v$ to shift buckets. It is not a problem if $v \in B_i$ shifts to an adjacent bucket i.e., $v \in \tilde{B}_{i-1}$ or $v \in \tilde{B}_{i+1}$ since $(1-\beta)^{i-2}$ and $(1-\beta)^{i+1}$ are still reasonable approximations for the original degree $\deg(v) \in ((1+\beta)^{i-1},(1+\beta)^i]$. Indeed, when $\deg(v)$ is sufficiently large we can argue that $(1-\beta) \deg(v) < \tilde{d}(v) < (1+\beta) \deg(v)$ with high probability. However, this guarantee does not apply when $\deg(v)$ is small. In this case the Laplace noise $Y_v$ might dominate $\deg(v)$ yielding an inaccurate approximation. 
 \cite{sivasubramaniamdifferentially} made similar observations, and because of these technical barriers, their paper analyzes the simpler strategy for estimating the average degree, which yields a less accurate result. 
The crucial observation here is that we need to deal with vertices having small degrees in our accuracy analysis separately.

{\bf Modified non-DP algorithm achieving the same approximation ratio.} To address the challenges discussed above we first propose a modification to the strategy given by~\cite{goldreich2004estimating}. While the modified algorithm is still non-private it still achieves a $(1+\rho)$-approximation for any $\rho >0$ and is amenable to  differentially private adaptations. Our algorithm now samples vertices $S$ \emph{without replacement} and puts them into buckets $ S_i = B_i \cap S$ according to their degrees. The key modification is that we merge all of the buckets with smaller degrees i.e., $i \leq K$\footnote{where we fix $K:=\left(2+ \log_{1+\beta} \left(\frac{2 |S| \sqrt{\rho} }{\beta \log_{1+\beta}(n) \sqrt{n \sqrt{\log n}} } \right)  \right)$ in the sequel} into one. We redefine $B_1$ to denote this merged bucket and $S_1 = S \cap B_1$ and we redefine $I$ to be the set of all indices $i > K$ such that $|S_i|$ is sufficiently large. If $B_1$ is not too large then all of the edges incident to $B_1$ can simply be ignored as the total number of these edges will be small. Otherwise, we can account for edges that are incident to $B_1$ by adding $\frac{1}{|S|} \sum_{v \in S_1} (1+X(v)) \deg(v)$ to our final output. Since we merged all of the buckets with smaller degrees we no longer have the guarantee that $\deg(u) \approx \deg(v)$ for all $u, v \in S_1$. However, since $\deg(v)$ is reasonably small for each $v \in S_1$ the variance is still manageable. Intuitively, the sum $\frac{1}{|S|} \sum_{v \in S_1} (1+X(v)) \deg(v)$ approximates $ \frac{1}{n} \sum_{v \in B_1} (1+\alpha_v) \deg(v)$ where $\alpha_v$ now denotes the fraction of edges incident to $v$ whose second endpoint lies outside  the set $B_1 \cup \bigcup_{i \in I} B_i$.

{\bf The differentially-private modified algorithm.}\label{sec:dp-avg-overview} We now introduce our sublinear-time differentially-private algorithm to approximate the average degree in Algorithm~\ref{alg:1+e}. Algorithm~\ref{alg:1+e} relies on three subroutines given by Algorithms~\ref{alg:noisy-deg}, \ref{alg:bigsmall}, and \ref{alg:noisy-avg-deg}. Splitting the algorithm into separate modules simplifies the privacy analysis as we can show that each subroutine is $\epsilon/3$-differentially private --- it follows that the entire algorithm is $\epsilon$-differentially private. In Algorithm~\ref{alg:noisy-deg} we add Laplace noise to the degrees of \emph{all} vertices in the graph and then return a sample of vertices, say $S$ (sampled uniformly without replacement) along with their noisy degrees. For simplicity we describe Algorithm~\ref{alg:noisy-deg} in a way that the running time is linear in the size of the input. We do this to make our privacy analysis simpler. However, we can implement Algorithm~\ref{alg:noisy-deg} with lazy sampling of Laplace noise $Y_u$ when required i.e., if node $u$ is in our sample $S$ or if $u=r(v)$ was the randomly selected neighbor of some node $v \in S$.
\begin{center}
\vspace{-0.3cm}
\fbox{\parbox{\textwidth}{
{\textbf{NoisyDegree} takes $G$ as input and returns a set of sampled vertices along with the noisy degrees of every vertex in $G$.}  
      \begin{enumerate}[nolistsep]
      \item Uniformly and independently select $\Theta(\sqrt{n} \cdot \text{poly}(\log (n)/\rho) \cdot \text{poly}(1/\eps))$ vertices (without replacement) from $V$ and let $S$ denote the set of selected vertices.
      \item For every $v \in V(G)$, $$\tilde{d}(v)=\deg(v)+Y_v \;,$$ where $Y_v \sim \Lap(6/\eps)$.
      \item Return $\{\tilde{d}(v)\}_{v \in V(G)}, S$
    \end{enumerate}
}}
\captionof{algorithm}{\textbf{NoisyDegree}}  
\label{alg:noisy-deg}
\vspace{-0.3cm}
\end{center}

Given the output of Algorithm \ref{alg:noisy-deg} we can partition the sample $S$ into buckets $\tilde{S}_i = S \cap \tilde{B}_i$ using their noisy degree. Here, we define $\tilde{B}_i = \left\{ v : \tilde{d}(v) \in \left((1+\beta)^{i-1}, (1+\beta)^i\right)\right\}$ and we also define a merged bucket $S_1 = S \cap \left\{ v : \tilde{d}(v) \leq (1+\beta)^{K-1} \right\}$ containing all sampled nodes with noisy degree at most $ (1+\beta)^{K-1}$. Here, $K$ is a degree threshold parameter that we can tune. Now given a size threshold parameter $T$ we can define $I=\left\{ i \geq K ~:~\left|\tilde{S}_i \right| \geq 1.2 T \cdot |S| \right\}$ to be the set of big buckets. We remark that as a special case we define $|S_1|$ to be ``small" if $|S_1| < 1.2 T \cdot \sqrt{|S|} \cdot |S|$ instead of $|S_1| < 1.2 T \cdot|S|$. As an intuitive justification we note that (whp) for each node $v$ with noisy degree $\tilde{d}(v) \leq (1+\beta)^{K-1}$ the actual degree $\deg(v)$ will not be too much larger than $(1+\beta)^{K-1}$. In this case we have  $\sum_{v: \tilde{d}(v) \leq (1+\beta)^{K-1}} \deg(v) \leq |S_1| \max_{v: \tilde{d}(v) \leq (1+\beta)^{K-1}} \deg(v) = o(n)$ so that we can safely ignore the edges incident to $S_1$.

Intuitively, for each large bucket $i \in I$, Algorithm~\ref{alg:bigsmall} computes $\tilde{\alpha}_i= W_i/|\tilde{S}_i|$ our approximation of the fraction of type (2) edges incident to $\tilde{B}_i$. If $S_1$ is large then type (2) edges are (re)defined to be the edges with exactly one endpoint in $\left\{ v : \tilde{d}(v) \leq (1+\beta)^{K-1} \right\} \cup \bigcup_{i \in I} \tilde{B}_i$. To preserve differential privacy we add laplace noise to $W_i$ i.e., $W_i = Z_i +\sum_{v \in \tilde{S}_i} X(v)$ where $Z_i \sim \mathtt{Lap}(6/\epsilon)$. We remark that (whp) we will have $Z_i = o(|\tilde{S}_i|)$ for each large bucket $i \in I$. Thus, the addition of laplace noise will have a minimal impact on the accuracy of the final result.

\begin{center}
\vspace{-0.3cm}
\fbox{\parbox{\textwidth}{
\textbf{NoisyBigSmallEdgeCount} takes as input $G,I,\{\tilde{S}_i\}^t_{i=1}, S_1, \{\tilde{d}(v)\}_{v \in V(G)}, M_{\rho,n}, T$ and returns the fraction of edges that are between big buckets and not big buckets.
      \begin{enumerate}[nolistsep]
       \item For every $i \in I$, count the edges between buckets in $I$ and small buckets,
    \begin{enumerate}
        \item For all $v \in \tilde{S}_i$, 
    \begin{enumerate}
        \item Pick a random neighbor of $v$, say $r(v)$. 
        \item If $\vert S_1 \vert <1.2 T \cdot \sqrt{\vert S \vert }\cdot \vert S \vert $, i.e., if $S_1$ is a small bucket. Then if $\tilde{d}(r(v))\in ((1+\beta)^{i-1},(1+\beta)^i]$ for some $i \not\in I$, then $X(v)=1$, otherwise $X(v)=0$. 
        \item Otherwise, $S_1$ is not small. Therefore, if $\tilde{d}(r(v)) \in ((1+\beta)^{i-1},(1+\beta)^i]$ for some $i \not\in I$ and $i > \log_{1+\beta} \lceil \left(\frac{6M_{\rho,n}}{\beta}\right) \rceil+2$, then $X(v)=1$, otherwise $X(v)=0$. 
    \end{enumerate}
\item Define $ W_i := \sum_{v \in \tilde{S}_i} X(v) + Z_i$  where $Z_i \sim \Lap(6/\eps)$ and $\tilde{\alpha}_i := \frac{W_i}{\vert \tilde{S}_i \vert }$.    
    \end{enumerate}
   \item return $\{W_i\}_{i\in I},\{\tilde{\alpha}_i\}_{i\in I} $
    \end{enumerate}
}}
\captionof{algorithm}{\textbf{NoisyBigSmallEdgeCount}}  
\label{alg:bigsmall}
\vspace{-0.3cm}
\end{center}

If the merged bucket $S_1$ is small then we can ignore edges incident to $S_1$ and Algorithm~\ref{alg:noisy-avg-deg} will simply output $\frac{1}{|S|} \sum_{i \in I} |\tilde{S}_i| \cdot (1+\tilde{\alpha}_i) \cdot (1+\beta)^i$. In this case the output can be computed entirely from the differentially private outputs that have already been computed by Algorithms \ref{alg:noisy-deg} and \ref{alg:bigsmall} without even looking at the graph $G$. Intuitively,  for any large bucket $i \in I$ and $v \in \tilde{S}_i$ we expect that (whp) $|Y_v| = |\tilde{d}(v) - \deg(v)|$ is small enough to ensure that $(1+\beta)^{i-2} \leq \deg(v) \leq (1+\beta)^{i+1} $. Thus, $(1+\beta)^i$ is still a reasonable approximation for $\deg(v)$. 

If the merged bucket $S_1$ is sufficiently large, then we need to account for the edges within $S_1$ itself as well as the fraction of edges between $S_1$ and small buckets. We introduce a new estimator to approximate the fraction of edges between $S_1$ and small buckets given by $Z+\sum_{v \in S_1} (1+X(v)) \cdot \deg'(v)$ where $Z \sim \Lap\left(36M_{\rho,n} \left(3+\beta+\frac{1}{\beta}\right)\right)$ and $\deg'(v)= \min \{\deg(v),6M_{\rho,n} \left(3+ \beta+\frac{1}{\beta}\right)\}$ (See~Algorithm~\ref{alg:noisy-avg-deg}) --- the relationship between the parameters $K$ and  $M_{\rho,n}$ is $K = 2 + \log_{1+\beta} \lceil 6 M_{\rho, n}/\beta \rceil$. The Laplace Noise term is added to preserve differential privacy. We define the clamped degrees $\deg'(v)$ to ensure that the coupled global sensitivity of the randomized subroutine computing $\sum_{v \in S_1} (1+X(v)) \cdot \deg'(v)$ is upper bounded by $12M_{\rho,n} \left(3+ \beta+\frac{1}{\beta}\right)$. This way we can control the laplace noise parameters to ensure that $Z = o(|S_1|)$ with high probability so that the noise term $Z$ does not adversely impact accuracy. Intuitively, we expect that $Y_v \leq  M_{\rho, n}$ for all nodes $v$ with high probability. In this case for any node $v \in S_1$ we will have $\deg'(v) = \deg(v) \leq 6M_{\rho,n} \left(3+ \beta+\frac{1}{\beta}\right)$.

\begin{center}
\vspace{-0.3cm}
\fbox{\parbox{\textwidth}{
\textbf{NoisyAvgDegree} takes $\{\tilde{S}_i\}^t_{i=1}, \{\tilde{d}(v)\}_{v \in V(G)},\{\tilde{\alpha}_i\}_{i\in I},I, M_{\rho,n}, T$ as input and returns the noisy estimator for average degree of the graph.
      \begin{enumerate}[nolistsep]
        \item \label{it:case1-output} If $\vert S_1 \vert < 1.2T \cdot \sqrt{\vert S \vert}\cdot \vert S \vert $ then output $\frac{1}{\vert S \vert } \sum_{i \in I} \vert \tilde{S}_i \vert \cdot (1+\tilde{\alpha}_i) \cdot (1+\beta)^{i}$.  
 \item Else, $S_1$ is a big bucket, and we need to count edges between $S_1$ and small buckets. Thus, for every $v \in S_1$, 
 \begin{enumerate}
     \item Pick a random neighbor of $v$, say $r(v)$.
     \item If $\tilde{d}(r(v)) \in ((1+\beta)^{i-1},(1+\beta)^i]$ for some $i \not\in I$ and $i > \log_{1+\beta} \lceil \left(\frac{6M_{\rho,n}}{\beta}\right) \rceil+2$, then $X(v)=1$, otherwise $X(v)=0$. 
     \item  \label{it:case2-output} Output $$\frac{1}{\vert S \vert } \left(\sum_{i \in I } \vert \tilde{S}_i \vert \cdot (1+\tilde{\alpha}_i) \cdot (1+\beta)^{i} +Z+\sum_{v \in S_1} (1+X(v)) \cdot \deg'(v)  \right) \;,$$ where $Z \sim \Lap\left(36M_{\rho,n} \left(3+ \beta+\frac{1}{\beta}\right)\right)$ and $\deg'(v)= \min \{\deg(v),6M_{\rho,n} \left(3+ \beta+\frac{1}{\beta}\right)\}$.  
 \end{enumerate}
    \end{enumerate}
}}
\captionof{algorithm}{\textbf{NoisyAvgDegree} } 
\label{alg:noisy-avg-deg}
\vspace{-0.3cm}
\end{center}

\begin{center}
\fbox{\parbox{\textwidth}{
\textbf{Main DP Algorithm} that takes graph $G$ as input and outputs an approximation of its average degree.
      \begin{enumerate}[nolistsep]
        \item $\{\tilde{d}(v)\}_{v \in V(G)}, S:= \textbf{NoisyDegree}(G)$ \Comment{see Algorithm \ref{alg:noisy-deg}}
\item For {$i = 1,2\ldots, t$},  let $\tilde{S}_i= \{ v \in S :\ \tilde{d}(v) \in ((1+\beta)^{i-1},(1+\beta)^i] \}$ where $t:= \lceil \log_{(1+\beta)} (n) \rceil$.
\item Define $M_{\rho,n}:= \frac{1}{3} \cdot \sqrt{\frac{\rho}{n \sqrt{\log (n)}}} \cdot \frac{\vert S \vert }{t}$, $S_1:= \cup_{i\leq \log_{1+\beta} \left(\frac{6M_{\rho,n}}{\beta}\right)+2}  \tilde{S}_i$, and, $I=\{i > \log_{1+\beta} \left(\frac{6M_{\rho,n}}{\beta}\right)+2\ :\ \vert \tilde{S}_i \vert \geq 1.2T \cdot \vert S \vert \}$ where $T:= \frac{1}{2}\sqrt{\frac{\rho}{n}}\cdot \frac{\eps}{(1+\eps)} \cdot \frac{1}{t}$. 
\item $\{W_i\}_{i\in I},\{\tilde{\alpha}_i\}_{i\in I} := \textbf{NoisyBigSmallEdgeCount}(G,I,\{\tilde{S}_i\}^t_{i=1})$. \Comment{see Algorithm \ref{alg:bigsmall}}
\item \textbf{NoisyAvgDegree}($G,S,\{\tilde{S}_i\}^t_{i=1}, \{\tilde{\alpha}_i\}_{i\in I},I, M_{\rho,n},T$).  \Comment{see Algorithm \ref{alg:noisy-avg-deg}}
    \end{enumerate}
}}
\captionof{algorithm}{\textbf{Main DP Algorithm}}  
\label{alg:1+e}
\end{center}
The full analysis of Theorem~\ref{thm:informal-avg-deg} can be found in Sections~\ref{sec:privacy-avgdeg} and~\ref{sec:accuracy-avgdeg}.

\subsubsection{Privately Estimating Maximum Matching and Vertex Cover Size}\label{sec:tech-ov-match}

At a high-level our private algorithms for estimating the maximum matching and vertex cover add laplace noise (to the outputs) proportional to the coupled global sensitivity of the randomized non-private algorithms for the same. The challenge lies in proving the coupled global sensitivity of these non-private algorithms is small. 

We first describe and analyze the coupled global sensitivity of the classical polynomial-time greedy matching algorithm. This is helpful in our analysis of the non-private sublinear-time algorithm for maximum matching in the sequel. 

We then describe and give a proof sketch of the coupled global sensitivity of the non-private sublinear-time matching algorithm~\cite{behnezhad21}. The full proofs for privately estimating the maximum matching and minimum vertex cover size are in Section~\ref{sec:main-sub-mm}. 
Recall that $d$ is the maximum degree and $\bar{d}$ is the average degree of the graph .

{\bf The Polynomial-time Greedy Matching Algorithm $\cA_{MM}$.} This algorithm takes as input a graph $G=(V, E)$ and a random permutation $\pi$ on the set of pairs $(x, y)\in V\times V$, with $x\ne y$, and processes each pair of vertices $(x,y)$ in the increasing order of ranks given by $\pi$, and greedily adds edges to a maximal matching whose size is finally output\footnote{We note that the non-private algorithms~\cite{nguyen2008constant,yoshida2012improved,ORR12} only consider the ranking $\pi$ over $m$ edges of the graph, whereas we consider the ranking over all $\binom{n}{2}$ pairs of vertices. This is because we want to define a ``global'' ranking so that we can define the same ranking consistently over neighboring graphs that may have different edges.}. Since the size of the maximal matching produced is known to be at least $\frac12$ of the size of a maximum matching, this gives a non-private $2$-approximation of the size of a maximum matching in $G$.

{\bf CGS of the Greedy Algorithm $\cA_{MM}$.}
We show that the CGS of the greedy algorithm (with respect to node-neighboring graphs) is at most 1. Note that once the ranking on the edges is fixed the maximal matching obtained by $\cA_{MM}$ is also fixed. 
Let $\sigma_I$ be the identity permutation over the ranking of edges, i.e., we have $\sigma_I(\pi)=\pi$. We use Fact~\ref{fact:cgs-perm} to observe that,
\begin{align*}
CGS_{\cA_{MM}}&\leq   \max_{G_1 \sim G_2} \min_\sigma \max_{\pi} \vert \cA_{MM}(G_1;\pi)-\cA_{MM}(G_2;\sigma(\pi)) \vert \\
&\leq \max_{\substack{G_1 \sim G_2 \\ \pi}} \vert \cA_{MM}(G_1;\pi)-\cA_{MM}(G_2;\sigma_I(\pi)) \vert\\
&=\max_{\substack{G_1 \sim G_2 \\ \pi}} \vert 
\cA_{MM}(G_1;\pi)-\cA_{MM}(G_2;\pi).\vert
\end{align*}
Therefore it is sufficient to analyze the relative size of the matching obtained on node-neighboring graphs $G_1,G_2$ that are processed by the greedy algorithm in the order given by the same $\pi$. 

Let $G_1 \sim G_2$ where $v^*$ is such that $E(V_1\setminus \{v^*\}) = E(V_2\setminus \{v^*\})$. 
Denote the greedy matchings obtained from $\cA_{MM}(G_1,\pi)$ as $M_1$ and from $\cA_{MM}(G_2,\pi)$ as $M_2$. Suppose edge $e^*$ is incident to $v^*$ such that $e^* \in E_2$, and $e^* \not\in E_1$. We will show that  $\vert \vert M_1 \vert - \vert M_2 \vert \vert \leq 1$, which implies that $\max_{\substack{G_1 \sim G_2 \\ \pi}} \vert \cA_{MM}(G_1;\pi)-\cA_{MM}(G_2;\pi) \vert \leq 1$, thus proving that $CGS_{\cA_{MM}} \leq 1$.

We first claim that if  $e^* \not\in M_1 \cup M_2$ then $|M_1|=|M_2|$. Since the greedy algorithm considers edges in the same order, the exact same edges must have been placed in  $M_1$ as in $M_2$ before $e*$ is processed. Since $e^*=(v^*, u)$ is not chosen in $M_2$ it must have been the case that by this time $u$ was matched in $M_2$, and thus the same matched edge must occur in $M_1.$ From here on the algorithm again must make the same choices for the edges to be placed in $M_1$ and $M_2$.

Next, we claim that if $e^*\in M_1\cup M_2$ then $M:= M_1 \oplus M_2$\footnote{$M_1 \oplus M_2$ is the \emph{symmetric difference} of sets and this is defined as the set of edges in either $M_1$ or $M_2$ but not in their intersection.} is one connected component containing $e^*$.  Consequently, $\vert \vert M_1 \vert - \vert M_2 \vert \vert \leq 1$. Since $e^* \in M_2$ and $e^*$ cannot be in $M_1$, it is clear that $e^* \in M$. Suppose for the sake of contradiction, $M$ consists of two connected components $C_1,C_2$ and WLOG $e^* \in C_1$.  Consider edges in $C_2$. By Berge's Lemma \cite{west2001introduction}, $C_2$ is either an alternating path or an alternating even cycle, with alternating edges from $M_1$ and $M_2$. Also, the edges in $C_2$ exist in both $G_1$ and $G_2$ with the same ranking. Observe that since $C_2$ is separate from $C_1$ containing $e^*$, if we replace edges in $C_2$ belonging to $M_2$ in the original graph $G_2$ by edges in $C_2$ belonging to $M_1$, this is still a valid maximal matching for the graph $G_2$. In fact, the greedy algorithm considers edges in $C_2$ in the same order for both graphs $G_1,G_2$, so the edges in $M_1$ and $M_2$ should be the same, in other words, $C_2$ cannot be a part of $M=M_1 \oplus M_2$, and hence $M$ must have only one connected component, which contains $e^*$. Now, since $M$ is either an alternating path or even cycle, $\vert \vert M_1 \vert - \vert M_2 \vert \vert \leq 1$.

{\bf The Local Maximum Matching Algorithm $\cA_{sub-MM}$.}

We describe the local algorithm implemented by~\cite{behnezhad21} in Algorithm~\ref{alg:MM-SLA-sampling}. We modify the original algorithm to sample vertices without replacement. The algorithm then calls the vertex cover oracle (denoted as $\mathcal{O}^\pi_{VC}$) on each sampled vertex which subequently calls the maximal matching oracle (denoted as $\mathcal{O}^\pi_{MO}$) on the incident edges to determine whether the sampled vertex is in the matching fixed by the ranking of edges $\pi$. Finally, the algorithm returns an estimate of the maximum matching size based on the number of sampled vertices in the matching. We note that in~\cite{behnezhad21} the same sampling algorithm simultaneously outputs an approximation to maximum matching size and minimum vertex cover size. We choose to write the sampling procedure for estimating the maximum matching size (see Algorithm~\ref{alg:MM-SLA-sampling}) and minimum vertex cover size (see Algorithm~\ref{alg:VC-SLA-sampling}) separately so that it is easier to understand the Coupled Global Sensitivity for outputting the two different estimators.

\begin{center}
\fbox{\parbox{\textwidth}{
  \textbf{Input. }Input Graph $G=(V,E)$.
\begin{enumerate}[nolistsep]
    \item Uniformly sample $s=16 \cdot 24 (\ln n)/\rho^2$ vertices from $V$ without replacement.
    \item For{ $i=1 \ldots s$,} if {$\mathcal{O}^\pi_{VC}(v_i)=$ True} then let $X_i = 1$, otherwise let $X_i=0$. 
    \item return $\tilde{M}= \frac{n}{2s}(\sum_{i \in [s]}X_i)-\frac{\rho n}{2}$.
    \end{enumerate} }}
    \captionof{algorithm}{Local Maximum Matching algorithm $\mathcal{A}_{sub-MM}$ using Oracle access.}  
\label{alg:MM-SLA-sampling}
\end{center}

\begin{remark}
\cite{behnezhad21}~gives an efficient simulation of the matching and vertex cover oracles which exposes edges incident to a vertex in batches only when they are needed. We assume the efficient simulation of these oracles in our algorithms.  
\end{remark}

{\bf CGS of the Local Matching Algorithm $\cA_{sub-MM}$.} We first describe the challenges to analyzing the coupled global sensitivity of $\cA_{sub-MM}$ and then give an intuition for why the $CGS_{\cA_{sub-MM}} \leq \frac{n\rho^2}{16 \cdot 24 \ln n}$. 

A naive approach could be to consider the identity permutation $\sigma_I$ over the ranking of edges $\pi$ and sampled vertices $v\in V $. But this approach does not work. For node-neighboring graphs $G_1, G_2$, it could be the case that all the edges sampled from $G_1$ belong to the matching $M_1$ fixed by the ranking $\pi$, but the same edges sampled from $G_2$ may not be in the matching $M_2$ fixed by the ranking $\pi$. Thus, we need to carefully define a bijection that maps endpoints of edges in the matching $M_1$ to endpoints of edges in the matching $M_2$. By using this bijection to define a permutation and the fact that $\vert M_2 \setminus M_1 \vert \leq 1$ which leads to the size of the sets of matched vertices in $M_1$ and $M_2$ to differ by 2, we have $CGS_{\cA_{sub-MM}} \leq \frac{n}{s} \leq  \frac{n\rho^2}{16 \cdot 24 \ln n}$.

\section{Privacy Analysis of Theorem~\ref{thm:informal-avg-deg}}\label{sec:privacy-avgdeg}%

\begin{theorem}\label{thm:main-privacy-avgdeg}
 The Algorithm~\ref{alg:1+e} is $\eps$-DP. 
 \end{theorem}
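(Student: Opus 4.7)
The plan is to split Algorithm~\ref{alg:1+e} into the three sub-mechanisms \textbf{NoisyDegree}, \textbf{NoisyBigSmallEdgeCount}, and \textbf{NoisyAvgDegree}; argue that each one is $\eps/3$-edge-DP as a function of $G$ (treating quantities released by earlier subroutines as side information); and conclude $\eps$-DP by basic sequential composition of three $\eps/3$-DP mechanisms. All outer bookkeeping in Algorithm~\ref{alg:1+e}---forming the buckets $\tilde{S}_i$, the merged bucket $S_1$, the index set $I$, and the thresholds $K$, $M_{\rho,n}$, $T$---depends on $G$ only through the outputs of \textbf{NoisyDegree}, so by post-processing it incurs no additional privacy cost.

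For \textbf{NoisyDegree}, the sample $S$ is drawn from data-independent randomness and is free. The exact degree vector $(\deg(v))_{v\in V}$ has $\ell_1$-global-sensitivity $2$ for edge-neighboring graphs (a single edge flip changes exactly two coordinates by $1$), so adding i.i.d.\ $\Lap(6/\eps)$ noise to each coordinate is the standard Laplace mechanism with parameter $2/(6/\eps)=\eps/3$.

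For \textbf{NoisyBigSmallEdgeCount}, the only new graph-dependent outputs are the coordinates $\sum_{v\in\tilde{S}_i} X(v)$ for $i\in I$ (then perturbed by $Z_i\sim\Lap(6/\eps)$). I will invoke Fact~\ref{fact:cgs-perm} and construct a coupling of the random coins driving $\{r(v)\}_{v\in S}$ so that for edge-neighboring graphs $G_1\sim_e G_2$ differing on edge $e^*=\{u_1,u_2\}$, the realized neighbor $r(v)$ agrees in the two executions for every $v\notin\{u_1,u_2\}$; this is feasible because the neighborhood of any such $v$ is identical in $G_1$ and $G_2$. Under this coupling only the terms $X(u_1)$ and $X(u_2)$ can differ between the two runs, each by at most $1$ and contributing to at most one coordinate of the output vector. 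Hence the CGS of the noiseless vector is at most $2$, and adding independent $\Lap(6/\eps)$ noise to each $W_i$ yields $\eps/3$-DP by Theorem~\ref{thm:lap-cgs}.

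For \textbf{NoisyAvgDegree}, in Case~1 ($|S_1|$ small) the output is a deterministic function of the previously released DP quantities, so it is $0$-DP by post-processing. In Case~2 the additional quantity released is $Z+\sum_{v\in S_1}(1+X(v))\,\deg'(v)$, and I will use the same style of coupling on the $r(v)$'s to localize disagreement to $\{u_1,u_2\}\cap S_1$, at most two summands. Since $\deg'(v)$ is clamped at $6M_{\rho,n}(3+\beta+1/\beta)$ and $X(v)\in\{0,1\}$, a direct bound on $(1+X_1)d_1-(1+X_2)d_2$ with $|X_1-X_2|\leq 1$ and $|d_1-d_2|\leq 1$ shows that each affected summand changes by at most $6M_{\rho,n}(3+\beta+1/\beta)+O(1)$. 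Summing over the two endpoints gives CGS at most $12M_{\rho,n}(3+\beta+1/\beta)+O(1)$, so adding Laplace noise at scale $36M_{\rho,n}(3+\beta+1/\beta)/\eps$ gives $\eps/3$-DP via Theorem~\ref{thm:lap-cgs}. The main obstacle throughout will be formalizing the couplings: one must check that the chosen joint distribution on coins is a valid coupling---i.e., that its two marginals really reproduce the algorithm's distribution on $G_1$ and on $G_2$, even though $r(u_1)$ in $G_1$ can hit the extra neighbor $u_2$ that is absent in $G_2$---and use Fact~\ref{fact:cgs-perm} cleanly so that the remaining Laplace-mechanism arithmetic is routine.
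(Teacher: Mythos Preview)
Your proposal is correct and follows essentially the same modular approach as the paper: show each of \textbf{NoisyDegree}, \textbf{NoisyBigSmallEdgeCount}, and \textbf{NoisyAvgDegree} is $\eps/3$-edge-DP (conditioning on the outputs of the earlier subroutines as side information, with the bucket/threshold bookkeeping handled by post-processing), then apply basic composition. Your couplings on the random-neighbor choices $r(v)$ and the resulting CGS bounds coincide with the paper's; if anything you are slightly more careful, since you explicitly track the additive $O(1)$ contribution from the change in $\deg'(u_1),\deg'(u_2)$ in Case~2 of \textbf{NoisyAvgDegree} (the paper's write-up silently drops this term) and you flag the need to verify that the coupling's marginals are correct at the two endpoints of $e^*$.
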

 \begin{proof}
 We will approach the privacy analysis in a modular fashion, i.e., we will analyze each sub-routine separately and show that by composition, the entire algorithm is $\eps$-differentially private.

In the sequel, when analyzing the coupled global sensitivity of intermediate randomized quantities, we use Fact~\ref{fact:cgs-perm}. 
 
\begin{claim}
Algorithm~NoisyDegree (see Algorithm~\ref{alg:noisy-deg}) is $\eps/3$-DP.
\end{claim}   
\begin{proof}
First, fix any sample $S$. Define the function $f_{noisy-deg} := \{\tilde{d}(v)\}_{v\in V(G)}$. Observe that the degree of a node can change by at most 1 from adding or deleting an edge, and therefore $f_{noisy-deg}$ changes by at most 2 by adding or deleting an edge, in other words, the $GS_{f_{noisy-deg}}=2$ and we can add noise proportional to $2/\eps$.  
\end{proof}
\begin{claim}
Algorithm~NoisyBigSmallEdgeCount (Algorithm~\ref{alg:bigsmall}) is $\eps/3$-DP. 
\end{claim}
\begin{proof}
 We fix noisy degrees $\{\tilde{d}(v)\}_{v \in V(G)}$, consequently fixing the buckets $\tilde{S}_1,\ldots, \tilde{S}_t$ and set $I$. Define the function $f_{t,\tilde{d}}:= \{f_{\tilde{S}_i, \tilde{d}}(G;r) \}_{i\in I}$, and the function $f_{\tilde{S}_i, \tilde{d}}(G;r) = \sum_{v \in \tilde{S}_i} H(r(v))$ where $H(w) = 1$ if and only if we have $\tilde{d}(w) \in ((1+\beta)^{i-1}, (1+\beta)^i]$ for some $i  \not \in I$ and $\vert S_1 \vert <1.2 T \cdot \sqrt{\vert S \vert} \cdot \vert S \vert $ or if $\tilde{d}(w) \in ((1+\beta)^{i-1}, (1+\beta)^i]$ for some $i  \not \in I$ and $i > \log_{1+\beta} \lceil \left(\frac{6M_{\rho,n}}{\beta}\right) \rceil+2$; here $r(\cdot)$ defines the random coins used to sample a neighbor of $v$. We analyze $CGS_{f_{\tilde{S}_i, \tilde{d}}}$, and argue that $CGS_{f_{t,\tilde{d}}} \leq CGS_{f_{\tilde{S}_i, \tilde{d}}}$.

 First, we show that for all fixed $S$, $\{\tilde{d}(v)\}_{v \in S}$ and $i\in I$, the $CGS_{f_{S_i,\tilde{d}}}$ is at most 2. Consider $G$ and $G’$ such that edge $(u^*,v^*) \in G$, but does not exist in $G'$. Fix any coupling such that $r(w) = r’(w)$ for all $w \neq {u^*,v^*}$, where $r,r'$ defines the random coins for sampling neighbors of $w$ in $G$ and $G'$ respectively. Now we have $X(w)=H(r(w)) =H(r’(w))= X’(w)$ for all $w \neq u^*, v^*$. Thus, $ CGS_{f_{\tilde{S}_i,\tilde{d}}} = \vert f_{\tilde{S}_i,\tilde{d}}(G;r) - f_{\tilde{S}_i,\tilde{d}}(G’;r') \vert  = \vert \sum_{v \in \tilde{S}_i} H(r(v))-  \sum_{v \in \tilde{S}_i} H(r’(v)) \vert = \vert H(r(v^*)) + H(r(u^*))- H(r’(v^*))- H(r’(u^*)) \vert \leq 2.$ Now, since the differing endpoints $u^*,v^*$ can only appear in at most one of the $i$-th iterations simultaneously, it is clear to see that $CGS_{f_{t,\tilde{d}}}$ is also at most 2.  
\end{proof}

\begin{claim}
Algorithm NoisyAvgDegree (Algorithm~\ref{alg:noisy-avg-deg}) is $\eps/3$-DP.
\end{claim}  
\begin{proof}
We fix noisy degrees $\{\tilde{d}(v)\}_{v \in V(G)}$, and sample $S$ consequently fixing the buckets $\tilde{S}_1,\ldots, \tilde{S}_t$ and set $I$, and we fix $\{ \tilde{\alpha}_i\}^t_{i=1}$.
Note that the first output in Line~\ref{it:case1-output} given by $\frac{1}{\vert S \vert } \sum_{i \in I} \vert \tilde{S}_i \vert \cdot (1+\tilde{\alpha}_i) \cdot (1+\beta)^{i}$ is already private since the terms in the summation consist of parameters that are either noisy or public or both. We need to show that the second output in Line~\ref{it:case2-output} is private. In particular, define the function $f_{S_1, \tilde{d}}(G;r):= \sum_{v \in S_1} (1+H_1(r(v))) \cdot \deg'(v)$ where $\deg'(v) = \min \{\deg(v), 6M_{\rho,n} \left(3+ \beta+\frac{1}{\beta}\right)\}$ and $H_1(w)=1$ if and only if $\tilde{d}(w) \in ((1+\beta)^{i-1},(1+\beta)^i]$ for some $i \not\in I$ and $i > \log_{1+\beta} \lceil \left(\frac{6M_{\rho,n}}{\beta}\right) \rceil+2$. We claim that for all fixed $S$ and $\{\tilde{d}(v)\}_{v \in S}$, the $CGS_{f_{S_1,\tilde{d}}}$ is at most $ 12M_{\rho,n} \left(3+ \beta+\frac{1}{\beta}\right)$. Consider $G$ and $G’$ such that edge $(u^*,v^*) \in G$, but does not exist in $G'$. Fix any coupling such that $r(w) = r’(w)$ for all $w \neq {u^*,v^*}$, where $r,r'$ defines the random coins for sampling neighbors of $w$ in $G$ and $G'$ respectively. Now we have $X(w)=H_1(r(w)) =H_1(r’(w))= X’(w)$ for all $w \neq u^*, v^*$. Thus, $ \vert f_{S_1,\tilde{d}}(G;r) - f_{S_1,\tilde{d}}(G’;r') \vert  = \vert \sum_{v \in \tilde{S}_1} (1+H_1(r(v)))\cdot \deg'(v)-  \sum_{v \in \tilde{S}_1}(1+H_1(r’(v)))\cdot \deg'(v) \vert = \vert (1+H(r(v^*)))\cdot \deg'(v^*) +(1+ H(r(u^*)))\cdot \deg'(u^*)- (1+H(r’(v^*)))\cdot \deg'(v^*)-(1+ H(r’(u^*)))\deg'(u^*) \vert \leq 2 \cdot 6M_{\rho,n} \left(3+ \beta+\frac{1}{\beta}\right) = 12 M_{\rho,n} \left(3+ \beta+\frac{1}{\beta}\right).$ Note that we introduce $\deg'(v)$, to ensure that the sensitivity of $ f_{S_1,\tilde{d}}$ remains small.
\end{proof}

 By composition, we have that the main algorithm is $\eps$-DP. 
\end{proof}
\section{Preliminaries}
We state the following tail bound for a random variable drawn from the Laplace Distribution. 

 \begin{fact}\label{fact:lap}
 If $Y \sim Lap(b)$, then 
 \[ \Pr[ \vert Y \vert \geq \ell \cdot b] = \exp(-\ell) \; .\]
 \end{fact}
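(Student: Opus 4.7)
The plan is to compute the tail probability directly from the definition of the Laplace density stated in the footnote accompanying Theorem~\ref{thm:lap-cgs}, namely $h(z) = \frac{1}{2b} \exp(-|z|/b)$. Since this density is symmetric about zero, the event $\{|Y| \geq \ell b\}$ decomposes into two equiprobable tails $\{Y \geq \ell b\}$ and $\{Y \leq -\ell b\}$, so I will first reduce to computing the one-sided tail and then multiply by two.

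Concretely, I would write
\[
\Pr[|Y| \geq \ell b] \;=\; 2 \int_{\ell b}^{\infty} \frac{1}{2b} \exp\!\left(-\frac{y}{b}\right) dy \;=\; \int_{\ell b}^{\infty} \frac{1}{b} \exp\!\left(-\frac{y}{b}\right) dy,
\]
and then perform the substitution $u = y/b$ (so $du = dy/b$ and the lower limit becomes $\ell$) to obtain $\int_{\ell}^{\infty} e^{-u}\, du = e^{-\ell}$. This yields the claimed identity $\Pr[|Y| \geq \ell b] = \exp(-\ell)$.

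There is no real obstacle here; the only thing to be careful about is the symmetry step, where one must note that the absolute value in $h$ lets us replace $e^{-|y|/b}$ by $e^{-y/b}$ on the positive tail without altering the integrand. Because the statement in the paper is written as an equality (not a bound), I would also remark that the integration is exact, so no Chernoff-style slack is lost — the factor of $\tfrac{1}{2b}$ in the density is precisely what cancels the doubling from the two symmetric tails. Once these observations are made, the claim follows from a single line of elementary calculus.
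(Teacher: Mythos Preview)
Your computation is correct. The paper itself states this as a \emph{Fact} without proof (it is a standard property of the Laplace distribution), so there is no argument in the paper to compare against; your direct integration from the density $h(z)=\frac{1}{2b}e^{-|z|/b}$ is exactly the canonical derivation and matches the equality claimed.
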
 
 
Next, we state a well-known fact which implies that the concentration results for sampling with replacement obtained using Chernoff bounds type methods (bounding moment generating function + Markov inequality) can be transferred to the case of sampling without replacement. 

\begin{fact}[\cite{bardenet2015concentration, hoeffding1994probability}]\label{fact:conc}
Let $\cX=(x_1, \ldots,x_N)$ be a finite population of $N$ points and $X_1, \ldots, X_n$ be a random sample drawn without replacement from $\cX$, and $Y_1, \ldots, Y_n$ be a random sample drawn with replacement from $\cX$. If $f: \bbR \to \bbR$ is continuous and convex, then 
\[ \E\left[ f \left( \sum^n_{i=1} X_i\right)\right] \leq \E\left[f \left( \sum^n_{i=1} Y_i\right)\right] \;. \] 
\end{fact}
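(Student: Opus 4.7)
My plan is to reproduce the classical coupling-plus-Jensen proof of this Hoeffding (1963) inequality. First, I would represent both sums via count vectors: for the with-replacement sample, $\sum_{i=1}^n Y_i = \sum_{j=1}^N C_j x_j$ where $C = (C_1,\ldots,C_N) \sim \mathrm{Multinomial}(n;\,1/N,\ldots,1/N)$; for the without-replacement sample, $\sum_{i=1}^n X_i = \sum_{j=1}^N B_j x_j$ where $B \in \{0,1\}^N$ is uniform over $0/1$-vectors summing to $n$. A first sanity check is that both sums share the common mean $n\bar{x}$ with $\bar{x} = \frac{1}{N}\sum_j x_j$, as expected for any convex-order comparison.

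The core of the argument is to construct a coupling of $(B,C)$ on a common probability space, preserving the two marginals, such that
\[
\mathbb{E}\!\left[\,\sum_j C_j x_j \;\Big|\; B\,\right] \;=\; \sum_j B_j x_j.
\]
Given such a coupling, conditional Jensen's inequality immediately yields
\[
f\!\left(\sum_j B_j x_j\right) \;=\; f\!\left(\mathbb{E}\!\left[\sum_j C_j x_j \,\Big|\, B\right]\right) \;\leq\; \mathbb{E}\!\left[f\!\left(\sum_j C_j x_j\right) \,\Big|\, B\right],
\]
and taking expectations on both sides completes the proof.

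The main obstacle is producing the coupling correctly: the obvious attempt --- draw $B$ uniformly over $n$-subsets of $\cX$ and then throw $n$ balls uniformly into the support of $B$ --- already fails the margin check for $C$, because the induced marginal puts too much mass on count vectors with small support. My intended approach follows Hoeffding's original argument, which realizes the coupling incrementally. Starting from a WR sample, one iteratively ``deduplicates'' one coordinate at a time by resampling a duplicated entry conditionally on the current state, invoking Jensen at each step to guarantee that $\mathbb{E}[f(\cdot)]$ only moves in the correct direction. After finitely many such steps the configuration is WOR-distributed, and the telescoped chain of inequalities yields the desired bound. Carefully verifying that each ``deduplication'' step is genuinely a conditional-expectation operation --- and that the procedure terminates in exactly the correct WOR marginal --- is where the bulk of the technical work lies, and in practice it is cleaner to just invoke the result from the cited references \cite{bardenet2015concentration, hoeffding1994probability}.
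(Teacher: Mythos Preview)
The paper does not prove this fact; it is stated with citations to \cite{hoeffding1994probability,bardenet2015concentration} and used as a black box. Your closing remark---that in practice it is cleaner to invoke the cited references---is therefore exactly what the paper does, so there is no paper-side proof to compare against.

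Your high-level plan is sound: represent both sums via the count vectors $B$ and $C$, seek a coupling under which the without-replacement sum is the conditional expectation of the with-replacement sum, and apply conditional Jensen. Your diagnosis that the naive coupling (draw $B$, then throw $n$ balls uniformly into the support of $B$) fails the marginal check for $C$ is also correct. The one loose point is the incremental ``deduplication'' sketch: as phrased, resampling a duplicated entry conditionally on the current state does not obviously make the \emph{post}-deduplication sum equal to the conditional expectation of the \emph{pre}-deduplication sum, which is the direction Jensen needs in order to yield $\E[f(\text{WOR})]\le\E[f(\text{WR})]$; Hoeffding's actual conditioning structure is somewhat different. But since you explicitly defer the verification of this step to the cited sources---precisely as the paper does---this does not constitute a gap relative to the paper's treatment.
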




\section{Accuracy Analysis of Theorem~\ref{thm:informal-avg-deg}} \label{sec:accuracy-avgdeg}
\subsection{Proof Sketch of Theorem~\ref{thm:informal-avg-deg}}\label{sec:avg-deg-sketch}
In this section, we give a sketch of the accuracy analysis. The more formal proofs can be found in Section~\ref{sec:formal-accuracy-avgdeg}.
\begin{theorem}\label{thm:main-accuracy-avgdeg}
For every $\rho < 1/4$, $\beta \leq \rho/8$, and $\eps^{-1} =  o(\log^{1/4}(n))$, for sufficiently large $n$, the main algorithm~(see Algorithm~\ref{alg:1+e}) outputs a value $\tilde{d}$ such that with probability at least $1-o(1)$, it holds that 
\[\left(1 -\rho \right)\cdot \bar{d} \leq \tilde{d} \leq \left(1 + \rho\right) \cdot \bar{d} \]
\end{theorem}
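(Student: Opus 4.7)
The plan is to establish a high-probability \emph{good event} $\mathcal{E}$ under which every source of noise in Algorithm~\ref{alg:1+e} is controlled, and then argue via a case analysis (on whether $|S_1|$ is small or large) that the output is a $(1\pm\rho)$-approximation of $\bar d$, following the Goldreich–Ron skeleton but accounting for bucket-shifts caused by the Laplace perturbations of degrees.

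First I would define $\mathcal{E}$ as the conjunction of the following sub-events: (a) $|Y_v|\le M_{\rho,n}$ for every $v\in V$, which by Fact~\ref{fact:lap} and a union bound holds whp provided $M_{\rho,n}/ (6/\eps) = \omega(\log n)$ --- the parameter choices and the assumption $\eps^{-1}=o(\log^{1/4} n)$ make this true for sufficiently large $n$; (b) $|Z_i|$ is much smaller than $\beta\,|\tilde S_i|$ for every $i\in I$, which follows from $|\tilde S_i|\ge 1.2T|S|$ together with Fact~\ref{fact:lap}; (c) $|Z|\le o(\rho n)$; and (d) sampling-concentration events, namely $\big| |\tilde S_i|/|S| - |\tilde B_i|/n\big|$ and $|\tilde \alpha_i - \E[\tilde\alpha_i\mid \mathcal{E}_a]|$ are small for every $i$, obtained by invoking Fact~\ref{fact:conc} to lift multiplicative Chernoff bounds from with-replacement to without-replacement sampling, and then union-bounding over the $t=O(\log n/\beta)$ buckets.

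Second, conditional on $\mathcal{E}$, I would show that bucketing by noisy degree is essentially faithful for high-degree vertices. Since $|Y_v|\le M_{\rho,n}$, any $v$ with $\tilde d(v)>(1+\beta)^{K-1}$ satisfies $\deg(v)\ge (1+\beta)^{K-1}-M_{\rho,n}$, which by our choice $K=2+\log_{1+\beta}\lceil 6M_{\rho,n}/\beta\rceil$ forces $\deg(v)\ge (1+\beta)^{i-2}$ when $\tilde d(v)\in((1+\beta)^{i-1},(1+\beta)^i]$, and symmetrically $\deg(v)\le (1+\beta)^{i+1}$. Thus $(1+\beta)^i$ approximates $\deg(v)$ within a $(1+\beta)^3$ factor, which is $\le 1+\rho/2$ by $\beta\le\rho/8$. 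Furthermore, any vertex $v\in S_1$ has $\deg(v)\le (1+\beta)^{K-1}+M_{\rho,n}=O(M_{\rho,n}/\beta)$, so $\deg'(v)=\deg(v)$ and the clamping has no effect under $\mathcal{E}$.

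Third, I would mirror the Goldreich–Ron partition of edges into type-(1) (both endpoints inside $\bigcup_{i\in I}\tilde B_i$), type-(2) (exactly one endpoint there), and type-(3) (neither). The expression $\frac{1}{|S|}\sum_{i\in I}|\tilde S_i|(1+\tilde\alpha_i)(1+\beta)^i$ counts type-(1) contributions twice and type-(2) contributions once and then uses $\tilde\alpha_i\approx \alpha_i$ to correct the undercounting, so, using (a)--(d), it approximates $\frac{1}{n}\sum_{v\in \bigcup_{i\in I}\tilde B_i}(1+\alpha_v)\deg(v)$ within a factor $(1\pm\rho/2)$ plus an additive term absorbed by the Laplace noise bounds. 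For the case $|S_1|<1.2T\sqrt{|S|}|S|$, sampling concentration transfers this to a bound $|B_1|=o(\sqrt{n/\log n})\cdot n/|S|$-type on the merged bucket, and hence the total degree carried by $S_1$ plus all remaining type-(3) edges is $o(\rho n)$; ignoring them costs at most $o(\rho)\cdot\bar d$. For the case $|S_1|\ge 1.2T\sqrt{|S|}|S|$, the extra term $\frac{1}{|S|}\bigl(Z+\sum_{v\in S_1}(1+X(v))\deg'(v)\bigr)$ is an unbiased (up to the negligible $Z$) estimator of $\frac{1}{n}\sum_{v\in B_1}(1+\alpha_v)\deg(v)$, with variance controlled because each $\deg'(v)$ is $O(M_{\rho,n}/\beta)$ and $|S_1|$ is large; a Bernstein/Chernoff bound on this sum gives the matching $(1\pm\rho/2)$ factor.

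The main obstacle will be the bookkeeping that keeps every error source simultaneously $o(\rho\bar d)$ while the parameters $\beta$, $T$, $M_{\rho,n}$, $K$, and $|S|$ are all entangled. In particular, the cutoff $K$ must be chosen large enough that $M_{\rho,n}$-sized noise cannot move a bucket-$i\ge K$ vertex too far, yet small enough that $B_1$ contributes only $o(\rho n)$ total degree in the small-$|S_1|$ case; and the Laplace-noise terms $Z_i, Z$ must be dominated by $\beta|\tilde S_i|$ and $\rho n$ respectively despite $\eps^{-1}$ being allowed to grow as $o(\log^{1/4} n)$. A further subtlety is the borderline regime where $|S_1|$ is near the threshold $1.2T\sqrt{|S|}|S|$: since the branch is selected based on a noisy statistic, I would verify that both branches yield valid $(1\pm\rho)$-approximations whenever $|S_1|$ is within a $1\pm o(1)$ factor of the threshold, so that the branching decision itself does not introduce additional error.
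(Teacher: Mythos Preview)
Your proposal is correct and follows essentially the same approach as the paper: define a good event controlling all Laplace noise (your sub-events (a)--(c) correspond to Lemma~\ref{lem:lap2}, Corollary~\ref{corol:yi-small}, Claim~\ref{clm:zij-si}, and Claim~\ref{clm:Z-lap-noise-small}) and sampling concentration (your (d) corresponds to Lemmas~\ref{lem:big-approx} and~\ref{lem:alpha-approx}), then run the case analysis on $|S_1|$ exactly as in Lemmas~\ref{lem:final-avg-deg-case1} and~\ref{lem:final-avg-deg-case2}, using the bucket-faithfulness argument (Lemma~\ref{lem:big-noisy-bucket-same}) and the clamping-is-vacuous observation (Lemma~\ref{lem:deg'=degv}). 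One small quantitative point: the type-(3) edge contribution coming from $|U'|^2$ is $\Theta(\rho n)$ rather than $o(\rho n)$ (see Corollary~\ref{corol:size-of-u}), so you end up with $(1\pm c\rho)\bar d$ for a constant $c$ and must rescale $\rho\to\rho/c$ at the end, as the paper does.
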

\begin{proof}
 The main proof strategy conditions on $S_1$ being sufficiently large or not. First, consider Case 1 when $\vert S_1 \vert < 1.2T \cdot \sqrt{\vert S \vert}\cdot \vert S \vert $ where $T$ is a size threshold parameter. We first show that for $i \in I$ the noisy buckets $\vert \tilde{B}_i \vert/n$ are approximated well by $\vert \tilde{S}_i \vert / \vert S \vert $ (see Part~\ref{it:big-approx-bi-si} of Lemma~\ref{lem:big-approx}). Next we show that the number of vertices in buckets that are significantly smaller than the size threshold are of size $O(\sqrt{n})$ (for buckets $U':=\{v \in \tilde{B}_i: (i \not \in I) \wedge (i>\log_{1+\beta} \left(\frac{6M_{\rho,n}}{\beta}\right)+2)\}$, see Part~\ref{it:small-small-u'} of Lemma~\ref{lem:small-small}) and of size $\tilde{O}(n^{3/4})$ (for bucket $B_1:=\cup_{i<\log_{1+\beta}\left(\frac{6M_{\rho,n}}{\beta}\right) +2} \tilde{B}_i$, see Part~\ref{it:small-small-b1} of Lemma~\ref{lem:small-small}). This leads to Corollary~\ref{corol:size-of-u} which bounds the number of edges between small buckets as roughly $\tilde{O}(\rho n + n^{3/4})$. 

One of our main contributions is showing that the actual fraction of edges between sufficiently large buckets and small buckets, denoted by ${\alpha}_i$, is approximated well by our noisy estimator $\tilde{\alpha_i}$ (see Lemma~\ref{lem:alpha-approx}, which implies the following corollary).

\begin{corollary}\label{corol:alpha-frac}
Assuming that $\eps^{-1} =  o(\log^{1/4}(n))$, for every $i \in I$, for sufficiently large $n$, we have that with probability at least $1-o(1)$,
\begin{enumerate}
    \item $ \vert \tilde{\alpha}_i - \alpha_i \vert \leq \frac{\rho}{4}\alpha_i $ if $\alpha_i \geq \rho/8$.
    \item $\tilde{\alpha}_i \leq \rho/4$, if $\alpha_i \leq \rho/8$. 
\end{enumerate}
\end{corollary}

Finally, we need to show that for sufficiently large noisy buckets, the actual degrees of the vertices (sans noise) only shifts to an adjacent noisy bucket (see~Lemma~\ref{lem:big-noisy-bucket-same}). This helps us bound the number of edges whose one endpoint resides in a sufficiently large noisy bucket. We have shown that with high probability, all approximations of edges between the different types of buckets is good, which leads to the main Lemma~\ref{lem:final-avg-deg-case1} for Case 1. 

Now consider Case 2 when $\vert S_1 \vert > 1.2T \cdot \sqrt{\vert S \vert}\cdot \vert S \vert $. We show that the bucket $\vert B_1 \vert/n$ is now approximated well by $\vert S_1 \vert / \vert S \vert$ (see Part~\ref{it:big-approx-b1-s1} of Lemma~\ref{lem:big-approx}). We introduce a different estimator for counting edges between $B_1$ and small buckets given by $Z+\sum_{v \in S_1} (1+X(v)) \cdot \deg'(v) $, where $Z\sim \Lap\left(36M_{\rho,n} \left(3+ \beta+\frac{1}{\beta}\right)\right)$ and $\deg'(v)= \min \{\deg(v),6M_{\rho,n} \left(3+ \beta+\frac{1}{\beta}\right)\}$. First, we show that for every $v\in S_1$, with high probability $\deg'(v)=\deg(v)$ (see Lemma~\ref{lem:deg'=degv}). Our main contribution in this case is showing that our estimator (sans noise) approximates the fraction of the sum of the edges between $B_1$ and all vertices in the graph (denoted by $E_1$), and the edges between $B_1$ and vertices in small buckets in the graph (denoted by $E'_1$) well (see lemma below). 
\begin{lemma}\label{lem:case2-output-hoeffding}
Let $\bar{d}_1$ be the average degree of bucket $B_1$. If $\vert B_1 \vert >1.5 T \cdot \sqrt{\vert S \vert } \cdot n$,  
\begin{enumerate}
\item \label{it:case2-output-hoeffding1} If $\bar{d}_1 \geq 1$, then with probability at least $1-o(1)$,
$$\left( 1-\frac{\rho}{4}\right) \cdot \frac{\vert E_1 \vert + \vert E'_1 \vert }{n} < \frac{1}{\vert S \vert} \sum_{v \in S_1} (1+X(v)) \cdot \deg(v) <  \left( 1+\frac{\rho}{4} \right) \cdot \frac{\vert E_1 \vert + \vert E'_1 \vert }{n} $$
\item \label{it:case2-output-hoeffding2} If $\bar{d}_1 < 1$, and $\bar{d}\geq 1$,  then with probability at least $1-o(1)$,
$$ \frac{\vert E_1 \vert + \vert E'_1 \vert }{n}-\rho/4 < \frac{1}{\vert S \vert} \sum_{v \in S_1} (1+X(v)) \cdot \deg(v) <  \frac{\vert E_1 \vert + \vert E'_1 \vert }{n} +\rho/4$$
\end{enumerate}
\end{lemma}
To complete this part of the proof, we show that the noise added to the estimator (denoted by $Z$) is small (see Claim~\ref{clm:Z-lap-noise-small}) and therefore, the noisy estimator also approximates the quantity $(\vert E_1 \vert +\vert E'_1 \vert)/n$ well. 

The rest of the analysis is similar to Case 1 and we invoke the same lemmas to show that with high probability, the approximations of edges between the rest of the sufficiently large buckets, and between the small buckets, as well as between the sufficiently large buckets and small buckets is good, thus giving us the main Lemma~\ref{lem:final-avg-deg-case2} for Case 2. 

Combining these two main lemmas proves our main theorem statement. 
\end{proof}

\begin{remark}
A simpler algorithm for estimating the average degree was given by Seshadri~\cite{seshadhri2015simpler}. The main intuition behind his algorithm was that out of $m$ edges of a graph, there are not ``too many'' edges that contribute a high degree. Thus the algorithm samples vertices and a random neighbor of each sampled vertex, but it only counts edges (scaled by a factor of 2 times the degree of the sampled vertex) for which the degree of the random neighbor is higher than that of the degree of the sampled vertex. 

The Coupled Global Sensitivity of the final estimate returned by this algorithm is high (proportional to the degree of the sampled vertex and its random neighbor); thus adding Laplace noise directly to the estimate would result in a very inaccurate algorithm. It is unclear how to mitigate this issue and make this algorithm differentially-private with a reasonable accuracy guarantee.
\end{remark}

\subsection{Formal proofs of Theorem~\ref{thm:informal-avg-deg}. }\label{sec:formal-accuracy-avgdeg}
In this section, we give a more formal proof of Theorem~\ref{thm:main-accuracy-avgdeg}. We define the accuracy parameter as $\rho$. For ease of calculation, we set $\vert S \vert =t \cdot \frac{\log^2 (n)}{\rho^2} \cdot \sqrt{\frac{n}{\rho}} \cdot \left( 1+\frac{1}{\eps} \right)$, $T:=\frac{1}{2}\sqrt{\frac{\rho}{n}}\cdot \frac{\eps}{(1+\eps)} \cdot \frac{1}{t}$, and $M_{\rho,n} := \frac{1}{3} \cdot \sqrt{\frac{\rho}{n \sqrt{\log (n)}}} \cdot \frac{\vert S \vert }{t}$. 
 
 We define a \emph{noisy bucket} $\tilde{B}_i$ as $\{v \in G\ :\ \tilde{d}(v) \in ((1+\beta)^{i-1},(1+\beta)^i] \}$ where $\tilde{d}(v) := \deg(v) + Y_v$ where $Y_v \sim \Lap(6/\eps)$.
 We also define ${B}_1:=\cup_{i<\log_{1+\beta}\left(\frac{6M_{\rho,n}}{\beta}\right) +2} \tilde{B}_i$. 
 
 For $i >  \log_{1+\beta} \left(\frac{6M_{\rho,n}}{\beta}\right)+2$, we define a noisy bucket $\tilde{B}_i$ as \emph{big} if $\vert \tilde{B}_i \vert \geq 1.2T \cdot n = \frac{3}{5t} \sqrt{\rho n} \cdot {\eps} $. We say $B_1$ is big if $\vert B_1 \vert > 1.5T \cdot \sqrt{\vert S \vert}\cdot n$.

We first use Fact\ref{fact:lap} to bound the Laplace noise in terms of $M_{\rho,n}$ and Laplace noise parameter $p$ in Lemma~\ref{lem:lap2} and Corollary~\ref{corol:yi-small}.

\begin{lemma}\label{lem:lap2}
For $Y \sim Lap(p/\eps)$,  with probability at least $1- \exp\left( -\frac{\log^{7/4} (n) }{3\rho^2} \cdot (1+\eps) \right)$, we have that $\vert Y \vert < p \cdot M_{\rho,n}$ where $M_{\rho,n} := \frac{1}{3} \cdot \sqrt{\frac{\rho}{n \sqrt{\log (n)}}} \cdot \frac{\vert S \vert }{t}$.
\end{lemma}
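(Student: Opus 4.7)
The plan is to apply the Laplace tail bound (Fact~\ref{fact:lap}) directly, after matching the parameters properly. Since $Y \sim \Lap(p/\eps)$, setting scale $b = p/\eps$ in Fact~\ref{fact:lap} gives $\Pr[|Y| \geq \ell \cdot p/\eps] = \exp(-\ell)$ for any $\ell > 0$. The goal is to bound $\Pr[|Y| \geq p \cdot M_{\rho,n}]$, so I will take $\ell := \eps \cdot M_{\rho,n}$, which yields a failure probability of $\exp(-\eps \cdot M_{\rho,n})$.

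The remaining work is a routine algebraic simplification of $\eps \cdot M_{\rho,n}$ using the stated values of $|S|$ and $M_{\rho,n}$. Substituting $|S| = t \cdot \frac{\log^2(n)}{\rho^2} \cdot \sqrt{n/\rho} \cdot (1 + 1/\eps)$ into $M_{\rho,n} = \frac{1}{3} \sqrt{\frac{\rho}{n \sqrt{\log n}}} \cdot \frac{|S|}{t}$, the factors of $t$ cancel, and the expression
\[
\sqrt{\tfrac{\rho}{n \sqrt{\log n}}} \cdot \sqrt{\tfrac{n}{\rho}} \;=\; \frac{1}{\log^{1/4}(n)}
\]
collapses the $n$ and $\rho$ dependences, leaving
\[
M_{\rho,n} \;=\; \frac{\log^{7/4}(n)}{3 \rho^2} \cdot \left( 1 + \frac{1}{\eps} \right).
\]
Multiplying by $\eps$ then gives $\eps \cdot M_{\rho,n} = \frac{\log^{7/4}(n)}{3\rho^2} (1+\eps)$, which matches the exponent in the statement.

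There is no real obstacle here; the lemma is essentially a bookkeeping statement that records the particular value of the Laplace tail bound when evaluated at the threshold $p \cdot M_{\rho,n}$ relevant for the accuracy analysis in subsequent lemmas. The only subtlety worth flagging is that the conclusion is a \emph{strict} inequality $|Y| < p \cdot M_{\rho,n}$, whereas Fact~\ref{fact:lap} is stated for $\geq$; this is harmless since $Y \sim \Lap(p/\eps)$ is continuous, so $\Pr[|Y| = p \cdot M_{\rho,n}] = 0$ and the bound on $\Pr[|Y| \geq p \cdot M_{\rho,n}]$ transfers to $\Pr[|Y| \geq p \cdot M_{\rho,n}]$ without loss.
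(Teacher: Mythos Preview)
Your proposal is correct and follows essentially the same route as the paper: substitute the value of $|S|$ into $M_{\rho,n}$, simplify to $M_{\rho,n} = \frac{\log^{7/4}(n)}{3\rho^2}(1+1/\eps)$, and apply Fact~\ref{fact:lap} with $\ell = \eps\, M_{\rho,n}$. Your remark about the strict inequality and continuity of the Laplace distribution is a nice extra that the paper does not bother to mention.
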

 \begin{proof}
\begin{align*}
    &\Pr\left[\vert Y \vert \geq p \cdot  \frac{1}{3} \cdot \sqrt{\frac{\rho}{n \sqrt{\log (n)}}} \cdot \frac{\vert S \vert }{t}\right] \\
    &= \Pr\left[\vert Y \vert \geq  p \cdot \frac{1}{3} \cdot \sqrt{\frac{\rho}{n \sqrt{\log (n)}}}\cdot t \cdot\frac{\log^2 (n)}{\rho^2} \cdot \sqrt{\frac{n}{\rho}} \cdot \left(1+\frac{1}{\eps} \right)\cdot \frac{1}{t}\right]&\\
    &= \Pr\left[\vert Y \vert \geq \left( \frac{\log^{7/4} (n) }{3\rho^2} \cdot \left(1+\frac{1}{\eps}\right)\eps\right) \cdot \frac{p}{\eps} \right]& \\
    &= \exp\left( -\frac{\log^{7/4} (n) }{3\rho^2} \cdot \left(1+{\eps}\right) \right)& \text{Using Fact \ref{fact:lap}} &\\
\end{align*} 
 \end{proof}

\begin{corollary}\label{corol:yi-small}
For all $i=1, \ldots, \lceil \log_{(1+\beta)} n\rceil$, if $Y_i \sim Lap(p/\eps)$, then $\vert Y_i \vert < p \cdot M_{\rho,n}$ with probability at least $1-o(1)$.   
\end{corollary}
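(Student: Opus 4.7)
The plan is to reduce this corollary to Lemma~\ref{lem:lap2} by a straightforward union bound over the $t = \lceil \log_{1+\beta}(n) \rceil$ Laplace random variables $Y_1,\dots,Y_t$. Each $Y_i$ is marginally distributed as $\Lap(p/\eps)$, so Lemma~\ref{lem:lap2} directly gives
\[
\Pr\bigl[\,|Y_i| \geq p \cdot M_{\rho,n}\,\bigr] \ \leq\ \exp\!\left(-\frac{\log^{7/4}(n)}{3\rho^2}\cdot(1+\eps)\right)
\]
for each individual $i$, regardless of whether or not the $Y_i$ are independent. Taking a union bound over all $i \in \{1, \ldots, t\}$, I obtain an overall failure probability of at most $t\cdot \exp\!\left(-\tfrac{\log^{7/4}(n)}{3\rho^2}(1+\eps)\right)$.

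Next I would bound $t$. Since $\beta \leq \rho/8$ (as used throughout the accuracy analysis) and $\log(1+\beta) = \Theta(\beta)$ for small $\beta$, we have $t = \lceil \log_{1+\beta}(n)\rceil = O(\log(n)/\rho)$. Therefore the bound on the failure probability becomes
\[
O\!\left(\frac{\log n}{\rho}\right)\cdot \exp\!\left(-\frac{\log^{7/4}(n)}{3\rho^2}\right).
\]
Taking logs, the exponent $-\log^{7/4}(n)/(3\rho^2)$ dominates the $\log\log n$ term coming from the union bound for any constant $\rho$ (and even for $\rho$ as small as any fixed inverse polylog), so the product tends to $0$ as $n\to\infty$, giving the desired $1-o(1)$ bound.

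Nothing here is subtle; the only step worth flagging is the bookkeeping that shows the union bound factor $t$ is indeed dominated by the super-polynomial tail decay from Lemma~\ref{lem:lap2}. The statement makes no independence assumption on the $Y_i$, so the union bound is applied purely at the level of marginals. I would present the proof as a two-line invocation of Lemma~\ref{lem:lap2} plus this asymptotic comparison.
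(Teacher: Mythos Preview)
Your proposal is correct and follows exactly the paper's approach: apply Lemma~\ref{lem:lap2} to each $Y_i$, take a union bound over the $t=\lceil\log_{1+\beta}(n)\rceil$ indices, and observe that the resulting bound $\exp\!\left(\ln t - \tfrac{\log^{7/4}(n)}{3\rho^2}(1+\eps)\right)$ is $o(1)$. The paper states the $o(1)$ conclusion only for constant $\rho$ and $\beta$, whereas you additionally spell out the $t=O(\log(n)/\rho)$ bookkeeping, but this is the same argument.
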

\begin{proof}
Using Lemma \ref{lem:lap2} and a union bound we have that, 
\begin{align*}
    \Pr\left[\exists i\ :\ \vert Y_i \vert \geq p\cdot \frac{1}{3} \cdot \sqrt{\frac{\rho}{n \sqrt{\log (n)}}} \cdot \frac{\vert S \vert }{t}\right] 
    &\leq \lceil \log_{(1+\beta)} n\rceil \cdot  \exp\left( -\frac{\log^{7/4} (n) }{3\rho^2} \cdot \left(1+{\eps}\right) \right)\\
    &= \exp\left(\ln \lceil\log_{(1+\beta)}(n)\rceil - \frac{\log^{7/4} (n) }{3\rho^2} \cdot \left(1+{\eps}\right)\right)
\end{align*}
For constant $\rho$ and $\beta$, the above expression is $o(1)$.
Therefore with probability $1-o(1)$, the claim follows.  
\end{proof}

\paragraph{CASE 1: $\vert S_1 \vert < 1.2T \cdot \sqrt{\vert S \vert}\cdot \vert S \vert $.}

We follow the same strategy as outlined in the proof sketch in Section~\ref{sec:avg-deg-sketch}. 

First, we show that for sufficiently large buckets, $\vert \tilde{S}_i \vert/ \vert S \vert$ (respectively $\vert S_1 \vert/\vert S \vert$) approximates $\vert \tilde{B}_i \vert/n$ (respectively $\vert B_1 \vert/n$) well. We only need Part~\ref{it:big-approx-bi-si} in this case, and use Part~\ref{it:big-approx-b1-s1} in the analysis of Case 2. 

\begin{lemma} \label{lem:big-approx}
\begin{enumerate}
    \item \label{it:big-approx-bi-si} For all $i>\log_{1+\beta} \left(\frac{6M_{\rho,n}}{\beta}\right)+2$ such that $\vert \tilde{B}_i \vert \geq 1.2T \cdot n$, then with probability at least $1-o(1)$, 
 \[ \left(1-\frac{\rho}{4}\right) \cdot \frac{\vert \tilde{B}_i \vert }{n} \leq \frac{\vert \tilde{S}_i \vert }{\vert S \vert} \leq \left(1+\frac{\rho}{4}\right)\cdot \frac{\vert \tilde{B}_i \vert }{n}\]
Otherwise, if $\vert \tilde{B}_i \vert < T \cdot n$, then with probability at least $1-o(1)$, we have that $\vert \tilde{S}_i \vert < 1.1T\cdot \vert S \vert$. 
\item \label{it:big-approx-b1-s1} If $\vert B_1 \vert >1.5 T \cdot \sqrt{ \vert S \vert } \cdot n$, then with probability at least $1-o(1)$, we have 
\[(1-\rho/4) \cdot \frac{\vert B_1 \vert}{n}  < \frac{\vert S_1 \vert}{\vert S \vert} < (1+\rho/4) \cdot \frac{\vert B_1 \vert}{n} \;,\]
Otherwise, if $\vert B_1 \vert < T \cdot \sqrt{ \vert S \vert } \cdot n$, then with probability at least $1-o(1)$, we have that $\vert S_1 \vert < 1.1T\cdot \sqrt{\vert S \vert} \cdot \vert S \vert $. 
\end{enumerate}
\end{lemma}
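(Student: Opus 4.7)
The plan is to first condition on the Laplace noise $\{Y_v\}_{v\in V}$, which freezes the noisy buckets $\tilde{B}_i$ and the merged bucket $B_1$ as deterministic subsets of $V$. Since the sample $S$ is drawn uniformly without replacement from $V$ independently of the noise, each $|\tilde{S}_i|=|S\cap\tilde{B}_i|$ and $|S_1|=|S\cap B_1|$ is then a hypergeometric random variable with expectations $\mu_i=|S|\cdot|\tilde{B}_i|/n$ and $\mu_1=|S|\cdot|B_1|/n$. Because the standard multiplicative Chernoff bound comes from the exponential moment method and $x\mapsto e^{\lambda x}$ is convex, \factref{conc} lets me transport the usual with-replacement Chernoff inequality to this without-replacement setting.

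For Part~\ref{it:big-approx-bi-si}, I will exploit the identity $T\cdot|S|=\frac{\log^2(n)}{2\rho^2}$, which holds by construction once one cancels $\frac{\eps}{1+\eps}\cdot(1+\frac{1}{\eps})=1$ in the definitions of $T$ and $|S|$. When $|\tilde{B}_i|\geq 1.2\, T\cdot n$, this gives $\mu_i\geq 1.2\cdot\frac{\log^2(n)}{2\rho^2}$, so a two-sided multiplicative Chernoff at deviation $\rho/4$ yields failure probability at most $2\exp(-\Omega(\log^2 n))$; a union bound over the at most $t=\lceil\log_{1+\beta}(n)\rceil$ indices still leaves $o(1)$. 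In the complementary case $|\tilde{B}_i|<T\cdot n$, I will use a stochastic-domination argument to replace $\tilde{B}_i$ by a hypothetical subset of size exactly $T n$ (so that the new mean is $T|S|=\frac{\log^2 n}{2\rho^2}$) and apply a one-sided multiplicative Chernoff at deviation $0.1$, which gives the same order of failure probability.

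Part~\ref{it:big-approx-b1-s1} is the same argument with $B_1$ replacing $\tilde{B}_i$ and with the threshold $T\sqrt{|S|}\cdot n$ in place of $T\cdot n$. When $|B_1|>1.5\, T\sqrt{|S|}\cdot n$, the mean satisfies $\mu_1>1.5\, T\sqrt{|S|}\cdot|S|$, which is larger than the Part~\ref{it:big-approx-bi-si} case by the extra factor $\sqrt{|S|}$; Chernoff at deviation $\rho/4$ thus produces an even smaller failure probability $\exp(-\Omega(\sqrt{|S|}\cdot\log^2(n)))$. The case $|B_1|<T\sqrt{|S|}\cdot n$ is handled by stochastic domination exactly as above. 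No union bound is required here since there is only one merged bucket $B_1$.

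The only mildly delicate point is the justification for applying Chernoff-type concentration in the without-replacement setting, which is dispatched cleanly by \factref{conc} (equivalently, Hoeffding's inequality for sampling without replacement). Everything else is bookkeeping with the chosen parameters to verify that each tail estimate is $\exp(-\Omega(\log^2 n))$ or smaller and that the union bound over the $t=O(\log n)$ buckets in Part~\ref{it:big-approx-bi-si} preserves the $o(1)$ failure guarantee. Because none of the bounds depend on the values of the $Y_v$'s (only on the sampling of $S$), removing the conditioning on the noise at the end is trivial.
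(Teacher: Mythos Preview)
Your proposal is correct and largely parallels the paper's own proof: condition (implicitly, in the paper's case) on the noise so that $\tilde{B}_i$ and $B_1$ are fixed, observe that $|\tilde{S}_i|$ is hypergeometric with mean $|S|\cdot|\tilde{B}_i|/n$, and apply multiplicative Chernoff (justified for sampling without replacement via \factref{conc}) in the ``large bucket'' cases, with the same computation $T|S|=\tfrac{\log^2 n}{2\rho^2}$ driving the $\exp(-\Omega(\log^2 n))$ tail.

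The one genuine difference is in the ``small bucket'' cases ($|\tilde{B}_i|<Tn$ and $|B_1|<T\sqrt{|S|}\,n$). The paper does \emph{not} use stochastic domination plus Chernoff there; instead it computes the variance of the hypergeometric variable explicitly (including the covariance cross-terms from sampling without replacement) and applies Chebyshev's inequality, obtaining a per-index failure probability of order $1/\log^2 n$ rather than $\exp(-\Omega(\log^2 n))$. Your monotone-coupling argument---extend $\tilde{B}_i$ to a set of size $\lceil Tn\rceil$ so that $|\tilde{S}_i|$ is pointwise dominated by a hypergeometric with mean $\approx T|S|$, then apply one-sided Chernoff at deviation $0.1$---is cleaner and yields a much sharper tail, which makes the subsequent union bound over the $t=O(\log n)$ indices completely painless. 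The paper's Chebyshev route also survives that union bound (since $t\cdot O(1/\log^2 n)=o(1)$), but only just. So both approaches work; yours is tighter and avoids the variance bookkeeping.
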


\begin{proof}
The proofs for both parts are very similar, we include both proofs here for completeness. 
\begin{enumerate}
    \item Let $X_{v} = 1$ if the sampled vertex $v$ is in noisy big bucket $\tilde{B}_i$, and 0 otherwise. Clearly $\vert \tilde{S}_i \vert = \sum_{v \in S} X_{v}$. 
    
{\bf Case 1: $\vert \tilde{B}_i \vert \geq 1.2T \cdot n$.} Thus using Chernoff bounds, and recalling that $\E[\tilde{S_i}] = \vert S \vert \frac{\vert \tilde{B}_i \vert }{n} \geq \frac{3\log^2(n)}{5\rho^2}$,

 \begin{align*}
    \Pr[\vert \vert \tilde{S}_i \vert - \E[\vert \tilde{S}_i \vert]\vert \geq \frac{\rho}{4} \E[\vert \tilde{S}_i \vert]  ]
    &\leq 2 \exp\left( - \frac{\rho^2}{16} \cdot \frac{1}{3} \cdot \E[\vert \tilde{S}_i \vert] \right)\\
    &\leq 2 \exp\left( - \frac{\rho^2}{16} \cdot \frac{1}{3} \cdot \frac{3\log^2(n)}{5 \rho^2} \right)
\end{align*}
{\bf Case 2: $\vert \tilde{B}_i \vert < T \cdot n$.} 
Observe that, 
\begin{align*}
    Var[\vert \tilde{S}_i \vert] &= \sum_{v \in S} Var[X_v] + \sum_{v \in S}\sum_{\substack{v' \in S \\v'\neq v}} Cov(X_v,X_{v'})\\
    &\leq \vert S \vert \left( \frac{\vert \tilde{B}_i \vert}{n} -\frac{\vert \tilde{B}_i \vert^2}{n^2}  \right)+ \frac{\vert S \vert (\vert S \vert -1 ) \vert \tilde{B}_i\vert^2}{n^2(n-1)}\\
    &< \vert S \vert T+ \frac{\vert S \vert (\vert S\vert -1)T^2}{n-1}
\end{align*}
where we used the fact that,
\begin{align*}
    Cov(X_v,X_{v'}) = \E[X_v \cdot X_v'] - \E[X_v]\cdot \E[X_{v'}] = \frac{\vert \tilde{B}_i \vert }{n}\cdot \frac{\vert \tilde{B}_i \vert-1 }{n-1} - \frac{\vert \tilde{B}_i \vert^2 }{n^2} \leq \vert \tilde{B}_i \vert^2 \cdot \left(\frac{1}{n(n-1)} - \frac{1}{n^2} \right) \;.
\end{align*}

By Chebyshev's inequality, we have, 
\begin{align*}
    &\Pr[\vert \vert \tilde{S}_i \vert - \E \vert \tilde{S}_i \vert \vert \geq 0.1 T \vert S \vert ] \\
    &\leq \frac{Var[\vert \tilde{S}_i \vert]}{(0.1T \vert S \vert)^2 }\\
    &\leq \frac{1}{(0.1)^2 \vert S \vert T} + \left(1-\frac{1}{\vert S \vert}\right)\cdot \frac{100}{n-1}\\
    &= \frac{200\rho^2}{\log^2(n)} + \left(1-\frac{\rho^2}{t \log^2(n)}\cdot \sqrt{\frac{\rho}{n}} \cdot \frac{\eps}{1+\eps}\right)\frac{100}{n-1} = o(1)
\end{align*}
Thus with probability at least $1-o(1)$, $\vert \tilde{S}_i \vert < \E \vert \tilde{S}_i \vert \vert+ 0.1 T \vert S \vert  < T\cdot  \vert S \vert (1+0.1) < 1.1T\cdot \vert S \vert $.

\item Let $X_{v} = 1$ if the sampled vertex $v$ is in bucket $\tilde{B}_1$, and 0 otherwise. Clearly $\vert{S}_1 \vert = \sum_{v \in S} X_{v}$, and $\E[\vert S_1 \vert ]=\frac{\vert S \vert \cdot \vert B_1\vert }{n}  $. 

{\bf Case 1: $\vert B_1 \vert > 1.5T \cdot \sqrt{\vert S \vert } \cdot n$.} By Chernoff bounds we have, 
\begin{align*}
    &\Pr[\vert \vert S_1 \vert - \E[\vert S_1 \vert]\vert \geq \frac{\rho}{4} \E[\vert S_1 \vert]  ] \\
    &\leq 2 \exp\left( - \frac{\rho^2}{16} \cdot \frac{1}{3} \cdot \E[\vert S_1 \vert] \right)\\
    &\leq 2 \exp\left( - \frac{\rho^2}{16} \cdot \frac{1}{3} \cdot \vert S \vert \cdot \frac{\vert B_1 \vert }{n}\right)\\
    &\leq 2 \exp\left( - \frac{\rho^2}{16} \cdot \frac{1}{3} \cdot \vert S \vert \cdot \frac{1.5T \cdot \sqrt{ \vert S \vert } \cdot n}{n} \right)\\
    &= 2 \exp\left( - \frac{1.5\rho^2}{16} \cdot \frac{1}{3} \cdot \frac{ \log^2(n)}{2\rho^2}  \cdot \frac{t^{1/2} \cdot \log(n)}{\rho} \cdot {\frac{n^{1/4}}{\rho^{1/4}}} \cdot \sqrt{1+\frac{1}{\eps}}  \right) \\ 
    &=2\exp \left(-\frac{\sqrt{1+1/\eps}}{64 \rho^{5/4} \cdot \log^{1/2}(1+\beta)} \cdot \log^{7/2}(n) \cdot n^{1/4} \right)
\end{align*}
{\bf Case 2: $\vert B_1 \vert < T \cdot \sqrt{ \vert S \vert } \cdot n$.}
Observe that 
\begin{align*}
    Var[\vert \tilde{S}_1 \vert] &= \sum_{v \in S} Var[X_v] + \sum_{v \in S}\sum_{\substack{v' \in S \\v'\neq v}} Cov(X_v,X_{v'})\\
    &= \vert S \vert \left( \frac{\vert {B}_1 \vert}{n} -\frac{\vert {B}_1\vert^2}{n^2}  \right)+ \frac{\vert S \vert (\vert S \vert -1 ) \vert {B}_1\vert^2}{n^2(n-1)}\\
   &< \vert S \vert T \sqrt{ \vert S \vert }+ \frac{\vert S \vert^2 (\vert S\vert -1)T^2}{n-1}
\end{align*}
 
By Chebyshev's inequality, we have, 
\begin{align*}
    &\Pr[\vert \vert \tilde{S}_i \vert - \E \vert \tilde{S}_i \vert \vert \geq 0.1 T \sqrt{ \vert S \vert } \vert S \vert ] \\
    &\leq \frac{Var[\vert \tilde{S}_i \vert]}{(0.1T  \sqrt{ \vert S \vert }\vert S \vert)^2 }\\
    &\leq \frac{1}{(0.1)^2 \vert S \vert  \sqrt{ \vert S \vert }T} + \left(1-\frac{1}{\vert S \vert}\right)\cdot \frac{1}{n-1}\\
    &= \frac{200\rho^{9/2} \sqrt{\log(1+\beta)}}{\sqrt{1+1/\eps}\sqrt{n}\log^{7/2}(n)} + \left(1-\frac{\rho^2 \log (1+\beta)}{\log^3(n)}\cdot \sqrt{\frac{\rho}{n}} \cdot \frac{\eps}{1+\eps}\right)\frac{1}{n-1} = o(1)
\end{align*}

Thus with probability at least $1-o(1)$, $\vert S_1 \vert <\E \vert S_1 \vert + 0.1 T\cdot \sqrt{\vert S \vert}\cdot \vert S \vert< (T\sqrt{\vert S \vert} \vert S \vert (1+0.1) =1.1T\sqrt{\vert S \vert }\vert S \vert$.
\end{enumerate}
\end{proof}

We reuse the following notation from~\cite{goldreich2004estimating}, 
 $$ E(V_1,V_2) := \{ (v_1,v_2)\ :\ v_1 \in V_1\ \&\ v_2 \in V_2\ \&\ \{v_1,v_2\} \in E(G)\}$$
 In other words, $E(V_1,V_2)$ denotes the set of all ordered pairs of adjacent vertices, with the first vertex in $V_1$ and the second vertex in $V_2$. 

Define $E_i$ as the set of ordered pairs of adjacent vertices such that the first vertex is in $\tilde{B}_i$, i.e., $E_i := E(\tilde{B}_i,V)$. 
Let $U:= \{v \in \tilde{B}_i: i \not \in I\}$, i.e., $U$ is the set of vertices that reside in noisy buckets deemed ``small'' by the sample. 

Since $\vert S_1 \vert < 1.2T \cdot \sqrt{\vert S \vert}\cdot \vert S \vert $ , we have that $U = U' \cup B_1$, where $U':=\{v \in \tilde{B}_i: (i \not \in I) \wedge (i>\log_{1+\beta} \left(\frac{6M_{\rho,n}}{\beta}\right)+2)\}$. 
We also define the set of edges between a noisy bucket and $U$ as $E'_i$, i.e., $E'_i := E(\tilde{B}_i,U) \subseteq E_i$. Also, $E_1 := E({B}_1,V)$, and $E'_1 := E({B}_1,U)$. 
Then 
$$ \sum_{i \in I} \vert E'_i \vert = E (V \setminus U ,  U ),\ \sum_{i \in I} \vert E_i \setminus E'_i \vert = 2 \vert E(V \setminus U, V \setminus U) \vert $$

The next Lemma bounds the number of vertices in small buckets $U'$ and $B_1$, and the subsequent Corollary bounds the total number of edges in all the small buckets, denoted by $U$.
\begin{lemma}\label{lem:small-small}
Define the sets $U':= \{v \in \tilde{B}_i: (i \not \in I) \wedge (i>\log_{1+\beta} \left(\frac{6M_{\rho,n}}{\beta}\right)+2)\}$, and ${B}_1:=\cup_{i<\log_{1+\beta}\left(\frac{6M_{\rho,n}}{\beta}\right) +2} \tilde{B}_i$. Then with probability at least $1-o(1)$, 
\begin{enumerate}
\item  \label{it:small-small-u'}
$ \vert U' \vert \leq \frac{3}{4}\cdot \sqrt{\rho n} \;.$
    \item \label{it:small-small-b1}
$ \vert B_1 \vert < \frac{3 \cdot \sqrt{\log(1+\beta)}}{4 \rho^{3/4}  \sqrt{1+1/\eps}} \cdot n^{3/4} \cdot\log^{1/2}(n)  \;. $
\end{enumerate}
\end{lemma}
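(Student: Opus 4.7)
Both parts follow by taking the contrapositive of the appropriate case of Lemma~\ref{lem:big-approx}, combined with a union bound in Part~\ref{it:small-small-u'} and the Case~1 hypothesis $|S_1| < 1.2 T \sqrt{|S|}\cdot |S|$ in Part~\ref{it:small-small-b1}. The heavy lifting has already been done by Lemma~\ref{lem:big-approx}; what remains is to apply it correctly and then to simplify.

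\textbf{Part~\ref{it:small-small-u'}.} I will bound $|\tilde B_i|$ individually for every index $i > \log_{1+\beta}(6M_{\rho,n}/\beta)+2$ with $i \notin I$, and sum. Fix such an $i$, so by definition $|\tilde S_i| < 1.2 T |S|$. Split into two subcases. If $|\tilde B_i| < 1.2 T n$, this is already the bound I want. Otherwise $|\tilde B_i| \geq 1.2 T n$, and the first case of Lemma~\ref{lem:big-approx} Part~\ref{it:big-approx-bi-si} applies, yielding $|\tilde S_i|/|S| \geq (1-\rho/4) |\tilde B_i|/n$ with probability $1-o(1)$. Combined with $|\tilde S_i|/|S| < 1.2 T$, this gives $|\tilde B_i| < \tfrac{1.2}{1-\rho/4}\, T n \leq 1.28 \, T n$ since $\rho < 1/4$. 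A union bound over the at most $t = \lceil \log_{1+\beta}(n)\rceil$ relevant indices preserves the $1-o(1)$ probability (the per-bucket failure bound from Lemma~\ref{lem:big-approx} decays super-polynomially in $n$, so the extra factor $t$ is harmless). Summing and substituting $T = \tfrac{1}{2}\sqrt{\rho/n}\cdot \tfrac{\eps}{1+\eps}\cdot \tfrac{1}{t}$ yields
\[
|U'| \;\leq\; t \cdot 1.28 \, T n \;=\; 0.64 \cdot \tfrac{\eps}{1+\eps} \cdot \sqrt{\rho n} \;<\; \tfrac{3}{4}\sqrt{\rho n},
\]
as required.

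\textbf{Part~\ref{it:small-small-b1}.} This is a pure contrapositive of Lemma~\ref{lem:big-approx} Part~\ref{it:big-approx-b1-s1}. Suppose for contradiction that $|B_1| > 1.5 T \sqrt{|S|} \cdot n$. Then with probability $1-o(1)$ the first sub-case of that Lemma gives
\[
\tfrac{|S_1|}{|S|} \;>\; (1-\tfrac{\rho}{4}) \tfrac{|B_1|}{n} \;>\; (1-\tfrac{\rho}{4}) \cdot 1.5 \, T \sqrt{|S|} \;\geq\; \tfrac{15}{16}\cdot 1.5 \, T\sqrt{|S|} \;>\; 1.2\, T \sqrt{|S|},
\]
contradicting the Case~1 hypothesis $|S_1| < 1.2\, T \sqrt{|S|}\cdot |S|$. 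Hence $|B_1| \leq 1.5\, T \sqrt{|S|}\cdot n$ with probability $1-o(1)$. It then remains to substitute $T = \tfrac{\eps}{2t(1+\eps)}\sqrt{\rho/n}$ and the expression for $\sqrt{|S|}$, using $\sqrt{t} = \sqrt{\log n}/\sqrt{\log(1+\beta)}$ and the simplification $\tfrac{\eps}{1+\eps}\sqrt{(1+\eps)/\eps} = 1/\sqrt{1+1/\eps}$, to obtain
\[
1.5 \, T \sqrt{|S|}\cdot n \;=\; \tfrac{3\sqrt{\log(1+\beta)}}{4\sqrt{1+1/\eps}}\cdot n^{3/4}\cdot \sqrt{\log n}\cdot \rho^{-3/4},
\]
which matches the claimed bound.

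\textbf{Main obstacle.} Conceptually the argument is light: both parts are contrapositives of Lemma~\ref{lem:big-approx}. The only nontrivial part is the algebraic bookkeeping in Part~\ref{it:small-small-b1}---tracking the exponents of $n$, $\rho$, $\eps$, and the $\log$ factors coming from $T$, $\sqrt{|S|}$, and $\sqrt{t}$---which is where arithmetic slips are most likely. I should also verify that the union bound over $t = O(\log n)$ buckets in Part~\ref{it:small-small-u'} does not degrade the $1-o(1)$ probability from Lemma~\ref{lem:big-approx}, but since each failure probability there is $\exp(-\Theta(\log^2 n / \rho^2))$ or $O(\rho^2/\log^2 n)$, this is comfortably absorbed.
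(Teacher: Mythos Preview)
Your proposal is correct and follows essentially the same approach as the paper: both parts are contrapositives of the corresponding cases of Lemma~\ref{lem:big-approx}, with a union bound over the $t$ buckets in Part~\ref{it:small-small-u'} and an appeal to the Case~1 hypothesis $|S_1| < 1.2T\sqrt{|S|}\,|S|$ in Part~\ref{it:small-small-b1}. If anything, your handling of the $(1-\rho/4)$ factor is slightly more careful than the paper's, which absorbs it directly into the bucket-size threshold; both yield constants comfortably under the stated $\tfrac{3}{4}\sqrt{\rho n}$ and $1.5T\sqrt{|S|}\,n$ bounds.
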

\begin{proof}
\begin{enumerate}
\item  Using Lemma~\ref{lem:big-approx}, Item~\ref{it:big-approx-bi-si}, we know that if $\vert \tilde{B}_i \vert \geq 1.2T \cdot n$, then with probability at least $1-o(1)$, ${\vert \tilde{S}_i \vert } \geq \left(1-\frac{\rho}{4}\right) \cdot \frac{\vert \tilde{B}_i \vert}{n} \cdot {\vert S \vert}\geq 1.2T \cdot {\vert S \vert} $. Therefore, we have that with probability at least $1-o(1)$, 
 $$ \vert U' \vert \leq \vert \{ v \in \tilde{B}_i\ :\ \vert \tilde{B}_i \vert < 1.2 T \cdot n \} \vert \leq t \cdot1.2 T \cdot n =\frac{3}{5}\cdot \sqrt{\rho n} \cdot \frac{\eps}{1+\eps} < \frac{3}{5} \sqrt{\rho n}$$
\item Using Lemma~\ref{lem:big-approx}, Item~\ref{it:big-approx-b1-s1}, we know that if $\vert B_1 \vert \geq 1.5T \cdot \sqrt{\vert S \vert } \cdot n$, then with probability at least $1-o(1)$, we have $\vert S_1 \vert \geq  \left(1-\frac{\rho}{4}\right) \frac{\vert B_1 \vert}{n} \cdot \vert S \vert >1.5 T \cdot \sqrt{\vert S \vert } \cdot \vert S \vert$. Therefore with probability at least $1-o(1)$, 
\begin{align*}
\vert B_1 \vert &\leq 1.5T \cdot \sqrt{\vert S \vert} \cdot n < \frac{3\cdot \sqrt{\log(1+\beta)}}{4\rho^{3/4}  \sqrt{1+1/\eps}} \cdot n^{3/4} \cdot\log^{1/2}(n) 
\end{align*}
\end{enumerate}
\end{proof}

\begin{corollary}\label{corol:size-of-u}
Let $U:= U' \cup B_1$, then with probability at least $1-o(1)$,
$$ \vert E( U,U) \vert < \frac{9}{25}\cdot {\rho n} + \frac{3\rho^{-11/4}}{2}\cdot {(2+1/\beta+\beta)\sqrt{\log(1+\beta)}} \cdot \sqrt{1+\frac{1}{\eps}}\cdot n^{3/4} \log^{9/4}(n)$$
\end{corollary}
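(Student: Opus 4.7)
The plan is to decompose $E(U,U)$ according to the partition $U = U' \cup B_1$ from \lemmaref{small-small}, and then to separately bound (a) edges internal to $U'$ using the smallness of $|U'|$, and (b) edges touching $B_1$ via the smallness of $|B_1|$ together with a uniform upper bound on $\deg(v)$ for $v \in B_1$ coming from the bucket structure.

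First, since $|E(V_1,V_2)|$ counts ordered adjacent pairs and $U = U' \sqcup B_1$, I would decompose
\[
|E(U,U)| = |E(U',U')| + |E(U',B_1)| + |E(B_1,U')| + |E(B_1,B_1)| \leq |U'|^2 + 2\sum_{v \in B_1}\deg(v),
\]
where the inequality uses $|E(U',U')|\leq |U'|^2$ together with the identity $\sum_{v \in B_1}\deg(v) \geq 2 e(B_1,B_1) + e(U',B_1)$ (with $e$ denoting undirected edges), which yields $2 e(U',B_1) + 2 e(B_1,B_1) \leq 2 \sum_{v \in B_1}\deg(v)$. By Part~\ref{it:small-small-u'} of \lemmaref{small-small}, with probability $1-o(1)$ we have $|U'|^2 \leq (3/5)^{2}\rho n = 9\rho n/25$, which supplies the first summand of the stated bound.

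Next I would control $\deg(v)$ uniformly over $B_1$. By definition, every $v \in B_1$ lies in some $\tilde{B}_i$ with $i < \log_{1+\beta}(6M_{\rho,n}/\beta)+2$, and hence satisfies $\tilde{d}(v) \leq (1+\beta)^{2}\cdot 6M_{\rho,n}/\beta$. Applying \lemmaref{lap2} with $p=6$ to each of the $n$ noise variables $Y_v$ and taking a union bound --- which is valid because the per-variable failure probability $\exp(-\Omega(\log^{7/4} n))$ decays faster than $1/n$ --- gives $|Y_v|\leq 6M_{\rho,n}$ simultaneously for all $v \in V$ with probability $1-o(1)$. Consequently $\deg(v) \leq \tilde{d}(v) + 6M_{\rho,n}$ is bounded by a constant multiple of $M_{\rho,n}(2+1/\beta+\beta)$ for every $v \in B_1$.

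Plugging this uniform degree bound into $2\sum_{v \in B_1}\deg(v) \leq 12 M_{\rho,n}(2+1/\beta+\beta)\,|B_1|$, substituting $|B_1| \leq \frac{3\sqrt{\log(1+\beta)}}{4\rho^{3/4}\sqrt{1+1/\eps}}\,n^{3/4}\log^{1/2}(n)$ from Part~\ref{it:small-small-b1} of \lemmaref{small-small} and $M_{\rho,n} = \log^{7/4}(n)(1+1/\eps)/(3\rho^{2})$, and multiplying out the resulting factors produces the second summand. A final union bound over the $o(1)$ failure events from \lemmaref{small-small} and from the Laplace tail preserves overall success probability $1-o(1)$. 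The main obstacle is precisely this union bound over all $n$ Laplace variables (rather than just the $O(\log n)$ variables covered by \corolref{yi-small}): it succeeds only because the Laplace tail in \lemmaref{lap2} decays super-polylogarithmically, hence much faster than $1/n$.
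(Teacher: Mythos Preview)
Your approach matches the paper's: decompose $|E(U,U)|$ into the $U'$--$U'$ contribution bounded by $|U'|^2$, and the contribution involving $B_1$ bounded via $|B_1|$ times a uniform degree cap on $B_1$, then plug in \lemmaref{small-small} and the expression for $M_{\rho,n}$. The paper writes the decomposition as $|E(U,U)| \leq |U'|^2 + |B_1|\cdot(\text{max degree in }B_1)$ and bounds the max degree directly by $(6M_{\rho,n}/\beta)(1+\beta)^2 = 6M_{\rho,n}(2+1/\beta+\beta)$, without the extra $+6M_{\rho,n}$ noise correction and without the factor of $2$ you (correctly) introduce for the cross-terms $E(U',B_1)\cup E(B_1,U')$.

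Two remarks. First, because of that factor of $2$, your computation yields $3\,\rho^{-11/4}(\cdots)$ rather than the $\tfrac{3}{2}\rho^{-11/4}(\cdots)$ in the stated bound; this is harmless since the corollary is only ever used as $|E(U,U)| = O(\rho n) + \tilde O(n^{3/4})$ downstream, but strictly speaking you do not recover the exact constant written. Second, your explicit union bound over all $n$ Laplace variables is a genuine improvement in rigor: the paper's proof silently equates the true degree with the noisy-degree cap and never invokes a Laplace tail bound here, whereas your argument makes precise why $\deg(v) \le \tilde d(v) + 6M_{\rho,n}$ holds simultaneously for every $v\in B_1$ with probability $1-o(1)$. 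Your observation that $\exp(-\Theta(\log^{7/4}n))$ beats $1/n$ is exactly what is needed, and \corolref{yi-small} alone (which only union-bounds over $t=O(\log n)$ variables) would not suffice.
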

\begin{proof}
Using Lemma~\ref{lem:small-small}, we know that with probability $1-o(1)$,
\begin{align*}
    &\vert E( U,U) \vert\\ &\leq \vert U'\vert^2 + \vert B_1 \vert \cdot (\text{max deg of a vertex in }B_1) \\
    &\leq  \left(\frac{3}{5}\cdot \sqrt{\rho n} \right)^2 +  \left(\frac{3\cdot \sqrt{\log(1+\beta)}}{4 \rho^{3/4}  \sqrt{1+1/\eps}}  \cdot n^{3/4} \cdot\log^{1/2}(n) \right) \cdot \frac{6M_{\rho,n}}{\beta} (1+\beta)^2\\
    &\leq \frac{9}{25}\cdot {\rho n} + \frac{3\rho^{-11/4}}{2}\cdot {(2+1/\beta+\beta)\sqrt{\log(1+\beta)}} \cdot \sqrt{1+\frac{1}{\eps}}\cdot n^{3/4} \log^{9/4}(n) \\
\end{align*}
\end{proof}

Lemma~\ref{lem:alpha-approx} shows that our noisy estimator for approximating the fraction of edges between sufficiently large buckets and small buckets denoted by $\tilde{\alpha}_i$ is good. This is one of our main contributions. We first introduce a claim about the Laplace noise term used in our estimator which we use to bound the noise term in the proof of Lemma~\ref{lem:alpha-approx}. 

\begin{claim}\label{clm:zij-si}
Let $X_i \sim \Lap(6/\eps)$, for every $i \in I$, with probability at least $1-o(1)$, 
\[ \left\vert \frac{X_{i}}{\vert \tilde{S}_i \vert } \right\vert < \frac{10}{3}\left(1+\frac{1}{\eps} \right)\cdot \log^{-\frac{1}{4}}(n) \]
\end{claim}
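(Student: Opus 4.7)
The plan is to combine the definition of the big-bucket set $I$ (which gives a lower bound on $|\tilde{S}_i|$) with the Laplace tail bound from Fact~\ref{fact:lap}, and then take a union bound over the at most $t = \lceil \log_{1+\beta}(n)\rceil$ buckets.

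First I would unpack the parameters to establish a useful lower bound on $|\tilde{S}_i|$ when $i \in I$. By definition, $i \in I$ implies $|\tilde{S}_i| \geq 1.2\,T \cdot |S|$. Substituting
\[
T = \tfrac{1}{2}\sqrt{\tfrac{\rho}{n}} \cdot \tfrac{\eps}{1+\eps} \cdot \tfrac{1}{t}, \qquad |S| = t \cdot \tfrac{\log^2 n}{\rho^2} \cdot \sqrt{\tfrac{n}{\rho}} \cdot \bigl(1+\tfrac{1}{\eps}\bigr),
\]
the factors of $t$, $\sqrt{n/\rho}$, and $(1+\eps)$ telescope (since $\tfrac{\eps}{1+\eps}\bigl(1+\tfrac{1}{\eps}\bigr)=1$), yielding
\[
1.2\,T \cdot |S| \;=\; 0.6 \cdot \tfrac{\log^2(n)}{\rho^2}.
\]
So for every $i \in I$ we have $|\tilde{S}_i| \geq 0.6\,\log^2(n)/\rho^2$.

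Next I would apply the Laplace tail bound. Since $X_i \sim \Lap(6/\eps)$, Fact~\ref{fact:lap} gives $\Pr[|X_i| \geq \ell \cdot 6/\eps] = e^{-\ell}$. Setting the target threshold
\[
\ell \cdot \tfrac{6}{\eps} \;=\; |\tilde{S}_i| \cdot \tfrac{10}{3}\bigl(1+\tfrac{1}{\eps}\bigr)\log^{-1/4}(n),
\]
and using $\tfrac{\eps}{6} \cdot \tfrac{10}{3}(1+\tfrac{1}{\eps}) = \tfrac{5(\eps+1)}{9}$, the corresponding $\ell$ satisfies
\[
\ell \;=\; \tfrac{5(\eps+1)}{9}\,|\tilde{S}_i|\,\log^{-1/4}(n) \;\geq\; \tfrac{5(\eps+1)}{9}\cdot 0.6\cdot\tfrac{\log^2 n}{\rho^2}\cdot\log^{-1/4}(n) \;=\; \tfrac{(\eps+1)}{3\rho^2}\log^{7/4}(n).
\]
Hence for each fixed $i \in I$,
\[
\Pr\!\left[\,\tfrac{|X_i|}{|\tilde{S}_i|} \geq \tfrac{10}{3}\bigl(1+\tfrac{1}{\eps}\bigr)\log^{-1/4}(n)\,\right] \;\leq\; \exp\!\left(-\tfrac{(\eps+1)}{3\rho^2}\log^{7/4}(n)\right).
\]

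Finally, I would take a union bound over the at most $t = \lceil \log_{1+\beta}(n)\rceil = O(\log n)$ indices in $I$, giving failure probability at most
\[
\exp\!\left(\ln t \;-\; \tfrac{(\eps+1)}{3\rho^2}\log^{7/4}(n)\right) \;=\; o(1),
\]
for constants $\rho,\beta$ and $\eps^{-1}=o(\log^{1/4} n)$, which yields the desired bound simultaneously for every $i \in I$. The only subtlety is the bookkeeping in step one to verify that the lower bound on $|\tilde{S}_i|$ cleanly produces a $\log^{7/4}(n)$ exponent that dominates $\ln t$; everything else is a direct application of the Laplace tail bound and a union bound.
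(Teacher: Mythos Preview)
Your proposal is correct and follows essentially the same approach as the paper: both use the lower bound $|\tilde{S}_i| \geq 1.2\,T\,|S|$ for $i\in I$, the Laplace tail bound (Fact~\ref{fact:lap}), and a union bound over the $t$ buckets, arriving at the identical failure probability $\exp\bigl(\ln t - \tfrac{(1+\eps)}{3\rho^2}\log^{7/4}(n)\bigr)$. The only cosmetic difference is that the paper routes the tail computation through the intermediate quantity $6M_{\rho,n}$ (via Corollary~\ref{corol:yi-small}) and then bounds $6M_{\rho,n}/|\tilde{S}_i|$, whereas you substitute the parameters directly; the arithmetic is the same.
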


\begin{proof}
 Using Corollary \ref{corol:yi-small}, we have that with probability $1- \exp\left(\ln \lceil\log_{(1+\beta)}(n)\rceil - \frac{\log^{7/4} (n) }{3\rho^2} \cdot \left(1+{\eps}\right)\right)$, for every $i \in I$ such that $X_{i} \sim \Lap(6/\eps)$,
\begin{align*}
    \left\vert \frac{X_{i}}{\vert \tilde{S}_i \vert} \right\vert &< \frac{6M_{\rho,n}}{\vert \tilde{S}_i\vert} \leq \frac{10}{3}\left(1+\frac{1}{\eps} \right)\cdot \log^{-\frac{1}{4}}(n)
\end{align*}

By union bound, 
\begin{align*}
    \Pr\left[ \exists\ i:\ \frac{X_{i}}{\vert \tilde{S}_i \vert} >\frac{10}{3}\left(1+\frac{1}{\eps} \right)\cdot \log^{-\frac{1}{4}}(n)\right] 
    &\leq \lceil \log_{1+\beta}(n) \rceil  \exp\left(\ln \lceil\log_{(1+\beta)}(n)\rceil - \frac{\log^{7/4} (n) }{3\rho^2} \cdot \left(1+{\eps}\right)\right)\\
    &= \exp\left(2\ln \lceil \log_{(1+\beta)}(n) \rceil- \frac{\log^{7/4} (n) }{3\rho^2} \cdot  \left(1+{\eps}\right)\right)
\end{align*}
\end{proof}

 \begin{lemma}\label{lem:alpha-approx}
With probability at least $1-o(1)$, for every $i \in I$, for $\alpha_i:=\frac{\vert E'_i \vert}{\vert E_i \vert}$, and 
$\tilde{\alpha}_i := \frac{W_i}{\tilde{S}_i}$ we have

\begin{enumerate}
\item \label{it:alpha-approx-pt1} 
$ \vert \tilde{\alpha}_i - \alpha_i \vert \leq \frac{\rho}{4}\alpha_i - \frac{10}{3}\left(1+\frac{1}{\eps} \right) \log^{-1/4}(n)$, if $\alpha_i \geq \rho/8$, or
\item \label{it:alpha-approx-pt2} $\vert \tilde{\alpha}_i - \alpha_i \vert > \rho/16-\frac{10}{3}\left(1+\frac{1}{\eps} \right)\log^{-1/4}(n)$, 
if $\alpha_i < \rho/8$.
\end{enumerate}

\end{lemma}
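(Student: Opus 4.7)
The plan is to decompose $\tilde{\alpha}_i - \alpha_i$ into three independent error sources. Writing $\tilde{\alpha}_i = \hat{\alpha}_i + Z_i/|\tilde{S}_i|$ where $\hat{\alpha}_i := \frac{1}{|\tilde{S}_i|}\sum_{v \in \tilde{S}_i} X(v)$ is the noise-free estimator, first I would bound the Laplace noise using Claim~\ref{clm:zij-si}: with probability $1-o(1)$, $|Z_i/|\tilde{S}_i|| < \frac{10}{3}(1+1/\eps)\log^{-1/4}(n)$ simultaneously for all $i \in I$. This accounts for the trailing noise term in both parts of the statement, and reduces the task to bounding $|\hat{\alpha}_i - \alpha_i|$ by $\frac{\rho}{4}\alpha_i$ in the first case and by $\rho/16$ in the second.

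Next I would introduce the intermediate quantity $\alpha_i^\star := \frac{1}{|\tilde{B}_i|}\sum_{v \in \tilde{B}_i}\Pr_{r(v)}[r(v) \in U]$, the unweighted analog of $\alpha_i$, which is precisely what $\hat{\alpha}_i$ naturally estimates. To compare $\alpha_i^\star$ with $\alpha_i = |E_i'|/|E_i|$, I would invoke Corollary~\ref{corol:yi-small}: on the good event $|Y_v| \leq 6M_{\rho,n}$ for all $v$, every $v \in \tilde{B}_i$ with $i > \log_{1+\beta}(6M_{\rho,n}/\beta)+2$ satisfies $\deg(v) \in ((1+\beta)^{i-2},(1+\beta)^{i+1})$. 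Thus all true degrees in $\tilde{B}_i$ lie in a multiplicative window of width $(1+\beta)^3$, so the degree-weighted average $\alpha_i$ and the unweighted average $\alpha_i^\star$ agree up to a factor $(1 \pm \rho/16)$ once $\beta$ is chosen small enough in terms of $\rho$.

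For the sampling error $|\hat{\alpha}_i - \alpha_i^\star|$ I would condition on the Laplace noise $\{Y_v\}$ (fixing $\tilde{B}_i$) and then use that $\tilde{S}_i = S \cap \tilde{B}_i$, where $S$ is sampled uniformly without replacement, while the neighbor choices $r(v)$ are independent. Because $i \in I$ enforces $|\tilde{S}_i| \geq 1.2\,T|S| = \Omega(\log^2 n/\rho^2)$ (by Lemma~\ref{lem:big-approx}), Chernoff together with the without-replacement transfer of Fact~\ref{fact:conc} gives $(1 \pm \rho/16)$ multiplicative concentration of $\hat{\alpha}_i$ around $\alpha_i^\star$ when $\alpha_i^\star \geq \rho/16$, and an additive $\rho/16$ bound otherwise; a union bound over the $O(\log n)$ indices $i \in I$ gives simultaneous control. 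Composing the three error bounds (taking $\beta \leq \rho/c$ for a large enough constant $c$) produces multiplicative $(1\pm \rho/4)\alpha_i$ error when $\alpha_i \geq \rho/8$ and additive $\rho/16$ error when $\alpha_i < \rho/8$, which, after re-adding the Laplace slack from step one, are exactly the two claimed bounds.

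The main obstacle will be step two: controlling the multiplicative distortion between $\alpha_i^\star$ and $\alpha_i$ when the actual-degree window in $\tilde{B}_i$ is slightly wider than the nominal $(1+\beta)$ factor due to the Laplace perturbation of bucket boundaries. This requires carefully exploiting the threshold $i > \log_{1+\beta}(6M_{\rho,n}/\beta)+2$ built into the definition of $I$, so that $6M_{\rho,n}$ is negligible compared with $\beta(1+\beta)^{i-1}$, and then tracking that the unweighted and degree-weighted averages stay within a $(1\pm\rho/16)$ factor. A secondary subtlety is the correlation between the sampling set $\tilde{S}_i$ and the random Laplace noise that defines $\tilde{B}_i$; this is handled by conditioning on $\{Y_v\}_{v \in V}$ before invoking any concentration inequality.
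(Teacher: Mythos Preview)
Your proposal is correct and in fact handles one step more carefully than the paper does. The paper's proof writes $\alpha_i^* := \hat{\alpha}_i$ (your noise-free estimator) and simply asserts $\E[\alpha_i^*] = \alpha_i$, then applies a Chernoff bound around this mean. As you correctly observe, that assertion is not literally true: the expectation of $\hat{\alpha}_i$ over uniform vertex sampling from $\tilde{B}_i$ followed by the random-neighbor choice is the \emph{unweighted} bucket average $\alpha_i^\star = \frac{1}{|\tilde{B}_i|}\sum_{v\in\tilde{B}_i}\deg_U(v)/\deg(v)$, not the degree-weighted $\alpha_i = |E_i'|/|E_i|$. The paper silently absorbs this discrepancy, which is harmless only because all true degrees in $\tilde{B}_i$ lie in a $(1+\beta)^3$ window (their Lemma~\ref{lem:big-noisy-bucket-same}); your explicit introduction of $\alpha_i^\star$ together with the degree-window comparison is the rigorous way to justify it. Apart from this, the two arguments coincide: bound the Laplace term via Claim~\ref{clm:zij-si}, apply Chernoff (with Fact~\ref{fact:conc} for the without-replacement transfer) using $|\tilde{S}_i|\geq 1.2T|S|=\Omega(\log^2 n/\rho^2)$, and union-bound over the $O(\log_{1+\beta} n)$ indices.

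One small caveat on constants: with the paper's choice $\beta \leq \rho/8$ the degree-window factor $(1+\beta)^3-1$ is of order $3\rho/8$, so composing the three errors does not land exactly at $(\rho/4)\alpha_i$ but rather at $(\rho/4 + O(\beta))\alpha_i$. Your remark that one may need $\beta \leq \rho/c$ for a larger constant $c$ is on point; in the paper this slack is ultimately absorbed by the final rescaling $\rho \mapsto \rho/4$ in Lemmas~\ref{lem:final-avg-deg-case1} and~\ref{lem:final-avg-deg-case2}.
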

\begin{proof} We first prove the claim in Part~\ref{it:alpha-approx-pt1}, and then the claim in Part~\ref{it:alpha-approx-pt2}.
\begin{enumerate}
    \item 
For a fixed $i\in \{1,\ldots, \log_{1+\beta}(n)\}$, we define BAD$_i$ to be the event that all assumptions hold, i.e., $i \in I$ and $\alpha_i \geq \rho/8$; but $ \vert \tilde{\alpha}_i - \alpha_i \vert > \frac{\rho}{4}\alpha_i -\frac{10}{3}\left(1+\frac{1}{\eps} \right)\log^{-1/4}(n)$. Then we can define BAD to be the event that there exists an $i$ such that BAD$_i$ occurs. By union bound,
\begin{align}
\Pr[\text{BAD}]\leq \lceil \log_{1+\beta}(n) \rceil \cdot \max_i\ \Pr[\text{BAD}_i] 
\end{align}
Now we just need to give an upper bound for the probability of BAD$_i$ occurring. Observe that, 
\begin{align}
&\Pr[\text{BAD}_i] \nonumber \\ &\leq \Pr\left[\left\vert\frac{Z_i}{\vert \tilde{S}_i \vert } \right\vert \geq \frac{10}{3}\left(1+\frac{1}{\eps} \right)\cdot \log^{-\frac{1}{4}}(n)\right]
+ \Pr\left[\text{BAD}_i\ |\ \left\vert\frac{Z_i}{\vert \tilde{S}_i \vert } \right\vert \leq \frac{10}{3}\left(1+\frac{1}{\eps} \right)\cdot \log^{-\frac{1}{4}}(n)\right] \label{eq:bad-sum}
\end{align}
From Claim \ref{clm:zij-si}, we already have an upper bound for the first term in Equation \ref{eq:bad-sum}. For the rest of this proof, we will focus on upper bounding the second term.

Recall from Algorithm \textbf{NoisyBigSmallEdgeCount} (see Algorithm~\ref{alg:bigsmall}), for every $i \in I$, we defined $X(v)$ as a r.v. for every $v \in \tilde{S}_i$, defined as 
\[
X(v) =\begin{cases}
1 & \text{if random neighbor of }v\text{ belongs to a small noisy bucket}\\
0 & \text{otherwise}
\end{cases}
\]

Also recall that $ W_i := \sum_{v \in \tilde{S}_i} X(v) + Z_i$  where $Z_i \sim \Lap(6/\eps)$ and $\tilde{\alpha}_i := \frac{W_i}{\vert \tilde{S}_i \vert }$. We define $\alpha_i^* := \tilde{\alpha}_i - \frac{Z_i}{\vert \tilde{S}_i \vert } = \frac{\sum_{v \in \tilde{S}_i} X(v)}{\vert \tilde{S}_i \vert}$.
\begin{claim}\label{clm:alpha-star-chernoff}
Let $\alpha_i \geq \rho/8$, and $\alpha_i^* := \frac{\sum_{v \in \tilde{S}_i} X(v)}{\vert \tilde{S}_i \vert}$. With probability at least $1-o(1)$, 
\[ \vert \alpha_i^* - \alpha_i \vert \leq \frac{\rho}{4} \cdot \alpha_i\]
\end{claim}
\begin{proof}
Observe that $\E[\alpha^*_i]= \alpha_i$, therefore, using Chernoff bounds, 
\begin{align}
\Pr \left[ \vert \alpha_i^* - \alpha_i \vert > (\rho/4)\alpha_i  \right] 
    &=\Pr \left[ \left\vert \sum_{v \in \tilde{S}_i} X(v) -\E \sum_{v \in \tilde{S}_i} X(v) \right\vert \geq  \frac{\rho}{4}  \E \sum_{v \in \tilde{S}_i} X(v) \right]  \\
    &\leq 2 \exp \left(- (\rho^2/8)\cdot \alpha^2_i \cdot \vert \tilde{S}_i \vert \right) \\
    &\leq 2 \exp\left( - \left( \frac{\rho^2}{8}\right) \cdot  \left( \frac{\rho^2}{64}\right)  \cdot \frac{3\log^2(n)}{5\rho^2}\right) 
    \label{eq:chernoff-rhs-alphai}\\
    &= \exp\left(\ln (2) -  \frac{3\rho^{2}}{2560}  \cdot {\log^2(n)}\right) 
\end{align}

Where we obtain Step~\ref{eq:chernoff-rhs-alphai} by using the assumption that $\alpha_i \geq \rho/8$, and since $i\in I$, we know $\vert \tilde{S}_i \vert \geq 1.2T \vert S \vert = \frac{3\log^2(n)}{5\rho^2}$. 
\end{proof}
Replacing $\alpha^*_i$ with $\tilde{\alpha}_i-\frac{Z_i}{\vert \tilde{S}_i \vert}$ and conditioning on $\left\vert\frac{Z_i}{\tilde{S}_i} \right\vert \leq \frac{10}{3}\left(1+\frac{1}{\eps} \right) \cdot \log^{-\frac{1}{4}}(n)$, 
\begin{align}
    \Pr \left[ \vert \tilde{\alpha}_i - \alpha_i \vert > \frac{\rho}{4}\alpha_i - \frac{10}{3}\left(1+\frac{1}{\eps} \right)\log^{-1/4}(n) \right] \leq \exp\left(\ln (2) -  \frac{3\rho^{2}}{2560}  \cdot {\log^2(n)}\right) \label{eq:alpha-zi}
\end{align}

Now we can bound $\Pr[\text{BAD}_i]$ as follows,
\begin{align*}
&\Pr[\text{BAD}_i] \nonumber &\\ 
&\leq \Pr\left[\left\vert\frac{Z_i}{\tilde{S}_i} \right\vert \geq \frac{10}{3}\left(1+\frac{1}{\eps} \right)\cdot \log^{-\frac{1}{4}}(n)\right]
+ \Pr\left[\text{BAD}_i\ |\ \left\vert\frac{Z_i}{\tilde{S}_i} \right\vert \leq \frac{10}{3}\left(1+\frac{1}{\eps} \right)\cdot \log^{-\frac{1}{4}}(n)\right] &\\
&\leq  \exp\left(2\ln \lceil \log_{(1+\beta)}(n) \rceil- \frac{\log^{7/4} (n) }{3\rho^2} \cdot  \left(1+{\eps}\right)\right) + \Pr\left[\text{BAD}_i\ |\ \left\vert\frac{Z_i}{\tilde{S}_i} \right\vert \leq \frac{10}{3}\left(1+\frac{1}{\eps} \right)\cdot \log^{-\frac{1}{4}}(n)\right] &\text{From Claim~\ref{clm:zij-si}}\\
&\leq \exp\left(2\log \lceil \log_{(1+\beta)}(n) \rceil- \frac{\log^{7/4} (n) }{3\rho^2} \left(1+{\eps}\right)\right) + \exp\left(\ln (2) -  \frac{3\rho^{2}}{2560}  \cdot {\log^2(n)}\right) &\text{Using~Eq.~\ref{eq:alpha-zi}}
\end{align*}

Finally, 
\begin{align}
&\Pr[\text{BAD}] \nonumber\\ 
&\leq \lceil \log_{1+\beta}(n) \rceil \cdot \max_i\ \Pr[\text{BAD}_i] \\
&\leq \lceil \log_{1+\beta}(n) \rceil \cdot \left(\exp\left(2\log \lceil \log_{(1+\beta)}(n) \rceil- \frac{\log^{7/4} (n) }{3\rho^2} \left(1+{\eps}\right)\right) + \exp\left(\ln (2) -  \frac{3\rho^{2}}{2560}  \cdot {\log^2(n)}\right)\right)
\end{align}

Part~\ref{it:alpha-approx-pt1} of the theorem statement follows.
\item Next we prove Part~\ref{it:alpha-approx-pt2} of the statement. 

First, consider $\alpha^*_i=\tilde{\alpha}_i - \frac{Z_i}{\vert \tilde{S}_i \vert } = \frac{\sum_{v \in \tilde{S}_i} X(v)}{\vert \tilde{S}_i \vert} $, by Chernoff bounds, 
\begin{align*}
    \Pr[\vert \alpha_i^* - \alpha_i \vert > \rho/16] &= \Pr\left[\left\vert \sum_{v \in \tilde{S}_i} X(v) -\E[\sum_{v \in \tilde{S}_i} X(v)] \right\vert > (\rho/16)\cdot \vert \tilde{S}_i \vert \right]  \\
    &\leq 2 \exp \left(-\frac{2 ((\rho/16) \cdot \vert \tilde{S}_i \vert )^2}{\vert \tilde{S}_i \vert}\right)\\
    &= 2 \exp \left( -\frac{3}{640} \cdot \log^2(n)\right)
\end{align*}
where we used the fact that $\vert \tilde{S}_i \vert \geq \frac{3\log^2(n)}{5\rho^2}$. Replacing $\alpha^*_i$ with $\tilde{\alpha}_i-\frac{Z_i}{\vert \tilde{S}_i \vert}$, 
\begin{align}
    \Pr\left[\left\vert \tilde{\alpha}_i -\frac{Z_i}{\vert S_i \vert} - \alpha_i \right\vert > \rho/16\right] &\leq \exp \left(\ln 2 -\frac{3}{640} \cdot \log^2(n)\right)
\end{align}
Conditioning on $\left\vert\frac{Z_i}{\tilde{S}_i} \right\vert \leq \frac{10}{3}\left(1+\frac{1}{\eps} \right) \cdot \log^{-\frac{1}{4}}(n)$,  
\begin{align}
    \Pr \left[ \vert \tilde{\alpha}_i - \alpha_i \vert > \frac{\rho}{16} - \frac{10}{3}\left(1+\frac{1}{\eps} \right) \cdot \log^{-\frac{1}{4}}(n) \right] \leq \exp \left(\ln 2 - \frac{3}{640}  \cdot \log^2(n)\right) \label{eq:alpha-zi-case2}
\end{align}
Now, as before, 
\begin{align}
&\Pr[\vert \tilde{\alpha}_i - \alpha_i \vert > \rho/16- \frac{10}{3}\left(1+\frac{1}{\eps} \right) \cdot \log^{-\frac{1}{4}}(n)] \nonumber& \\
&\leq\Pr\left[\left\vert\frac{Z_i}{\tilde{S}_i} \right\vert \geq \frac{10}{3}\left(1+\frac{1}{\eps} \right) \cdot \log^{-\frac{1}{4}}(n)\right] \nonumber&\\
&+ \Pr\left[\vert \tilde{\alpha}_i - \alpha_i \vert > \frac{\rho}{16} - \frac{10}{3}\left(1+\frac{1}{\eps} \right) \cdot \log^{-\frac{1}{4}}(n)\ |\ \left\vert\frac{Z_i}{\tilde{S}_i} \right\vert \leq \frac{10}{3}\left(1+\frac{1}{\eps} \right) \cdot \log^{-\frac{1}{4}}(n)\right]&\\
&\leq \exp\left(2\ln \lceil \log_{(1+\beta)}(n) \rceil- \frac{\log^{7/4} (n) }{3\rho^2} (1+\eps)\right) \nonumber&\\&+ \Pr\left[\vert \tilde{\alpha}_i - \alpha_i \vert > \frac{\rho}{16} - \frac{10}{3}\left(1+\frac{1}{\eps} \right) \cdot \log^{-\frac{1}{4}}(n)\ |\ \left\vert\frac{Z_i}{\tilde{S}_i} \right\vert \leq \frac{10}{3}\left(1+\frac{1}{\eps} \right) \cdot \log^{-\frac{1}{4}}(n)\right] &\label{eq:alpha-case2-clm7}\\
&\leq \exp\left(2\ln \lceil \log_{(1+\beta)}(n) \rceil- \frac{\log^{7/4} (n) }{3\rho^2}(1+\eps) \right) +  \exp \left(\ln 2 - \frac{3}{640} \cdot \log^2(n)\right) &\label{eq:alpha-case2-zi}
\end{align}
where Eq.~\ref{eq:alpha-case2-clm7} follows from Claim~\ref{clm:zij-si}, and Eq.~\ref{eq:alpha-case2-zi} follows from substituting Eq.~\ref{eq:alpha-zi-case2}. 
Finally, by a union bound, the probability that there exists an $i$ such that $\vert \tilde{\alpha}_i - \alpha_i \vert > \rho/16$ is at most $$\lceil \log_{1+\beta}(n) \rceil \cdot  \left(\exp\left(2\log \lceil \log_{(1+\beta)}(n) \rceil- \frac{\log^{7/4} (n) }{3\rho^2} (1+\eps) \right) +  \exp \left(\ln 2 - \frac{3}{640} \cdot \log^2(n)\right)\right) \;.$$ 
\end{enumerate}

\end{proof}

Corollary~\ref{corol:alpha-frac} directly follows from Lemma~\ref{lem:alpha-approx}. 

In the following lemma, we show that the actual degrees of vertices in noisy buckets $\tilde{B}_i$ such that $i> \log_{1+\beta} \left(\frac{6M_{\rho,n}}{\beta}\right)+2$ are close to the noisy degrees.   
\begin{lemma}\label{lem:big-noisy-bucket-same}
For noisy bucket $\tilde{B}_i$ such that $i > \log_{1+\beta} \left(\frac{6M_{\rho,n}}{\beta}\right)+2$,
with probability at least $1-o(1)$, we have 
$$ (1+\beta)^{i-2} < \deg(v) \leq (1+\beta)^{i+1}\;.$$
\end{lemma}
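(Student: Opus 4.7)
The plan is to unfold the definition of the noisy bucket $\tilde{B}_i$ and relate the true degree $\deg(v)$ to $\tilde{d}(v)$ via the Laplace noise $Y_v$, then use the tail bounds already established (Lemma~\ref{lem:lap2} / Corollary~\ref{corol:yi-small}) to control $|Y_v|$ uniformly. The key observation is that the threshold $i > \log_{1+\beta}(6M_{\rho,n}/\beta) + 2$ is chosen precisely so that the Laplace noise is dominated by the gap between consecutive bucket boundaries.

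First, I would fix any $v \in \tilde{B}_i$ with $i > \log_{1+\beta}(6M_{\rho,n}/\beta) + 2$. By definition of the noisy bucket, $\tilde{d}(v) \in ((1+\beta)^{i-1}, (1+\beta)^i]$, and $\deg(v) = \tilde{d}(v) - Y_v$ where $Y_v \sim \Lap(6/\eps)$. Applying Lemma~\ref{lem:lap2} with $p = 6$ and then union-bounding over all $n$ vertices, the event $|Y_v| < 6 M_{\rho,n}$ holds simultaneously for every $v \in V(G)$ with probability at least $1 - n \cdot \exp(-\tfrac{\log^{7/4}(n)}{3\rho^2}(1+\eps))$, which is $1 - o(1)$ since $\log^{7/4}(n)$ dominates $\log n$ for large $n$.

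Next, I would use the threshold condition on $i$ to translate the bound on $|Y_v|$ into a bucket-gap bound. Since $i - 2 > \log_{1+\beta}(6M_{\rho,n}/\beta)$, we have $\beta(1+\beta)^{i-2} > 6 M_{\rho,n} > |Y_v|$. The lower bound then follows from
\[
\deg(v) = \tilde{d}(v) - Y_v > (1+\beta)^{i-1} - \beta(1+\beta)^{i-2} = (1+\beta)^{i-2},
\]
using the identity $(1+\beta)^{i-1} - \beta(1+\beta)^{i-2} = (1+\beta)^{i-2}$. For the upper bound,
\[
\deg(v) = \tilde{d}(v) - Y_v \leq (1+\beta)^i + |Y_v| < (1+\beta)^i + \beta(1+\beta)^{i-2} \leq (1+\beta)^i + \beta(1+\beta)^i = (1+\beta)^{i+1}.
\]

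The main (minor) obstacle is bookkeeping: Corollary~\ref{corol:yi-small} as stated runs over $i = 1, \ldots, \lceil \log_{1+\beta} n \rceil$ many bucket-indexed Laplace variables, whereas here I need the bound to hold for all $n$ per-vertex noises $Y_v$. This is a routine union bound that still yields $1 - o(1)$ because the Laplace tail decays super-polynomially in $M_{\rho,n}$. Everything else is purely algebraic and exploits the explicit choice of the threshold $K = 2 + \log_{1+\beta}\lceil 6 M_{\rho,n}/\beta \rceil$, which was designed exactly so that the noise cannot push a vertex more than one bucket away from its true position.
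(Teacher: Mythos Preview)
Your proof is correct and follows essentially the same route as the paper: bound $|Y_v|$ by $6M_{\rho,n}$ via the Laplace tail bound, convert the threshold condition on $i$ into $6M_{\rho,n} < \beta(1+\beta)^{i-2}$, and then do the same two-line algebra for the lower and upper bounds. Your remark about the union bound over all $n$ vertices (rather than over $\lceil \log_{1+\beta} n\rceil$ bucket indices as in Corollary~\ref{corol:yi-small}) is in fact a slight tightening of the paper's presentation, and your verification that the failure probability is still $o(1)$ is correct.
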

\begin{proof}
Using Corollary \ref{corol:yi-small}, with probability $1-o(1)$, we have that 
\begin{align*}
    (1+\beta)^{i-1} - 6M_{\rho,n} < \deg(v) \leq (1+\beta)^i+6 M_{\rho,n}
\end{align*}
Also, by our assumption of $i > \log_{1+\beta} \left(\frac{6M_{\rho,n}}{\beta}\right)+2$, we have that $6M_{\rho,n} < \beta(1+\beta)^{i-2}$. Therefore, 
\begin{align*}
    &(1+\beta)^{i-1} - \beta(1+\beta)^{i-2} < \deg(v) \leq (1+\beta)^i+\beta(1+\beta)^{i-2} \\
    &(1+\beta)^{i-2} < \deg(v) \leq (1+\beta)^{i+1} 
\end{align*}
\end{proof}

So far, we have shown that with high probability, the approximation of edges between the different types of buckets is good. Lemma~\ref{lem:final-avg-deg-case1} shows that the average degree of the graph is estimated well for Case 1.  

\begin{lemma}\label{lem:final-avg-deg-case1}
For every $\rho < 1/4$, $\beta \leq \rho/8$, and $\eps^{-1} =  o(\log^{1/4}(n))$, for sufficiently large $n$, and for the case when $\vert S_1 \vert < 1.2T \cdot \sqrt{\vert S \vert}\cdot \vert S \vert $, the main algorithm~(see Algorithm~\ref{alg:1+e}) outputs a value $\tilde{d}$ such that with probability at least $1-o(1)$, it holds that 
\[\left(1 -\rho \right) \bar{d} \leq \tilde{d} \leq \left(1 + \rho\right)  \bar{d} \]
\end{lemma}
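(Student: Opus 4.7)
The plan is to compare the algorithm's output $\tilde{d}=\frac{1}{|S|}\sum_{i\in I}|\tilde{S}_i|(1+\tilde{\alpha}_i)(1+\beta)^{i}$ against an intermediate ``ideal'' quantity $D^* := \frac{1}{n}\sum_{i\in I}(1+\alpha_i)|E_i|$ and then against the true average degree $\bar d$. The first comparison will be multiplicative and will follow by stringing together the three high-probability approximation results already established; the second will be additive and follows from the fact that the only edges $D^*$ misses are those inside $U=U'\cup B_1$, which are bounded by Corollary~\ref{corol:size-of-u}. All error events will be dominated via a union bound, each being $1-o(1)$.

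First I would write $D^*=\frac{1}{n}\sum_{i\in I}\sum_{v\in\tilde B_i}(1+\alpha_v)\deg(v)$ and observe that this counts edges inside $V\setminus U$ twice and edges between $V\setminus U$ and $U$ twice as well, so $D^* = \bar d - 2|E(U,U)|/n$. Since we are in Case~1, $|S_1|<1.2T\sqrt{|S|}\cdot|S|$, so Lemma~\ref{lem:small-small} applies to both $U'$ and $B_1$, and Corollary~\ref{corol:size-of-u} then gives $|E(U,U)|/n \le \frac{9}{25}\rho + o(1)$ for sufficiently large $n$ (using the assumption $\eps^{-1}=o(\log^{1/4}n)$). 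With $\bar d\ge 1$ this additive slack is at most $\rho/2$ of $\bar d$, say.

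Next I would bound $\tilde d/D^*$ term by term for each $i\in I$. For the sample fraction, Lemma~\ref{lem:big-approx}(Part~\ref{it:big-approx-bi-si}) gives $|\tilde S_i|/|S|=(1\pm\rho/4)\,|\tilde B_i|/n$. For the degree proxy, Lemma~\ref{lem:big-noisy-bucket-same} (valid since every $i\in I$ satisfies $i>\log_{1+\beta}(6M_{\rho,n}/\beta)+2$) gives $(1+\beta)^{i-2}<\deg(v)\le(1+\beta)^{i+1}$ for each $v\in\tilde B_i$; summing over $\tilde B_i$ yields $(1+\beta)^{-2}|E_i|\le|\tilde B_i|(1+\beta)^i\le(1+\beta)^{2}|E_i|$, which is $(1\pm O(\beta))|E_i|$. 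For the $(1+\tilde\alpha_i)/(1+\alpha_i)$ factor, Corollary~\ref{corol:alpha-frac} gives two cases: if $\alpha_i\ge\rho/8$ then $\tilde\alpha_i=(1\pm\rho/4)\alpha_i$, which (using $\alpha_i/(1+\alpha_i)\le 1$) implies $1+\tilde\alpha_i=(1\pm\rho/4)(1+\alpha_i)$; if $\alpha_i\le\rho/8$ then $\tilde\alpha_i\le\rho/4$ and $\alpha_i\le\rho/8$, so the ratio lies in $[1/(1+\rho/8),\,1+\rho/4]\subseteq(1\pm\rho/4)$. Combining, for every $i\in I$,
\[
\frac{|\tilde S_i|(1+\tilde\alpha_i)(1+\beta)^i/|S|}{(1+\alpha_i)|E_i|/n} = (1\pm\rho/4)^{2}(1\pm O(\beta)).
\]
Summing over $i\in I$ (all signs consistent in the worst case) and using $\beta\le\rho/8$ gives $\tilde d=(1\pm\rho/2)D^*$ after absorbing constants.

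Combining the two steps, $|\tilde d-\bar d|\le|\tilde d-D^*|+|D^*-\bar d|\le(\rho/2)D^*+2|E(U,U)|/n\le(\rho/2)\bar d+(\rho/2)\bar d=\rho\bar d$, as desired. The main obstacle will be bookkeeping: verifying that the accumulated constants from the three separate $\rho/4$-style slacks together with the $(1+\beta)^{\pm 2}$ factor actually fit under $\rho$ (rather than a larger multiple), which is where the choice $\beta\le\rho/8$ and $\rho<1/4$ is used; and confirming that the additive error $2|E(U,U)|/n$, whose polylog$\cdot n^{-1/4}$ component requires $\eps^{-1}=o(\log^{1/4}n)$ to keep $M_{\rho,n}$ manageable, is truly $o(1)$ plus a small fraction of $\rho$. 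Finally, a single union bound over Lemmas~\ref{lem:big-approx}, \ref{lem:small-small}, \ref{lem:big-noisy-bucket-same}, and Corollary~\ref{corol:alpha-frac} keeps the overall failure probability $o(1)$.
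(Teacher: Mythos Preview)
Your plan is correct and mirrors the paper's proof almost exactly: the paper uses the same intermediate quantity $\frac{1}{n}\sum_{i\in I}(1+\alpha_i)|E_i|$ (kept inline rather than named $D^*$), applies Lemma~\ref{lem:big-approx}, Lemma~\ref{lem:big-noisy-bucket-same}, and Corollary~\ref{corol:alpha-frac} term-by-term to get the multiplicative comparison, and then uses Corollary~\ref{corol:size-of-u} for the additive gap $2|E(U,U)|/n$. On your stated ``main obstacle,'' the constants do \emph{not} fit under $\rho$ directly (for instance $2|E(U,U)|/n \approx \tfrac{18}{25}\rho$ already exceeds your budgeted $\rho/2$, and the multiplicative factors compound to $(1\pm\rho/4)^2(1\pm\rho/8)^2 \approx 1\pm\tfrac{3\rho}{2}$); the paper resolves this by proving $\tilde d = (1\pm 4\rho)\bar d$ and then rescaling $\rho\mapsto\rho/4$ at the end.
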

\begin{proof}
Recall that $E_i$ is the set of edges consisting of ordered pairs of vertices such that the first vertex is in noisy bucket $\tilde{B}_i$. Using Lemma~\ref{lem:big-noisy-bucket-same}, for $i > \log_{1+\beta} \left(\frac{6M_{\rho,n}}{\beta}\right)+2$, we have that with probability at least $1-o(1)$,
\begin{align}
     \vert \tilde{B}_i \vert \cdot (1+\beta)^{i-2} < \vert E_i \vert < \vert \tilde{B}_i \vert \cdot (1+\beta)^{i+1} \label{eq:B-i-E-i}
\end{align}

Since the noisy buckets partition the set of edges, observe that
\begin{align}
    \bar{d} n &= 2 \vert E(V \setminus U, V \setminus U) \vert + 2 \vert E(V \setminus U, U )\vert +2 \vert E(U,U)\vert  \nonumber\\
    &\leq 2 \vert E(V\setminus U, V \setminus U) \vert + 2 \vert E(V\setminus U, U )\vert + \vert U\vert^2 \label{eq:avg-deg-edges-num}
\end{align}

Also, 
\begin{align}
    \sum_{i \in I} \vert E'_i\vert = \vert E(V \setminus U, U) \vert \label{eq:sum-Ti} \\
    \sum_{i \in I} \vert E_i \setminus E'_i \vert = 2 \vert E(V \setminus U, V \setminus U) \vert  \label{eq:sum-Ei-Ti} 
\end{align}

Thus with high probability, the following holds,

\begin{align*} 
    \tilde{d} 
    &= \frac{1}{\vert S \vert } \sum_{i \in I} \vert \tilde{S}_i \vert \cdot (1+\tilde{\alpha}_i) \cdot (1+\beta)^{i} &\\
    &\leq \frac{1}{n} \cdot \sum_{i \in I} (1+\tilde{\alpha}_i) \cdot\left(1+\frac{\rho}{4} \right) \cdot \vert \tilde{B}_i \vert \cdot (1+\beta)^i &\text{By Lemma}~\ref{lem:big-approx}\\
    &\leq \frac{(1+\rho/4)}{n} \cdot \sum_{i \in I} (1+\tilde{\alpha}_i) \cdot (1+\beta)^2 \cdot \vert E_i \vert  &\text{Using Equation}~\ref{eq:B-i-E-i}\\
    &\leq \frac{(1+\rho/4)(1+\beta)^2}{n} \cdot \bigl( \sum_{\substack{i \in I\\ \alpha_i \geq \rho/8}} (1+(1+\rho/4) \alpha_i) \cdot \vert E_i \vert +\sum_{\substack{i \in I\\ \alpha_i< \rho/8}} (1+\rho/4) \cdot \vert E_i \vert  \bigr)& \text{Using Corollary}~\ref{corol:alpha-frac}\\
    &\leq \frac{(1+\rho/4)^2\cdot (1+\beta)^2}{n} \cdot \sum_{i \in I} (1+\alpha_i) \cdot \vert E_i \vert &
\end{align*}
Where the last line is due to taking the max over values when $\alpha_i\geq \rho/8$, and $\alpha_i<\rho/8$. Similarly, we can show that 
\begin{align}
    \tilde{d} \geq \frac{(1-\rho/4)^2}{(1+\beta)n} \cdot \sum_{i \in I} (1+\alpha_i) \cdot \vert E_i \vert 
\end{align}
Using $\beta \leq \rho/8$,  
\begin{align*}
    &\tilde{d} \\&=\frac{(1 \pm (\rho/4))^2(1\pm \rho/8)^2}{n} \cdot \sum_{i \in I} (1+\alpha_i) \cdot \vert E_i \vert &\\
    &=\frac{1 \pm (3 \rho/2)}{n} \cdot \sum_{i \in I} (1+\alpha_i) \cdot \vert E_i \vert &\\
    &=\frac{1 \pm (3 \rho/2)}{n} \cdot \left( \sum_{i \in I} \vert E_i \vert +\sum_{i \in I} \alpha_i \cdot \vert E_i \vert\right) &\\
    &= \frac{1 \pm (3 \rho/2)}{n} \cdot \left( \sum_{i \in I} \vert E_i \setminus E'_i \vert + \sum_{i \in I} \vert E'_i \vert + \sum_{i \in I} \alpha_i \cdot \vert E_i \vert   \right) & \\
    &= \frac{1 \pm (3 \rho /2)}{n} \cdot \left( \sum_{i \in I} \vert E_i \setminus E'_i \vert + 2 \sum_{i\in I} \vert E'_i \vert \right) & \text{Since } \vert E'_i\vert = \alpha_i \cdot \vert E_i \vert \\
    &=\frac{1 \pm (3 \rho /2)}{n} \cdot \left( 2 \vert E(V \setminus U, V \setminus U)\vert + 2 \vert E(V \setminus U, U )\vert \right) & \text{Using }Equation~\ref{eq:sum-Ti} \text{ and }Equation~\ref{eq:sum-Ei-Ti}\\
    &=\frac{1 \pm (3 \rho /2)}{n} \cdot \left( 2 \vert E(V, V )\vert - 2 \vert E(U, U )\vert \right) &\\
\end{align*}

Where the last line is due to Corollary~\ref{corol:size-of-u}, which states that $\vert E( U,U) \vert <\frac{9}{25}\cdot {\rho n} + \frac{3\rho^{-11/4}}{2}\cdot {(2+1/\beta+\beta)\sqrt{\log(1+\beta)}} \cdot \sqrt{1+\frac{1}{\eps}}\cdot n^{3/4} \log^{9/4}(n)$ and by our assumption that $\bar{d} \geq 1$. Therefore,
\begin{align*}
\tilde{d} &=\bar{d}\left(1 \pm \frac{3 \rho}{2}\right)\cdot \left( 1\pm \left( \frac{9 \rho }{25} +  \frac{3\rho^{-11/4}}{2}\cdot {(2+1/\beta+\beta)\sqrt{\log(1+\beta)}} \cdot \sqrt{1+\frac{1}{\eps}}\cdot n^{-1/4} \log^{9/4}(n)\right) \right) \\
&=\bar{d}\left(1 \pm \frac{3 \rho}{2}\right)\cdot \left( 1\pm \left( \frac{9 \rho }{25} +  o(1)\right) \right)
\end{align*}

Since $\frac{93\rho}{50} + \frac{27 \rho^2}{50} + o(1) < 4\rho$, we have $\tilde{d}=\bar{d}(1\pm 4\rho)$. We can substitute $\rho$ by $\rho/4$ to obtain $\tilde{d} =\bar{d}(1\pm \rho)$. 
\end{proof}

\paragraph{CASE 2: $\vert S_1 \vert > 1.2T \cdot \sqrt{\vert S \vert}\cdot \vert S \vert $.}

Note that since $\vert S_1 \vert > 1.2T \cdot \sqrt{\vert S \vert}\cdot \vert S \vert $, the set of small buckets only consists of $U':= \{v \in \tilde{B}_i: (i \not \in I) \wedge (i>\log_{1+\beta} \left(\frac{6M_{\rho,n}}{\beta}\right)+2)\}$. Therefore, we redefine the set of edges between a noisy bucket and small buckets as $E'_i$, i.e., $E'_i := E(\tilde{B}_i,U') \subseteq E_i$, and $E'_1 := E({B}_1,U')$.

First, we show that the bucket $\vert B_1 \vert/n$ is now approximated well by $\vert S_1 \vert / \vert S \vert$ (see Part~\ref{it:big-approx-b1-s1} of Lemma~\ref{lem:big-approx}). We introduce a different estimator for counting edges between $B_1$ and small buckets given by $\frac{1}{\vert S \vert} (Z+\sum_{v \in S_1} (1+X(v)) \cdot \deg'(v) )$, where $Z\sim \Lap\left(36M_{\rho,n} \left(3+ \beta+\frac{1}{\beta}\right)\right)$ and $\deg'(v)= \min \{\deg(v),6M_{\rho,n} \left(3+ \beta+\frac{1}{\beta}\right)\}$, and the next few claims show that this gives an accurate approximation with high probability. The following lemma states that with high probability $\deg'(v)=\deg(v)$ for every $v\in S_1$ (See Algorithm~\ref{alg:noisy-avg-deg}).

\begin{lemma}\label{lem:deg'=degv}
For every $v\in S_1$, with probability at least $1-o(1)$, $\deg'(v)=\deg(v)$ where $\deg'(v)= \min \{\deg(v),6M_{\rho,n} \left(3+ \beta+\frac{1}{\beta}\right)\}$. 
\end{lemma}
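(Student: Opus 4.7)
The plan is to show that for every $v \in S_1$ the unclamped degree $\deg(v)$ is already bounded by the clamping threshold $6M_{\rho,n}(3+\beta+1/\beta)$, so the $\min$ is achieved by $\deg(v)$ itself. The membership condition $v \in S_1$ gives an upper bound on the noisy degree $\tilde{d}(v)$, and I will combine this with a tail bound on the Laplace noise $Y_v = \tilde d(v)-\deg(v)$ to obtain a deterministic bound on $\deg(v)$ on a high-probability event.

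First, I would apply \lemmaref{lap2} (with parameter $p=6$, matching $Y_v \sim \Lap(6/\eps)$) together with a union bound over all vertices $v \in V(G)$. This yields
\[
\Pr\!\left[\exists v \in V(G):\ |Y_v| \geq 6 M_{\rho,n}\right] \;\leq\; n \cdot \exp\!\left(-\tfrac{\log^{7/4}(n)}{3\rho^2}(1+\eps)\right) \;=\; o(1),
\]
for sufficiently large $n$ (since $\log^{7/4} n$ dominates $\ln n$). Call this high-probability event $\mathcal{E}$. The event is defined over the shared noise variables $\{Y_v\}_{v \in V(G)}$, independent of which vertex is sampled into $S$, so we can use it uniformly for every potential $v \in S_1$.

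Next, I use the definition of $S_1$. Recall $S_1 = \bigcup_{i \leq \log_{1+\beta}(6M_{\rho,n}/\beta)+2} \tilde{S}_i$, so for any $v \in S_1$ we have $\tilde{d}(v) \leq (1+\beta)^{\log_{1+\beta}(6M_{\rho,n}/\beta)+2} = (6M_{\rho,n}/\beta)(1+\beta)^2$. On the event $\mathcal{E}$, $\deg(v) = \tilde{d}(v) - Y_v \leq \tilde{d}(v) + |Y_v|$, so
\[
\deg(v) \;\leq\; \frac{6M_{\rho,n}}{\beta}(1+\beta)^2 + 6M_{\rho,n} \;=\; 6M_{\rho,n}\!\left(\frac{1}{\beta}+2+\beta+1\right) \;=\; 6M_{\rho,n}\!\left(3+\beta+\frac{1}{\beta}\right).
\]

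Consequently, on $\mathcal{E}$, the clamping in $\deg'(v) = \min\{\deg(v),\, 6M_{\rho,n}(3+\beta+1/\beta)\}$ is never active for $v \in S_1$, so $\deg'(v) = \deg(v)$. Since $\Pr[\mathcal{E}] \geq 1 - o(1)$, the claim follows. The only subtlety worth double-checking is that the algebraic expansion of $(1+\beta)^2/\beta + 1$ exactly matches the chosen clamp $3+\beta+1/\beta$; this is why the factors of $6M_{\rho,n}$ and the additive term $+2$ in the bucket-index threshold $\log_{1+\beta}(6M_{\rho,n}/\beta)+2$ were set the way they were, and it is the only nontrivial identity in the argument.
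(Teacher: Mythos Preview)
Your proof is correct and follows essentially the same approach as the paper: bound $\tilde d(v)$ via the definition of $S_1$, control $|Y_v|$ by the Laplace tail bound from \lemmaref{lap2}, and verify the algebraic identity $(1+\beta)^2/\beta + 1 = 3+\beta+1/\beta$. Your union bound over all $n$ vertices is in fact cleaner than the paper's version (which unions over bucket indices after invoking Corollary~\ref{corol:yi-small}), and the resulting $n\cdot\exp(-\Theta(\log^{7/4} n))=o(1)$ is valid.
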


\begin{proof}
Since $S_1= \cup_{i \leq \log_{1+\beta} \left(\frac{6M_{\rho,n}}{\beta}\right)+2 } \tilde{S}_i$, for every $i$ such that $\tilde{S}_i \subseteq S_1$, we have that $(1+\beta)^{i-2} \leq \frac{6M_{\rho,n}}{\beta}$. For every $v \in \tilde{S}_i$, we also know that $(1+\beta)^{i-1} \leq \tilde{d}(v) <(1+\beta)^i$. Therefore, 
\begin{align*}
    \tilde{d}(v) &<(1+\beta)^2 \cdot \frac{6M_{\rho,n}}{\beta} &\\
    \deg(v)+Y_v &< (1+\beta)^2 \cdot \frac{6M_{\rho,n}}{\beta} &\text{where }Y_v \sim \Lap(6/\eps) \\
    \deg(v)&< (1+\beta)^2 \cdot \frac{6M_{\rho,n}}{\beta} - Y_v&
\end{align*}
Using Corollary~\ref{corol:yi-small}, we have that with probability at least $1- \exp\left(\ln \lceil\log_{(1+\beta)}(n)\rceil - \frac{\log^{7/4} (n) }{3\rho^2} \right)$, 
\begin{align*}
    \deg(v)&< (1+\beta)^2 \cdot \frac{6M_{\rho,n}}{\beta} + 6M_{\rho,n}&\\
    &< 6M_{\rho,n} \left(3+ \beta+\frac{1}{\beta}\right)
\end{align*}
By a union bound, 
\begin{align*}
    &\Pr[\exists\ i\ :\ (\tilde{S}_i \subseteq S_1) \wedge (v \in \tilde{S}_i) \wedge \left(\deg(v) > \left(3+ \beta+\frac{1}{\beta}\right)6M_{\rho,n} \right)] \\
    &\leq \left(\log_{1+\beta} \left(\frac{6M_{\rho,n}}{\beta}\right)+2\right)  \exp\left(\ln \lceil\log_{(1+\beta)}(n)\rceil - \frac{\log^{7/4} (n) (1+\eps)}{3\rho^2} \right)\\
    &\leq \log_{1+\beta} \left( \frac{2 (1+1/\eps) \log^{7/4}(n) \cdot (1+\beta)^2}{\beta \cdot \rho^2}\right)\exp\left(\ln \lceil\log_{(1+\beta)}(n)\rceil - \frac{\log^{7/4} (n)(1+\eps) }{3\rho^2} \right)\\
    &\leq \exp\left(\ln \left(\log_{1+\beta} \left( \frac{2(1+1/\eps)\log^{7/4}(n) \cdot (1+\beta)^2}{\beta \cdot \rho^2}\right) \right) + \ln \lceil\log_{(1+\beta)}(n)\rceil - \frac{\log^{7/4} (n) (1+\eps)}{3\rho^2} \right)\\
\end{align*}

\end{proof}

Our main contribution in this case is Lemma~\ref{lem:case2-output-hoeffding} which shows that with high probability, our estimator (sans noise) approximates the fraction $(\vert E_1 \vert + \vert E'_1 \vert)/n$ quite well. 

\begin{proof}[Proof of Lemma~\ref{lem:case2-output-hoeffding}] \label{proof:lem-case2-output-hoeffding}
 Define the indicator random variable $Y(v) = 1$ if and only if $v \in S \cap B_1$.
\begin{align}
    &\frac{1}{\vert S \vert}\E \left[ \left(\sum_{v \in S_1} (1+X(v)) \cdot \deg(v)  \right) \right]\\
    &= \frac{1}{\vert S \vert}\E \left[ \left(\sum_{v \in S} Y(v) (1+X(v)) \cdot \deg(v)  \right) \right]\\
    &=\frac{1}{\vert S \vert}\left(\sum_{v \in S} \E[ \deg(v) (1+X(v)) | Y(v)=1] \Pr[Y(v)=1]  \right) \\
    &= \frac{\Pr[Y(v)=1] }{\vert S \vert}\left(\sum_{v \in S} \E[
    \deg(v) | Y(v)=1] + \sum_{v \in S} \E[\deg(v) X(v) | Y(v)=1] \right) \\
    &= \frac{\vert B_1 \vert}{\vert S \vert \cdot n}\left(\sum_{v \in S} \E[
    \deg(v) | Y(v)=1] + \sum_{v \in S} \E[\deg(v) X(v) | Y(v)=1] \right) \\
    &= \frac{\vert E_1 \vert }{n} +\frac{\vert B_1 \vert}{\vert S \vert \cdot n} \sum_{v \in S} \E[\deg(v) X(v) | Y(v)=1] \\
    &= \frac{\vert E_1 \vert }{n} +\frac{\vert B_1 \vert}{\vert S \vert \cdot n} \sum_{v \in S} \sum_{v \in B_1} \Pr[v \in B_1] \cdot \E[\deg(v) X(v) \vert v \in B_1] \\
    &= \frac{\vert E_1 \vert }{n} +\frac{\vert B_1 \vert}{\vert S \vert \cdot n} \sum_{v \in S} \sum_{v \in B_1}\frac{1}{\vert B_1 \vert}\cdot \deg(v) \E[X(v) \vert v \in B_1] \\
    &= \frac{\vert E_1 \vert }{n} +\frac{\vert B_1 \vert}{\vert S \vert \cdot n} \sum_{v \in S} \sum_{v \in B_1}\frac{1}{\vert B_1 \vert}\cdot \deg(v) \frac{\vert E'_1 \vert}{\deg(v)} \\
    &=\frac{\vert E_1 \vert + \vert E'_1 \vert}{n}
\end{align}
For every $v \in S$, define $W(v):= Y(v) (1+X(v)) \cdot \deg(v) $, then we can rewrite the result above as 
$$ \frac{1}{\vert S \vert}\E \left[ \sum_{v \in S} W(v)\right] = \frac{\vert E_1 \vert + \vert E'_1 \vert}{n}$$
Also, using the upper bound on $\deg(v)$ from Lemma~\ref{lem:deg'=degv}, $0 \leq W(v) <  12M_{\rho,n} \left(3+ \beta+\frac{1}{\beta}\right)$. 

\begin{enumerate}
    \item {\bfseries Case 1: $\bar{d}_1 \geq 1$.} By a multiplicative Hoeffding bound, 
\begin{align}
    &\Pr\left[\left\vert \frac{1}{\vert S \vert}\sum_{v \in S} W(v) - \frac{\vert E_1 \vert + \vert E'_1 \vert}{n}\right\vert \geq (\rho/4) \cdot \frac{\vert E_1 \vert + \vert E'_1 \vert}{n} \right]\\
    &\leq 2 \exp \left( - \frac{2 \vert S \vert^2 \cdot \frac{\rho^2}{16}\left(\frac{\vert E_1 \vert + \vert E'_1 \vert}{n} \right)^2}{\vert S \vert \cdot \left(12M_{\rho,n} \left(3+ \beta+\frac{1}{\beta}\right)\right)^2 }\right)
\end{align}
Observe that $\frac{\vert E_1 \vert + \vert E'_1 \vert}{n} \geq \frac{\vert B_1 \vert \cdot \bar{d}_1}{n} \geq \frac{\vert B_1 \vert}{n} $, also by our assumption, $\vert B_1 \vert >1.5 T\cdot \sqrt{\vert S \vert} \cdot n$, therefore, 
\begin{align}
    &2 \exp \left( - \frac{2 \vert S \vert \cdot \frac{\rho^2}{16}\left(\frac{\vert E_1 \vert + \vert E'_1 \vert}{n} \right)^2}{ \left(12M_{\rho,n} \left(3+ \beta+\frac{1}{\beta}\right)\right)^2 }\right)\\
    &\leq 2 \exp \left( - \frac{2 \vert S \vert \cdot \frac{\rho^2}{16}\left(1.5T \cdot \sqrt{\vert S \vert} \right)^2}{ \left(12M_{\rho,n} \left(3+ \beta+\frac{1}{\beta}\right)\right)^2 }\right)
\end{align}
Substituting the expressions for $ M_{\rho,n}$ and $T$, 

\begin{align}
    &2 \exp \left( - \frac{(9/32)\rho^2 \vert S \vert^2 \cdot  T^2 }{ \left(12M_{\rho,n} \left(3+ \beta+\frac{1}{\beta}\right)\right)^2 }\right)\\
    &=2 \exp \left( - \frac{(9/32)\rho^2 \vert S \vert^2 \cdot \left(\frac{1}{2}\sqrt{\frac{\rho}{n}}\cdot \frac{\eps}{1+\eps} \cdot \frac{1}{t} \right)^2}{ \left(12\left(\frac{1}{3} \cdot \sqrt{\frac{\rho}{n \sqrt{\log (n)}}} \cdot \frac{\vert S \vert }{t}\right)\left(3+ \beta+\frac{1}{\beta}\right)\right)^2 }\right)\\
    &=2 \exp \left( - \frac{9\rho^2}{2048(3+\beta+1/\beta)^2} \cdot \frac{\eps^2}{(1+\eps)^2} \cdot {\sqrt{\log (n)}} \right) 
\end{align}
\item {\bfseries Case 2: $\bar{d}_1 < 1$ and $\bar{d}\geq 1$.} By an additive Hoeffding bound,  
\begin{align}
    &\Pr\left[\left\vert \frac{1}{\vert S \vert}\sum_{v \in S} W(v) - \frac{\vert E_1 \vert + \vert E'_1 \vert}{n}\right\vert \geq \frac{\rho}{4} \right]\\
    &\leq 2 \exp \left( - \frac{2 \vert S \vert^2 \cdot \frac{\rho^2}{16}}{\vert S \vert \cdot \left(12M_{\rho,n} \left(3+ \beta+\frac{1}{\beta}\right)\right)^2 }\right)\\
    &\leq 2 \exp \left( - \frac{2 \vert S \vert \cdot \frac{\rho^2}{16}}{ \left(12M_{\rho,n} \left(3+ \beta+\frac{1}{\beta}\right)\right)^2 }\right)\\
     &\leq 2 \exp \left( - \frac{2 \vert S \vert \cdot \frac{\rho^2}{16}}{ \left(12M_{\rho,n} \left(3+ \beta+\frac{1}{\beta}\right)\right)^2 }\right)\\
     &\leq 2 \exp \left( - \frac{2 \left( t \cdot \frac{\log^2 (n)}{\rho^2} \cdot \sqrt{\frac{n}{\rho}} \cdot \left(1+\frac{1}{\eps}\right) \right)\cdot \frac{\rho^2}{16}}{  \left(12\left(\frac{1}{3} \cdot \sqrt{\frac{\rho}{n \sqrt{\log (n)}}} \cdot \frac{\vert S \vert }{t}\right)\left(3+ \beta+\frac{1}{\beta}\right)\right)^2}\right)\\
    &\leq  2 \exp \left( - \frac{\rho^{7/2}}{128 \log(1+\beta)} \cdot \frac{\eps}{1+\eps} \cdot \frac{\sqrt{n}}{\log^{1/2}(n)}\right)
\end{align}
\end{enumerate}
\end{proof}

The following claim about Laplace noise is used to show that the noise term $Z/\vert S \vert$ added to our estimator does not affect the accuracy by much. This is formally used in the main Lemma~\ref{lem:final-avg-deg-case2} for Case 2.  

\begin{claim}\label{clm:Z-lap-noise-small}
If $Z \sim \Lap\left(36M_{\rho,n} \left(3+ \beta+\frac{1}{\beta}\right)\right)$, then with probability at least $1-o(1)$, $\left\vert \frac{Z}{\vert S \vert}\right\vert < g(n)$ where $g(n):= \frac{4 (3+\beta + 1/\beta)(1+1/\eps)}{\rho^{3/2}} \cdot { \log(1+\beta)} \cdot \frac{\log^{1/2} (n)}{\sqrt{n}}=o_n(1)$.
\end{claim}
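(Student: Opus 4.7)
The plan is to apply the Laplace tail bound (\factref{lap}) directly, since $Z \sim \Lap(b)$ for $b := 36 M_{\rho,n}(3+\beta+1/\beta)$. By the tail bound, for any $\ell > 0$, we have $\Pr[|Z|/|S| \geq \ell \cdot b/|S|] = \exp(-\ell)$, so it suffices to exhibit a value $\ell$ for which (i) $\ell \cdot b/|S| \leq g(n)$ and simultaneously (ii) $\exp(-\ell) = o_n(1)$.

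First I would substitute $M_{\rho,n} = \tfrac{1}{3}\sqrt{\rho/(n\sqrt{\log n})} \cdot |S|/t$ to get the clean form
\[\frac{b}{|S|} \;=\; \frac{12(3+\beta+1/\beta)}{t}\cdot \frac{\sqrt{\rho}}{\sqrt{n}\,\log^{1/4}(n)}.\]
Dividing $g(n)$ by $b/|S|$ cancels the $(3+\beta+1/\beta)$ factor and $\sqrt{n}$, leaving an expression of the form $\frac{(1+1/\eps)\log(1+\beta)\cdot t \cdot \log^{3/4}(n)}{3\rho^2}$. Using $t \geq \log(n)/\log(1+\beta)$, this lower bound is at least $\ell^* := (1+1/\eps)\log^{7/4}(n)/(3\rho^2)$.

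Therefore, taking $\ell = \ell^*$ satisfies requirement (i). For requirement (ii), note that regardless of the size of $\eps$ (even without the theorem-wide assumption $\eps^{-1} = o(\log^{1/4}(n))$), we have $\ell \geq \log^{7/4}(n)/(3\rho^2) \to \infty$, so $\exp(-\ell) = o_n(1)$. This yields the high-probability bound $|Z/|S|| < g(n)$.

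Finally, the tail assertion that $g(n) = o_n(1)$ is a direct check: for fixed $\rho,\beta$ the only $n$-dependent factors in $g(n)$ are $(1+1/\eps)$ and $\log^{1/2}(n)/\sqrt{n}$, and the hypothesis $\eps^{-1} = o(\log^{1/4}(n))$ keeps their product at $O(\log^{3/4}(n)/\sqrt{n}) = o(1)$. I do not anticipate any real obstacle here; the entire argument is a one-line application of \factref{lap} wrapped around routine algebra with the parameter settings, and the only care required is to track the $\log^{1/4}(n)$ gap between the denominator of $b/|S|$ and the numerator of $g(n)$ so that $\ell$ is forced to grow like $\log^{7/4}(n)$.
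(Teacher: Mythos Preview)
Your proposal is correct and follows essentially the same approach as the paper: the paper invokes \lemmaref{lap2} (which is itself just \factref{lap} with the parameters pre-substituted) to obtain $|Z| < 36(3+\beta+1/\beta)M_{\rho,n}^2$ with probability $1-o(1)$, and then expands $M_{\rho,n}^2/|S|$ to arrive at exactly $g(n)$. Your direct application of \factref{lap} with the explicit choice $\ell^* = (1+1/\eps)\log^{7/4}(n)/(3\rho^2)$ is the same computation unrolled, and your bookkeeping of the $\log^{1/4}(n)$ gap is precisely what drives $\ell^*\to\infty$ in both arguments.
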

\begin{proof}
Using Lemma~\ref{lem:lap2}, with probability at least $1-o(1)$, we have 
\begin{align*}
\left\vert \frac{Z}{\vert S \vert} \right\vert &< \frac{36(3+\beta+1/\beta) M^2_{\rho,n}}{\vert S \vert} \\
&= \frac{36(3+\beta+1/\beta) \cdot \frac{\rho \cdot \vert S \vert^2}{9n \sqrt{\log n} \cdot t^2} }  {\vert S \vert} \\
&= \frac{4 (3+\beta + 1/\beta)(1+1/\eps)}{\rho^{3/2}} \cdot { \log(1+\beta)} \cdot \frac{\log^{1/2} (n)}{\sqrt{n}}
\end{align*} 

\end{proof}

We invoke the same lemmas as in Case 1 in the proof of the main Lemma~\ref{lem:final-avg-deg-case2} for Case 2 below to show that with high probability, the approximations of edges between the rest of the sufficiently large buckets, and between all the small buckets, as well as between the sufficiently large buckets and small buckets is good.

\begin{lemma}\label{lem:final-avg-deg-case2}
For every $\rho < 1/4$, $\beta \leq \rho/8$, and $\eps^{-1} =  o(\log^{1/4}(n))$, for sufficiently large $n$, and for the case when $\vert S_1 \vert > 1.2T \cdot \sqrt{\vert S \vert}\cdot \vert S \vert $, the main algorithm~(see Algorithm~\ref{alg:1+e}) outputs a value $\tilde{d}$ such that with probability at least $1-o(1)$, it holds that 
\[\left(1 -\rho \right)\cdot \bar{d} \leq \tilde{d} \leq \left(1 + \rho\right) \cdot \bar{d} \]
\end{lemma}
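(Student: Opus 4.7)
The plan is to mirror the structure of the proof of Lemma~\ref{lem:final-avg-deg-case1} but with one crucial difference: since $B_1$ is now a ``big'' merged bucket rather than a small bucket that we can ignore, its contribution to the output must be handled via the new estimator $\frac{1}{|S|}(Z + \sum_{v \in S_1}(1+X(v))\deg'(v))$ given in Line~\ref{it:case2-output} of Algorithm~\ref{alg:noisy-avg-deg}. Correspondingly, the only truly ``small'' set of vertices that we can afford to ignore is $U':=\{v \in \tilde{B}_i : (i \notin I) \wedge (i > \log_{1+\beta}(6M_{\rho,n}/\beta)+2)\}$, so throughout this case we redefine the relevant edge quantities by $E'_i := E(\tilde{B}_i, U')$ for $i \in I$ and $E'_1 := E(B_1, U')$.

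First I would handle the ``big bucket'' part $\frac{1}{|S|}\sum_{i \in I} |\tilde{S}_i|(1+\tilde{\alpha}_i)(1+\beta)^i$ exactly as in Case 1: invoking Lemma~\ref{lem:big-approx}(\ref{it:big-approx-bi-si}) to replace $|\tilde{S}_i|/|S|$ by $(1\pm \rho/4)|\tilde{B}_i|/n$, Lemma~\ref{lem:big-noisy-bucket-same} to bound $|\tilde{B}_i|(1+\beta)^i$ in terms of $|E_i|$ up to factors of $(1+\beta)^{\pm 2}$, and Corollary~\ref{corol:alpha-frac} to replace $\tilde{\alpha}_i$ by $(1\pm \rho/4)\alpha_i$ when $\alpha_i \geq \rho/8$ (and by an additive $\rho/4$ term otherwise). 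Exactly as in the Case 1 computation, this shows that with probability $1-o(1)$,
\[
\frac{1}{|S|}\sum_{i \in I}|\tilde{S}_i|(1+\tilde{\alpha}_i)(1+\beta)^i = \frac{1 \pm (3\rho/2)}{n}\sum_{i \in I}(|E_i| + |E'_i|).
\]
Next I would handle the $B_1$ contribution: Lemma~\ref{lem:deg'=degv} ensures that $\deg'(v) = \deg(v)$ for every $v \in S_1$ with probability $1-o(1)$, so the estimator $\frac{1}{|S|}\sum_{v \in S_1}(1+X(v))\deg'(v)$ coincides with $\frac{1}{|S|}\sum_{v \in S_1}(1+X(v))\deg(v)$, which Lemma~\ref{lem:case2-output-hoeffding} shows approximates $(|E_1|+|E'_1|)/n$ multiplicatively by $(1\pm \rho/4)$ when $\bar{d}_1 \geq 1$ and additively to within $\rho/4$ when $\bar{d}_1 < 1$. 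Claim~\ref{clm:Z-lap-noise-small} guarantees $|Z/|S|| = o(1)$, which is absorbed into either the $(1\pm\rho/4)$ factor (by using $(|E_1|+|E'_1|)/n \geq |B_1|/n = \Omega(T\sqrt{|S|})$) or into the additive $\rho/4$ slack, giving the same approximation guarantee for the full second term.

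Combining the two approximations and using the edge partition identity
\[
\sum_{i \in I}(|E_i| + |E'_i|) + (|E_1| + |E'_1|) = 2|E(V,V)| - 2|E(U',U')| = n\bar{d} - 2|E(U',U')|,
\]
which follows from counting ordered pairs (each edge with both endpoints outside $U'$ is counted twice, each edge with exactly one endpoint in $U'$ is also counted twice, and edges internal to $U'$ contribute zero), I would then invoke Lemma~\ref{lem:small-small}(\ref{it:small-small-u'}) to bound $|E(U',U')| \leq |U'|^2 = O(\rho n)$, yielding $\tilde{d} = (1\pm 3\rho/2)(1 \pm O(\rho))\bar{d} = (1\pm \rho)\bar{d}$ after rescaling $\rho$ by a constant factor as in Case 1. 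The main obstacle is the bookkeeping in the case $\bar{d}_1 < 1$: there the Hoeffding bound in Lemma~\ref{lem:case2-output-hoeffding}(\ref{it:case2-output-hoeffding2}) only guarantees an additive $\rho/4$ approximation to $(|E_1|+|E'_1|)/n$, and one must verify that this additive slack is still dominated by $\rho \bar{d}$ under the standing assumption $\bar{d} \geq 1$ (so it is absorbed into the final multiplicative bound), and that the Laplace noise $Z/|S|$ and the clamping of $\deg'(v)$ neither spoils this additive approximation.
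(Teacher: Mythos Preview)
Your proposal is correct and follows essentially the same approach as the paper's proof: redefining the small set as $U'$ only, reusing the Case~1 chain of lemmas (Lemma~\ref{lem:big-approx}, Lemma~\ref{lem:big-noisy-bucket-same}, Corollary~\ref{corol:alpha-frac}) for the $\sum_{i\in I}$ part, handling the $B_1$ estimator via Lemma~\ref{lem:deg'=degv}, Lemma~\ref{lem:case2-output-hoeffding}, and Claim~\ref{clm:Z-lap-noise-small}, then combining via the edge-partition identity and bounding $|E(U',U')|\leq |U'|^2$ with Lemma~\ref{lem:small-small}(\ref{it:small-small-u'}). You even correctly flag the $\bar d_1<1$ additive-error case, which the paper handles in exactly the manner you describe (absorbing the additive $\rho/4$ using $\bar d\geq 1$).
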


\begin{proof}
Note that the set of vertices that reside in noisy buckets deemed ``small'' by the sample, is defined by $U'= \{v \in \tilde{B}_i: (i \not \in I) \wedge (i>\log_{1+\beta} \frac{6M_{\rho,n}}{\beta}+2) \}$.

Also, 
\begin{align}
    \sum_{i \in I} \vert E'_i\vert + \vert E'_1 \vert= \vert E(V \setminus U', U') \vert \label{eq:sum-Ti} \\
    \sum_{i \in I} \vert E_i \setminus E'_i \vert + \vert E_1 \setminus E'_1\vert = 2 \vert E(V \setminus U', V \setminus U') \vert  \label{eq:sum-Ei-Ti} 
\end{align}
Let $\bar{d}_1$ be the average degree of bucket $B_1$. We do the analysis below assuming $\bar{d}_1 \geq 1$, and describe how the approximation factor changes when we assume $\bar{d}_1<1$, but $\bar{d}\geq 1$. With high probability, we have,
\begin{align*}
    \tilde{d} &= \frac{1}{\vert S \vert } \left(\sum_{i \in I } \vert \tilde{S}_i \vert \cdot (1+\tilde{\alpha}_i) \cdot (1+\beta)^{i} +Z+\sum_{v \in S_1} (1+X(v)) \cdot \deg'(v)  \right) &\\
    &\leq \frac{{(1+\rho/4)^2\cdot (1+\beta)^2}}{n} \cdot\left( \sum_{i \in I} \vert E_i \setminus E'_i \vert + 2 \sum_{i\in I} \vert E'_i \vert \right) +\frac{Z}{\vert S \vert} + \frac{1}{\vert S \vert}\sum_{v \in S_1} (1+X(v)) \cdot \deg'(v)  & \text{From Case 1 analysis}\\ 
    &= \frac{{(1+\rho/4)^2\cdot (1+\beta)^2}}{n}\left( \sum_{i \in I} \vert E_i \setminus E'_i \vert + 2 \sum_{i\in I} \vert E'_i \vert \right)  +\frac{Z}{\vert S \vert} + \frac{1}{\vert S \vert}\sum_{v \in S_1} (1+X(v)) \cdot \deg(v) &\text{From Lemma~\ref{lem:deg'=degv}} \\
    &\leq \frac{{(1+\rho/4)^2\cdot (1+\beta)^2}}{n}\left( \sum_{i \in I} \vert E_i \setminus E'_i \vert + 2 \sum_{i\in I} \vert E'_i \vert \right)  +\frac{Z}{\vert S \vert} + \frac{(1+\rho/4)}{n} \cdot (\vert E_1 \vert + \vert E'_1 \vert )  &\text{From Lemma~\ref{lem:case2-output-hoeffding}, Part~\ref{it:case2-output-hoeffding1}} \\
    &\leq \frac{{(1+\rho/4)^2\cdot (1+\beta)^2}}{n}\left( \sum_{i \in I} \vert E_i \setminus E'_i \vert + 2 \sum_{i\in I} \vert E'_i \vert \right)  +\frac{Z}{\vert S \vert} + \frac{(1+\rho/4)}{n} \cdot (\vert E_1\setminus E'_1 \vert + 2\vert E'_1 \vert )  & \\
    &\leq \frac{{(1+\rho/4)^2\cdot (1+\beta)^2}}{n} \cdot \left( \sum_{i \in I} \vert E_i \setminus E'_i \vert + \vert E_1 \setminus E'_1 \vert + 2 \left(\sum_{i\in I} \vert E'_i \vert + \vert E'_1 \vert\right) \right) + n^{-1/3}& \text{Using Claim~\ref{clm:Z-lap-noise-small}}\\
    &= \frac{{(1+\rho/4)^2\cdot (1+\beta)^2}}{n} \cdot \left( 2\vert E(V\setminus U', V \setminus U')\vert + 2\vert E(V\setminus U', U' \vert  \right) + n^{-1/3} & \\
    &= \frac{{(1+\rho/4)^2\cdot (1+\beta)^2}}{n} \cdot \left( 2 \vert E(V, V )\vert - 2 \vert E(U', U' )\vert + \frac{n^{2/3}}{(1+\rho/4)^2\cdot (1+\beta)^2}\right) & \\
\end{align*}
Similarly, we can show that with high probability,  
\begin{align*}
    \tilde{d} \geq \frac{(1-\rho/4)^2}{(1+\beta)n} \cdot \left( 2 \vert E(V, V )\vert - 2 \vert E(U', U' )\vert \right)
\end{align*}
Using $0<\beta \leq \rho/8$, 
\begin{align*}
    \tilde{d} &= \frac{1 \pm (3 \rho /2)}{n} \cdot \left( 2 \vert E(V, V )\vert - 2 \vert E(U', U' )\vert +\frac{n^{2/3}}{(1+\rho/4)^2} \right) &\\
    &=\frac{1 \pm (3 \rho/2)}{n} \cdot \left( \bar{d} n \pm \vert U'\vert^2 +\frac{n^{2/3}}{(1+\rho/4)^2}\right) &
\end{align*}
From  Lemma~\ref{lem:small-small}, Part~\ref{it:small-small-u'}, we know that $\vert U' \vert \leq \frac{3}{5}\cdot \sqrt{\rho n} $, and recall that we assume $\bar{d} \geq 1$, therefore, 
\[ \tilde{d} =\bar{d}\left(1 \pm \frac{3 \rho}{2}\right)\cdot \left( 1\pm \frac{9\rho }{25} + \frac{1}{(1+\rho/4)^2n^{1/3}}\right) \]
Since $\frac{93\rho}{50} + \frac{27 \rho^2}{50} + o(1) < 4\rho$, we have $\tilde{d}=\bar{d}(1\pm 4\rho)$. We can substitute $\rho$ by $\rho/4$ to obtain $\tilde{d} =\bar{d}(1\pm \rho)$.

When $\bar{d}_1<1$, but $\bar{d}\geq 1$, using Lemma~\ref{lem:case2-output-hoeffding}, Part~\ref{it:case2-output-hoeffding2}, and the same techniques as outlined above, we have, 
\[ \tilde{d} =\bar{d}\left(1 \pm \frac{3 \rho}{2}\right)\cdot \left( 1\pm \frac{9\rho }{25} + \frac{\rho/4}{(1+\rho/4)^2} + \frac{1}{(1+\rho/4)^2n^{1/3}}\right) \]
Since $\frac{93\rho}{50} + \frac{27 \rho^2}{50} + o(1) < 4\rho$, we have $\tilde{d}=\bar{d}(1\pm 4\rho)$. We can substitute $\rho$ by $\rho/4$ to obtain $\tilde{d} =\bar{d}(1\pm \rho)$. 
\end{proof}

\section{Proofs of Theorem~\ref{thm:main-sub-mm} and Theorem~\ref{thm:main-sub-vc}}\label{sec:main-sub-mm}
{\bf Notation.}
For ease of notation, when we are considering the Coupled Global Sensitivity of a graph algorithm with respect to edge-neighboring graphs, we denote it as $CGS^e$; and when we are considering the Coupled Global Sensitivity of a graph algorithm with respect to node-neighboring graphs, we denote it as $CGS^v$.

\subsection{The Maximal Matching and Vertex Cover Oracles}
We first describe the maximal matching oracle $\mathcal{O}^\pi_{MO}$ that is implemented recursively by~\cite{nguyen2008constant, yoshida2012improved, ORR12, behnezhad21} in Algorithm~\ref{alg:matching-oracle}. On input an edge $e$, Algorithm~\ref{alg:matching-oracle} queries all incident edges to $e$ of rank lower than $e$ to check if they belong to the matching $M$, while keeping track of which edges are in $M$ (defined greedily according to a fixed ranking $\pi$). Generating a random permutation $\pi\in Sym({n \choose 2})$ can be  simulated locally by assigning random values in the range $[0,1]$ to pairs of vertices of the graph at the moment when they are needed for the first time in the algorithm. To ensure that the rankings are distinct, one may employ a lazy sampling of the real numbers, see for e.g. Section 4.3 of~\cite{ORR12}. 
 In our case, in order to analyze subsequent algorithms that use $\cO^\pi_{MO}$ as a sub-routine, it will be enough to analyze the algorithm described in Algorithm~\ref{alg:matching-oracle} instead. In the sequel, we do not make any distinction between the  algorithm described in Algorithm~\ref{alg:matching-oracle} and oracle $\cO^{\pi}_{MO}$ that assigns rankings by sampling from $[0,1]$.
 
\begin{center}
\fbox{\parbox{\textwidth}{
\begin{enumerate}[nolistsep]
        \item \textbf{Input. }Given edge $e$, the oracle returns True if $e$ is in the Matching greedily created by the ranking $\pi$; returns False, otherwise.
\item If we have already computed $\mathcal{O}^\pi_{MO}(e)$, then return the computed answer.
\item Collect edges $e_1, \ldots, e_k$ sharing an endpoint with $e$ sorted by \emph{increasing rank}.
\item Initialize $i=1$. While $\pi(e_i) < \pi(e)$, if{$\mathcal{O}^\pi_{MO}(e_i)=\text{True}$ then return False, otherwise $i=i+1$.}
\item return True
    \end{enumerate} }}
    \captionof{algorithm}{Oracle $\mathcal{O}^\pi_{MO}(e)$ for a maximal matching based on ranking $\pi$ of edges.}  
\label{alg:matching-oracle}
\end{center}

The vertex cover oracle $\cO^\pi_{VC}$ (described in Algorithm~\ref{alg:VC-greedy-oracle}) is called on a vertex $v$ and subsequently calls the matching oracle on edges incident to $v$ to determine whether vertex $v$ is matched according to the ranking $\pi$. 

\begin{center}
\fbox{\parbox{\textwidth}{
 \textbf{Input. }Given vertex $v$, the oracle returns True if $v$ is in the Vertex Cover greedily created by the ranking; returns False otherwise.
\begin{enumerate}[nolistsep]
    \item Let $d_v = \deg(v)$.
    \item Collect edges $e_i = (v, \text{Nbr}(v,i))$ sorted by increasing rank $\forall i \in [d_v]$. 
    \item For {$i=1,\ldots,d_v$,} if {$\mathcal{O}^\pi_{MO}(e_i)=\text{True}$} then return True.
    \item Return False.
    \end{enumerate} }}
    \captionof{algorithm}{Oracle $\mathcal{O}^\pi_{VC}(v)$ for a vertex cover based on a randomly chosen ranking $\pi$ of edges.}  
\label{alg:VC-greedy-oracle}
\end{center}

\subsection{Formal Proofs of Theorem~\ref{thm:main-sub-mm} and Theorem~\ref{thm:main-sub-vc}}
\label{subsec:main-sub-mm}

We first state the non-private accuracy and time complexity guarantees for Algorithms~\ref{alg:MM-SLA-sampling} and \ref{alg:VC-SLA-sampling}.  
\begin{theorem}\cite{behnezhad21}
With probability $1-1/poly(n)$, $\cA_{sub-MM}$ and $\cA_{sub-VC}$ (Algorithms~\ref{alg:MM-SLA-sampling} and~\ref{alg:VC-SLA-sampling}) give a $(2,\rho n)$-approximation of maximum matching size and minimum vertex cover size respectively, with query/run time complexity $\tilde{O}\left((\bar{d}+1)/\rho^2 \right)$ where $\bar{d}$ denotes the average degree.
\end{theorem}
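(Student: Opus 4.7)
The plan is to break the proof into two essentially independent parts: (i) accuracy, which reduces to a Chernoff-type concentration bound on the sample average combined with the classical $2$-approximation guarantee of greedy maximal matching, and (ii) expected query complexity, which is the technically substantive part and where Behnezhad's argument is essential.

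For accuracy, I would start by fixing any ranking $\pi$ on the $\binom{n}{2}$ pairs and letting $M(\pi)$ denote the greedy maximal matching and $VC(\pi)$ its vertex set (so $|VC(\pi)| = 2|M(\pi)|$). The oracle call $\cO^\pi_{VC}(v)$ returns $\mathtt{True}$ iff $v\in VC(\pi)$, so $\sum_{i=1}^s X_i$ is a sum of $s$ Bernoullis (sampled without replacement from $V$) with mean $p := |VC(\pi)|/n$. By Fact~\ref{fact:conc}, sampling without replacement concentrates at least as well as with replacement, so by a Hoeffding/Chernoff bound with $s = \Theta((\log n)/\rho^2)$, we get $\Pr\bigl[ |\tfrac{1}{s}\sum X_i - p| > \rho \bigr] \leq 1/\mathrm{poly}(n)$. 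Multiplying through by $n/2$ yields $|\tfrac{n}{2s}\sum X_i - |M(\pi)|| \leq \rho n/2$ w.h.p. Subtracting the $\rho n/2$ shift, the estimate $\tilde M$ satisfies $|M(\pi)|-\rho n \leq \tilde M \leq |M(\pi)|$. Combined with the standard fact $OPT_{MM}/2 \leq |M(\pi)| \leq OPT_{MM}$ we conclude the $(2,\rho n)$-approximation. The vertex cover case is analogous: any maximal matching induces a vertex cover of size exactly $|VC(\pi)|=2|M(\pi)|$ which is a $2$-approximation to the minimum cover, and the estimator in $\cA_{sub-VC}$ approximates $|VC(\pi)|$ the same way.

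For the time bound, the key observation is that each $\cO^\pi_{VC}(v)$ call triggers a recursive cascade of $\cO^\pi_{MO}$ calls on incident edges in increasing rank order, stopping as soon as a matching edge is found. Writing the expected number of edge queries induced by a single vertex oracle call as $Q(v)$, Behnezhad's analysis bounds $\mathbb{E}_\pi[\sum_v Q(v)] = \tilde O(n(\bar d+1))$, i.e., an average of $\tilde O(\bar d+1)$ per vertex, via a charging argument on the ``query tree'': each recursive step conditioned on the rank structure of $\pi$ reveals one new random edge, and the expected number of revealed edges before hitting a matched one is controlled by the local density. Since $\cA_{sub-MM}$ performs $s = O((\log n)/\rho^2)$ independent uniform vertex samples and each contributes expected query cost $\tilde O(\bar d+1)$, the total expected running time is $s \cdot \tilde O(\bar d+1) = \tilde O((\bar d+1)/\rho^2)$. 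The same bound applies to $\cA_{sub-VC}$, which differs only in how it maps the $X_i$ indicators into the output estimator.

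The main obstacle is the expected-query-complexity step: the accuracy analysis is a one-line Chernoff bound, but bounding $\mathbb{E}_\pi[\sum_v Q(v)]$ by $\tilde O(n(\bar d+1))$ is the substantive contribution of~\cite{behnezhad21}, improving over the $O(d^4/\rho^2)$ bound of~\cite{yoshida2012improved}. The improved bound requires carefully exploiting the randomness of $\pi$ to argue that with high probability, the recursive calls terminate after exploring only $O(\bar d+1)$ edges on average per root vertex, rather than $O(d)$ per vertex as a worst-case degree bound would give; this is precisely the ingredient we import as a black box from~\cite{behnezhad21}.
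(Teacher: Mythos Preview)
Your proposal is correct and matches the paper's treatment: this theorem is stated as an imported result from \cite{behnezhad21} without a standalone proof, and the accuracy argument you sketch (Chernoff/Hoeffding plus the classical $2$-approximation of greedy maximal matching, invoking Fact~\ref{fact:conc} for without-replacement sampling) is exactly what the paper reproduces inside the proof of Claim~\ref{clm:max-match-claim}, while the query-complexity bound is explicitly deferred to \cite{behnezhad21} as a black box. The only cosmetic difference is that the paper uses a multiplicative Chernoff deviation $|X-\E X|\ge \sqrt{12\,\E X\ln n}$ rather than your additive Hoeffding formulation, but both give the required $1/\mathrm{poly}(n)$ failure probability.
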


In what follows we analyze the CGS of the sampling algorithm for maximum matching denoted as $\cA_{sub-MM}$ (Algorithm~\ref{alg:MM-SLA-sampling}) with respect to node-neighboring graphs below. We also note that the CGS with respect to edge-neighboring graphs has the same upper bound and follows as a corollary. 

\begin{theorem}\label{thm:sub-MM-cgs}
\[CGS^v_{\cA_{sub\text{-}MM}} \leq \frac{n\rho^2}{16 \cdot 24 \ln n}\]
\end{theorem}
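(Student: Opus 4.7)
The plan is to invoke Fact~\ref{fact:cgs-perm} with the random coin space $\cR$ identified as pairs $(\pi, S)$, where $\pi$ is the ranking of vertex pairs used by the oracles and $S$ is the ordered sample of $s = 16 \cdot 24 \ln n/\rho^2$ distinct vertices drawn without replacement. The construction is a permutation $\sigma \in \mathrm{Sym}(\cR)$ of the form $\sigma(\pi, S) = (\pi, \phi(S))$ that fixes the ranking and applies a vertex bijection $\phi: V \to V$ coordinate-wise to $S$; since $\phi$ is a bijection on $V$, sending without-replacement samples to without-replacement samples, $\sigma$ is a valid bijection on $\cR$. Because $\cO^\pi_{VC}(v)$ returns exactly $\mathbf{1}[v \in V(M^\pi_G)]$, where $M^\pi_G$ is the greedy maximal matching induced by $\pi$ on $G$, the output $\tilde{M} = \frac{n}{2s}\sum_{v \in S}\mathbf{1}[v \in V(M^\pi_G)] - \frac{\rho n}{2}$ depends on $S$ only through the count $\sum_{v \in S}\mathbf{1}[v \in V(M^\pi_G)]$, and it suffices to bound how much this count changes when we pass from $(G_1, S)$ to $(G_2, \phi(S))$.

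The key structural lemma I will prove is: for any node-neighboring $G_1 \sim_v G_2$ (differing at vertex $v^*$) and any ranking $\pi$, the greedy matchings $M_1 := M^\pi_{G_1}$ and $M_2 := M^\pi_{G_2}$ satisfy $|V(M_1) \triangle V(M_2)| \leq 2$. I would follow the template of the edge-neighboring $CGS_{\cA_{MM}}$ analysis already given in the excerpt: (i) by Berge's lemma, $M_1 \oplus M_2$ decomposes into alternating paths and even alternating cycles; (ii) any connected component not touching $v^*$ uses only edges common to $G_1$ and $G_2$, and the swap-and-determinism argument from the edge-neighboring proof shows such a component cannot appear in the symmetric difference; (iii) hence $M_1 \oplus M_2$ is a single alternating path or cycle through $v^*$. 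A direct degree count then shows that every internal vertex of this component lies in $V(M_1) \cap V(M_2)$, so only the (at most two) path endpoints can belong to $V(M_1) \triangle V(M_2)$.

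Given this lemma, I construct $\phi$ by case analysis on $V(M_1)\triangle V(M_2)$. If it is empty, take $\phi = \mathrm{id}$. If $V(M_1)\setminus V(M_2) = \{v\}$ and $V(M_2)\setminus V(M_1) = \{u\}$ (a non-augmenting path, with $|M_1| = |M_2|$), let $\phi$ swap $v \leftrightarrow u$ and fix every other vertex; then $\mathbf{1}[v \in V(M_1)] - \mathbf{1}[u \in V(M_2)] = 0$ and symmetrically for $u$, giving net change $0$. In the augmenting-path cases, where one side of the symmetric difference contains two vertices and the other is empty, take $\phi = \mathrm{id}$; at most those two sampled vertices each contribute $\pm 1$. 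In every case,
\[
\Bigl|\sum_{v \in S}\bigl(\mathbf{1}[v \in V(M_1)] - \mathbf{1}[\phi(v) \in V(M_2)]\bigr)\Bigr| \leq 2,
\]
so $|\cA_{sub\text{-}MM}(G_1;R) - \cA_{sub\text{-}MM}(G_2;\sigma(R))| \leq \tfrac{n}{2s}\cdot 2 = \tfrac{n}{s} = \tfrac{n\rho^2}{16 \cdot 24 \ln n}$, and Fact~\ref{fact:cgs-perm} delivers the theorem.

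The main obstacle is step (ii) of the structural lemma: unlike the edge-neighboring case, which has a single distinguishing edge $e^*$, in the node-neighboring case $v^*$ may have arbitrarily many incident edges that differ between $G_1$ and $G_2$. The saving observation is that $v^*$ participates in at most one matching edge in each of $M_1, M_2$, so the ``seed'' of differences is confined to at most two edges incident to $v^*$, and the swap/determinism argument from the excerpt can still be localized to any connected component of $M_1 \oplus M_2$ that avoids $v^*$.
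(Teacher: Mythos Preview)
Your proposal is correct and follows essentially the same approach as the paper: couple the randomness by fixing the ranking $\pi$ and applying a vertex bijection to the sample, so that matched vertices are sent to matched vertices, and then bound the discrepancy in the indicator sum by~$2$.

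The one noteworthy simplification in the paper's version is that it never needs your stronger structural fact $|V(M_1)\triangle V(M_2)|\le 2$, nor the ensuing case analysis. The paper only invokes Claim~\ref{clm:vc-2}, i.e.\ $\bigl|\,|V(M_1)|-|V(M_2)|\,\bigr|\le 2$, which is immediate from $\bigl|\,|M_1|-|M_2|\,\bigr|\le 1$ (already established in the $\cA_{MM}$ analysis). Given that, it takes \emph{any} bijection $f:V\to V$ with $f(V(M_1))\subseteq V(M_2)$ (assuming WLOG $|V(M_1)|\le|V(M_2)|$); for such an $f$, every sampled $v\in V(M_1)$ automatically has $f(v)\in V(M_2)$, and at most $|V(M_2)|-|V(M_1)|\le 2$ sampled vertices $v\notin V(M_1)$ can have $f(v)\in V(M_2)$. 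This yields the same bound of $2$ without splitting into the empty/swap/augmenting cases. Your route is a bit more explicit about the structure of $M_1\oplus M_2$ in the node-neighboring setting (and your discussion of why the single-component argument still localizes to $v^*$ is a useful clarification), but it ends up doing slightly more work than needed for this particular bound.
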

\begin{proof}

We can view the randomness $\mathcal{R} = \mathcal{R}_1 \times \mathcal{R}_2 $ as a joint-probability distribution. Here, $\mathcal{R}_1$ is the uniform distribution over $Sym({n \choose{2}})$ i.e., edge rankings. Similarly, $\mathcal{R}_2$ is the uniform distribution over ${ n \choose s}$ i.e., sets of $s$ vertices $v_1,\ldots, v_s$.

Let $G_1 \sim_v G_2$, and let $M_1$ and $M_2$ denote the respective maximal matchings computed greedily based on a fixed ranking $\pi \in Sym({n \choose{2}})$. Let $S_1$ denote the set of nodes that are endpoints of a matched edge in $M_1$, i.e., $S_1 = \{ u: \exists v ~s.t.~(u,v) \in M_1\}$; define $S_2$ analogously.

\begin{claim}\label{clm:vc-2}
$\left| |S_1| -|S_2| \right| \leq 2$
\end{claim}
\begin{proof}
Recall from the analysis for the CGS of the greedy algorithm $\cA_{MM}$ in Section~\ref{sec:tech-ov-match} that the matchings $M_1$ and $M_2$ differ in size by at most $1$, consequently, the claim follows.
\end{proof}

WLOG assume that $|S_1| \leq |S_2|$. Then we can always define a bijective function $f_\pi:[V] \rightarrow [V] $ with the following property: if $v \in S_1$ then $f(v)=v'$ corresponds to a vertex in $S_2$. Now we can define our permutation $\sigma: \mathcal{R} \rightarrow \mathcal{R}$ as follows: \[\sigma\left(\pi, \left\{v_1,\ldots, v_s\right\}\right) = \left(\pi, \left\{ f(v_1),\ldots, f(v_s) \right\} \right) \ . \]

Let $X^{(1)}_i$ equal 1 if $\cO^\pi_{VC}(v_i)$ returns True and 0 otherwise (see~Algorithm~\ref{alg:MM-SLA-sampling}) for the run of $\cA_{sub-MM}(G_1; \pi, \{v_j\}^s_{j=1})$. Similarly, define $X^{(2)}_i$ equal 1 if $\cO^\pi_{VC}(f(v_i))$ returns True and 0 otherwise for the run of $\cA_{sub-MM}(G_2; \sigma(\pi, \{v_j\}^s_{j=1}))$.

Since we sample without replacement we have $\left|\sum_{i \in [s]} X_i^{(1)} - \sum_{i \in [s]} X_i^{(2)} \right| \leq \left| S_2 \setminus S_1\right| \leq 2$, where the last inequality is by Claim~\ref{clm:vc-2}. Thus, 
\begin{align*}
\left\vert    \frac{n}{2s}(\sum_{i \in [s']}X^{(1)}_i) -\frac{n}{2s}(\sum_{i \in [s']}X^{(2)}_i) \right\vert \leq \frac{2n}{2s} \leq \frac{n\rho^2}{16 \cdot 24 \ln n} \;,
\end{align*}
where the last inequality comes from substituting the value for sample size $s$. 
\end{proof}

\begin{corollary}[Differentially-private $\cA_{sub-MM}$] \label{corol:dp-MM-SLA}
Let $\cA_{sub-MM}(G)$ be as described in Algorithm~\ref{alg:MM-SLA-sampling}. Then the algorithm $\cA^{DP}_{sub-MM}(G):= \cA_{sub-MM}(G)+ \Lap\left(\frac{n\rho^2}{16 \cdot 24 \cdot \eps \ln n}\right)$ is $\eps$-node (and edge) differentially private. 
\end{corollary}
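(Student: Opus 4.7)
The proof proposal is essentially to invoke Theorem~\ref{thm:lap-cgs} as a black box, after verifying that the coupled global sensitivity bound from Theorem~\ref{thm:sub-MM-cgs} also applies in the edge-neighboring regime. More precisely, the plan is: (i) for the node case, observe that $CGS^v_{\cA_{sub\text{-}MM}} \leq \tfrac{n\rho^2}{16 \cdot 24 \ln n}$ by Theorem~\ref{thm:sub-MM-cgs}, so by Theorem~\ref{thm:lap-cgs} adding $\mathrm{Lap}(CGS^v_{\cA_{sub\text{-}MM}}/\eps)$ noise yields $\eps$-node-DP, and (ii) for the edge case, show that an identical CGS bound holds with respect to edge-neighboring graphs, then again apply Theorem~\ref{thm:lap-cgs}.

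The edge-DP part is what requires the most attention, though it should follow essentially verbatim from the node-DP proof. I would re-examine the structure of the argument in Theorem~\ref{thm:sub-MM-cgs}: the only place where node-neighboring was used was Claim~\ref{clm:vc-2}, which in turn reduced to the CGS of the polynomial-time greedy algorithm $\cA_{MM}$. That analysis (in Section~\ref{sec:tech-ov-match}) handled edge-neighboring graphs $G_1 \sim_e G_2$ directly, showing that for the same ranking $\pi$ the greedy matchings $M_1, M_2$ satisfy $\bigl| |M_1| - |M_2| \bigr| \leq 1$ via the symmetric-difference/alternating-component argument. Hence the matched-vertex sets $S_1, S_2$ differ in size by at most $2$ even in the edge case. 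I can then define the same bijection $f_\pi$ on vertices, lift it to a permutation $\sigma$ on $\mathcal{R} = \mathcal{R}_1 \times \mathcal{R}_2$ exactly as in Theorem~\ref{thm:sub-MM-cgs}, and use Fact~\ref{fact:cgs-perm} together with the sampling-without-replacement property to conclude
\[
CGS^e_{\cA_{sub\text{-}MM}} \;\leq\; \frac{2n}{2s} \;=\; \frac{n\rho^2}{16 \cdot 24 \ln n}.
\]

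With both CGS bounds in hand, Theorem~\ref{thm:lap-cgs} applied to $\cA_{sub\text{-}MM}$ with noise scale $\tfrac{n\rho^2}{16 \cdot 24 \cdot \eps \ln n}$ immediately yields $\eps$-DP under both the node and edge neighboring relations, which is exactly the corollary's claim. The only mild subtlety — and arguably the ``main obstacle'' — is making explicit that the bijection $f_\pi$ used in the node case does not depend on the special structure of node-neighbors and that the symmetric-difference argument of $\cA_{MM}$ bounds $\bigl||M_1|-|M_2|\bigr|\le 1$ uniformly for both neighboring relations, so no new combinatorial work is needed. Everything else is bookkeeping: substituting $s = 16 \cdot 24 (\ln n)/\rho^2$ and dividing by $\eps$.
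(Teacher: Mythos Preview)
Your proposal is correct and takes essentially the same approach as the paper: the paper's proof is the one-line invocation of Theorem~\ref{thm:lap-cgs} together with Theorem~\ref{thm:sub-MM-cgs}, with the edge-neighboring case left as a remark preceding Theorem~\ref{thm:sub-MM-cgs} (which you correctly unpack). One small nitpick: the symmetric-difference argument in Section~\ref{sec:tech-ov-match} is actually stated for node-neighbors, not edge-neighbors, though its single-edge $e^*$ formulation makes the edge case immediate as you note.
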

\begin{proof}
This follows from Theorem~\ref{thm:lap-cgs} and Theorem~\ref{thm:sub-MM-cgs}.
\end{proof}

The following claim gives an accuracy guarantee for $\cA^{DP}_{sub-MM}(G)$.

\begin{claim}\label{clm:max-match-claim} [Accuracy of $\cA^{DP}_{sub-MM}(G)$]
Let $\pi$ be an arbitrary ranking, and let $M$ be the size of the maximum matching in $G$. Let $\tilde{M} := \cA_{sub-MM}(G)$. Then with probability $1- (2/n^4 + 1/n^{\frac{192 \cdot \eps}{\rho}})$, it is the case that 
\[  \frac{M}{2}  - \frac{3\rho n}{2} \leq \tilde{M} + \Lap\left(\frac{n\rho^2}{16 \cdot 24 \cdot \eps \ln n}\right)  \leq M  + \frac{\rho n}{2} \]
for some $\rho>0$, where $\eps$ is the privacy parameter.
\end{claim}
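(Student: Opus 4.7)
The plan is to decompose the noisy estimator as $\hat{M} := \tilde{M} + Y$ where $Y \sim \Lap\left(\frac{n\rho^2}{16 \cdot 24 \cdot \eps \ln n}\right)$, bound the two pieces independently, and combine via a union bound. Specifically, I will separately show (i) that the non-private estimator $\tilde{M}$ already lies in $[M/2 - \rho n,\ M]$ with failure probability at most $2/n^4$, and (ii) that the added Laplace noise $|Y|$ is at most $\rho n /2$ with failure probability at most $1/n^{192 \eps / \rho}$. Summing these two-sided bounds then yields the claimed interval $[M/2 - 3\rho n /2,\ M + \rho n/2]$.

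For step (i), I would invoke Behnezhad's non-private accuracy guarantee for $\cA_{sub\text{-}MM}$ stated just above the claim: with probability at least $1 - 2/n^4$ (this corresponds to the probability $1 - 1/\mathrm{poly}(n)$ bound with the constant hidden in Behnezhad's Chernoff/Hoeffding analysis on the sample of size $s = 16 \cdot 24 \ln(n)/\rho^2$), the output $\tilde{M}$ is a $(2,\rho n)$-approximation to $M$, i.e.\ $M/2 - \rho n \leq \tilde{M} \leq M$. This is used as a black box; no new accuracy analysis is needed here.

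For step (ii), I would apply the Laplace tail bound (Fact~\ref{fact:lap}). Writing $b = \frac{n\rho^2}{16 \cdot 24 \cdot \eps \ln n}$, I want to find $\ell$ so that $\ell b = \rho n /2$; solving gives
\[
\ell \;=\; \frac{\rho n / 2}{b} \;=\; \frac{\rho n}{2} \cdot \frac{16 \cdot 24 \cdot \eps \ln n}{n \rho^2} \;=\; \frac{192\, \eps \ln n}{\rho}.
\]
Plugging into Fact~\ref{fact:lap} yields $\Pr[|Y| \geq \rho n / 2] = e^{-\ell} = n^{-192\eps/\rho}$, so $|Y| \leq \rho n /2$ with probability at least $1 - 1/n^{192\eps/\rho}$.

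Combining the two events by the union bound gives the claimed overall success probability $1 - (2/n^4 + 1/n^{192\eps/\rho})$; under this event, $M/2 - \rho n - \rho n /2 \leq \tilde{M} + Y \leq M + \rho n /2$, which is exactly the claim. There is no real analytic obstacle here since the heavy lifting (coupled global sensitivity of $\cA_{sub\text{-}MM}$) was already done in Theorem~\ref{thm:sub-MM-cgs}; the only subtle point is merely checking that the constants in the Laplace scale were chosen so that the desired tail exponent $192\eps/\rho$ matches the claim statement, which the computation above verifies.
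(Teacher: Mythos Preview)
Your proposal is correct and follows essentially the same approach as the paper: decompose into (i) the non-private accuracy bound $M/2 - \rho n \le \tilde{M} \le M$ with failure probability $2/n^4$, and (ii) the Laplace tail bound giving $|Y| \le \rho n/2$ with failure probability $n^{-192\eps/\rho}$, then union-bound. The only minor difference is that the paper spells out the Chernoff computation for step~(i) explicitly (and invokes Fact~\ref{fact:conc} to justify that the bound transfers from Behnezhad's with-replacement sampling to the without-replacement variant used in Algorithm~\ref{alg:MM-SLA-sampling}), whereas you invoke the stated theorem as a black box; since the paper already records that theorem as holding for Algorithm~\ref{alg:MM-SLA-sampling}, your use is legitimate.
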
 

\begin{proof}
The correctness analysis is identical to that in~\cite{behnezhad21}. The only difference in our algorithm is that we sample without replacement, but using Fact~\ref{fact:conc}, we can use the same concentration bounds as used when sampling with replacement.

We repeat the argument here for the sake of completeness. Let $M_\pi$ be the resulting maximal matching when edges are chosen greedily according to the ranking $\pi$. From the definition of Algorithm~\ref{alg:MM-SLA-sampling}, $X_i=1$ if and only if the vertex $v_i$ (sampled without replacement) for random permutation $\pi$ is matched in $M_\pi$. Thus, we have 
$ \E[X_i ] = \Pr[X_i=1] = \frac{2 \E[ \vert M_\pi \vert ]}{n}$. Let $X= \sum_{i \in s} X_i $ and $\E[X] = \frac{2s \E[ \vert M_\pi \vert ]}{n} $. 

Using a Chernoff bound and Fact~\ref{fact:conc}, 
$$ \Pr[\vert X - \E X \vert \geq \sqrt{12 \E X \ln n} ] \leq 2 \exp \left( - \frac{12 \E X \ln n}{3 \E X}\right) = 2/n^4$$
Now, with probability $1-2/n^4$ we have,
\begin{align*}
    \frac{X n }{2s} &\in \frac{1}{2} \cdot \frac{(\E [X] \pm \sqrt{12 \E [X] \ln n} )n}{s}\\
    &= \frac{1}{2} \cdot \left( \frac{n \E [X]}{s} \pm \frac{\sqrt{12 \E [X] n^2 s^{-2}\ln n} }{s} \right) \\ 
    &= \E \vert M_\pi \vert \pm \frac{1}{2} \cdot \sqrt{24 \E [\vert M_\pi \vert ] n s^{-1}\ln n} \\
    &= \E \vert M_\pi \vert \pm \frac{1}{2} \cdot \sqrt{\E[\vert M_\pi \vert ] \rho^2 n/ 16} \\
    &\in \E \vert M_\pi \vert \pm \rho n/8
\end{align*}
where the last step is because $\E \vert M_\pi \vert \leq n$. Since our estimator is $\frac{Xn}{2s} - \frac{\rho n}{2}$ and we know that $\frac{1}{2} M \leq \E \vert M_\pi \vert \leq M$, we have that with probability $1-2/n^4$, $ M/2 - \rho n \leq \tilde{M} \leq  M$. 

Using Fact~\ref{fact:lap}, we have, 
$$\Pr \left[ \left\vert \Lap\left(\frac{n\rho^2}{16 \cdot 24 \cdot \eps \ln n}\right)\right\vert \geq \frac{\rho n}{2} \right] \leq \exp \left(- \frac{8 \cdot 24 \cdot \eps \ln n}{\rho}\right) \;.$$

Thus with probability $1- (2/n^4 + 1/n^{\frac{192 \cdot \eps}{\rho}})$, our claim follows. 
\end{proof}

Observe that by subtracting $\frac{\rho n}{2}$ from $\tilde{M} + \Lap\left(\frac{n\rho^2}{16 \cdot 24 \cdot \eps \ln n}\right)$  we can ensure that our estimate lies in the range $\left[ M/2  - 2\rho n ,  M\right]$ with probability $1- (2/n^4 + 1/n^{\frac{192 \cdot \eps}{\rho}})$.

\begin{proof}[Proof of Theorem~\ref{thm:main-sub-mm}]
The query/time complexity analysis of Algorithm \ref{alg:MM-SLA-sampling} follows from~\cite{behnezhad21}. The privacy guarantee follows from Corollary~\ref{corol:dp-MM-SLA}. The accuracy guarantee follows from Claim~\ref{clm:max-match-claim}.
\end{proof}

We now describe the sampling algorithm that estimates the minimum vertex cover size (see Algorithm~\ref{alg:VC-SLA-sampling}) which is identical to Algorithm~\ref{alg:MM-SLA-sampling} except it returns a different estimator.

\begin{center}
\fbox{\parbox{\textwidth}{
  \textbf{Input. }Input Graph $G=(V,E)$.
\begin{enumerate}[nolistsep]
    \item Uniformly and independently sample $s=16 \cdot 24 \ln n/\rho^2$ vertices from $V$ without replacement.
    \item For{ $i=1 \ldots s$,} if {$\mathcal{O}^\pi_{VC}(v_i)=$ True} then let $X_i = 1$, otherwise let $X_i=0$. 
    \item Let $\tilde{C}=\frac{n}{s}(\sum_{i \in [s]}X_i)+\frac{\rho n}{4}$.
    \end{enumerate} }}
    \captionof{algorithm}{Local Vertex Cover algorithm $\mathcal{A}_{sub-MM-VC}$ using Oracle access.}  
\label{alg:VC-SLA-sampling}
\end{center}

We analyze the CGS of $\cA_{sub-VC}$ with respect to node-neighboring graphs below. We also note that the CGS with respect to edge-neighboring graphs has the same upper bound and follows as a corollary. 

\begin{theorem}\label{thm:sub-VC-cgs}
\[CGS^v_{\cA_{sub\text{-}VC}} \leq  \frac{n \rho^2 }{192 \ln n}\]
\end{theorem}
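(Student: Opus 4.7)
The proof should follow the same template as Theorem \ref{thm:sub-MM-cgs}, with the only substantive differences being (i) that the output of $\cA_{sub-VC}$ rescales $\sum_i X_i$ by $n/s$ rather than $n/(2s)$, and (ii) that the additive shift is $+\rho n/4$ rather than $-\rho n/2$. I would again decompose the randomness as $\mathcal{R} = \mathcal{R}_1 \times \mathcal{R}_2$, where $\mathcal{R}_1$ is uniform on $Sym\bigl(\binom{n}{2}\bigr)$ and $\mathcal{R}_2$ is uniform on the $\binom{n}{s}$ subsets of sampled vertices, and then invoke Fact~\ref{fact:cgs-perm}. Fix node-neighboring $G_1 \sim_v G_2$ and a ranking $\pi$, and let $M_1, M_2$ be the greedy maximal matchings produced from $\pi$ on $G_1, G_2$. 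The crucial observation is that the vertex-cover oracle $\mathcal{O}^\pi_{VC}(v)$ returns True iff $v$ is matched in $M_j$, so the set $S_j := \{v : \exists u,\, (u,v)\in M_j\}$ is exactly the set of ``good'' inputs to the oracle in graph $G_j$.

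From the CGS analysis of the greedy matching algorithm $\cA_{MM}$ already carried out in Section~\ref{sec:tech-ov-match}, we have $\bigl||M_1| - |M_2|\bigr| \leq 1$, and hence $\bigl||S_1| - |S_2|\bigr| \leq 2$. Assume WLOG that $|S_1| \leq |S_2|$. I would then construct a bijection $f_\pi \colon V \to V$ that injects $S_1$ into $S_2$ and is extended arbitrarily on $V \setminus S_1$; the existence of such an $f_\pi$ is immediate since $|S_1| \leq |S_2|$ and both are subsets of $V$. Define the coupling permutation
$$\sigma\bigl(\pi, \{v_1,\dots,v_s\}\bigr) = \bigl(\pi, \{f_\pi(v_1),\dots,f_\pi(v_s)\}\bigr),$$
which is well-defined because $f_\pi$ is a bijection (and sampling is without replacement, so distinctness is preserved).

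Let $X_i^{(1)}$ and $X_i^{(2)}$ denote the indicators produced in the two runs $\cA_{sub\text{-}VC}(G_1;\pi,\{v_j\})$ and $\cA_{sub\text{-}VC}(G_2;\sigma(\pi,\{v_j\}))$. For any $v_i \in S_1$, the definition of $f_\pi$ forces $f_\pi(v_i) \in S_2$, so $X_i^{(1)} = X_i^{(2)} = 1$. For any $v_i \notin S_1$ with $f_\pi(v_i) \notin S_2$, both indicators are $0$. Disagreement therefore occurs only when $v_i \notin S_1$ but $f_\pi(v_i) \in S_2 \setminus f_\pi(S_1)$, and the number of such indices is bounded by $|S_2 \setminus f_\pi(S_1)| = |S_2| - |S_1| \leq 2$. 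Consequently $\bigl|\sum_i X_i^{(1)} - \sum_i X_i^{(2)}\bigr| \leq 2$, and since the additive shift $\rho n/4$ cancels, the difference of the two outputs is at most
$$\frac{n}{s}\cdot 2 = \frac{2n\rho^2}{16 \cdot 24 \ln n} = \frac{n\rho^2}{192 \ln n},$$
which is the claimed bound. The main (minor) subtlety is simply to make sure the bijection $f_\pi$ is defined as a permutation of all of $V$ (so that $\sigma$ is a valid element of $Sym(\mathcal{R})$) while preserving the key property $v \in S_1 \Rightarrow f_\pi(v) \in S_2$; everything else is a direct adaptation of Theorem~\ref{thm:sub-MM-cgs}, with the extra factor of $2$ stemming from $\frac{n}{s}$ versus $\frac{n}{2s}$ in the estimator.
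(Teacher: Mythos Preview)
Your proposal is correct and is essentially identical to the paper's own proof: both decompose the randomness as $\mathcal{R}_1\times\mathcal{R}_2$, fix $\pi$, use the greedy-matching analysis to get $\bigl||S_1|-|S_2|\bigr|\le 2$, build a bijection $f_\pi$ on $V$ sending $S_1$ into $S_2$, and couple the sampled vertex sets via $f_\pi$ to obtain $\bigl|\sum_i X_i^{(1)}-\sum_i X_i^{(2)}\bigr|\le 2$ and hence the bound $2n/s=n\rho^2/(192\ln n)$. If anything, your write-up is slightly more careful than the paper's in spelling out that disagreement is confined to indices with $f_\pi(v_i)\in S_2\setminus f_\pi(S_1)$ (of size $|S_2|-|S_1|\le 2$), whereas the paper writes the looser-looking $|S_2\setminus S_1|\le 2$; but the argument and conclusion are the same.
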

\begin{proof}
The proof is identical to Theorem~\ref{thm:sub-MM-cgs}, except now we are accounting for a different estimator. We include the entire proof for the sake of completeness. 

As before, we can view the randomness $\mathcal{R} = \mathcal{R}_1 \times \mathcal{R}_2 $ as a joint-probability distribution. Here, $\mathcal{R}_1$ is the uniform distribution over $Sym({n \choose{2}})$ i.e., edge rankings. Similarly, $\mathcal{R}_2$ is the uniform distribution over ${ n \choose s}$ i.e., sets of $s$ vertices $v_1,\ldots, v_s$.

Let $G_1 \sim_v G_2$, and let $M_1$ and $M_2$ denote the respective maximal matchings computed greedily based on a fixed ranking $\pi \in Sym({n \choose{2}})$. Let $S_1$ denote the set of nodes that are endpoints of a matched edge in $M_1$, i.e., $S_1 = \{ u: \exists v ~s.t.~(u,v) \in M_1\}$; define $S_2$ analogously.

WLOG assume that $|S_1| \leq |S_2|$. Then we can always define a bijective function $f_\pi:[V] \rightarrow [V] $ with the following property: if $v \in S_1$ then $f(v)=v'$ corresponds to a vertex in $S_2$. Now we can define our permutation $\sigma: \mathcal{R} \rightarrow \mathcal{R}$ as follows: \[\sigma\left(\pi, \left\{v_1,\ldots, v_s\right\}\right) = \left(\pi, \left\{ f(v_1),\ldots, f(v_s) \right\} \right) \ . \]

Let $X^{(1)}_i$ equal 1 if $\cO^\pi_{VC}(v_i)$ returns True and 0 otherwise (see~Algorithm~\ref{alg:VC-SLA-sampling}) for the run of $\cA_{sub-VC}(G_1; \pi, \{v_j\}^s_{j=1})$. Similarly, define $X^{(2)}_i$ equal 1 if $\cO^\pi_{VC}(f(v_i))$ returns True and 0 otherwise for the run of $\cA_{sub-VC}(G_2; \sigma(\pi, \{v_j\}^s_{j=1}))$. 

Since we sample without replacement we have $\left|\sum_{i \in [s]} X_i^{(1)} - \sum_{i \in [s]} X_i^{(2)} \right| \leq \left| S_2 \setminus S_1\right| \leq 2$, where the last inequality is by Claim~\ref{clm:vc-2}. Thus, 
\begin{align*}
\left\vert    \frac{n}{s}(\sum_{i \in [s]}X^{(1)}_i) -\frac{n}{s}(\sum_{i \in [s]}X^{(2)}_i) \right\vert \leq \frac{2n}{s} \leq \frac{2n\rho^2}{16 \cdot 24 \ln n}= \frac{n \rho^2 }{8 \cdot 24\ln n}\;.
\end{align*}

\end{proof}

\begin{corollary}[Differentially-private $\cA_{sub-VC}$] \label{corol:dp-VC-SLA}
Let $\cA_{sub-VC}(G)$ be as described in Algorithm~\ref{alg:VC-SLA-sampling}, then $\cA^{DP}_{sub-VC}(G):= \cA_{sub-VC}(G)+ \Lap\left({n\rho^2}/(8 \cdot 24 \eps \ln n) \right)$ is $\eps$-node (and edge) differentially private. 
\end{corollary}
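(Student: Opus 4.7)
My plan is to obtain this corollary as a direct application of the two preceding theorems, exactly in parallel with Corollary~\ref{corol:dp-MM-SLA} for maximum matching. The first step is to quote Theorem~\ref{thm:sub-VC-cgs}, which gives $CGS^v_{\cA_{sub-VC}} \leq \frac{n \rho^2}{192 \ln n} = \frac{n\rho^2}{8 \cdot 24 \ln n}$. The second step is to invoke the generic Laplace-for-CGS mechanism, Theorem~\ref{thm:lap-cgs}: adding $\Lap\!\left(CGS_{\cA}/\eps\right)$ to the output of a randomized algorithm preserves $\eps$-differential privacy. Plugging in the CGS bound yields exactly the noise scale $n\rho^2/(8 \cdot 24 \eps \ln n)$ appearing in the corollary, so $\cA^{DP}_{sub-VC}$ is $\eps$-node-DP.

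For the edge-DP claim, I would reuse the observation noted inside the proof of Theorem~\ref{thm:sub-VC-cgs}: the same coupling that bounds $CGS^v$ also bounds $CGS^e$ with the identical constant. The justification is that the key combinatorial fact driving Claim~\ref{clm:vc-2}, namely $\bigl||S_1|-|S_2|\bigr|\le 2$, rests on the symmetric-difference analysis $CGS^e_{\cA_{MM}}\le 1$ carried out earlier in Section~\ref{sec:tech-ov-match}, which is stated for both neighbor relations. Thus the bijection $f_\pi$ between matched-vertex sets in $G_1$ and $G_2$ works without modification when $G_1\sim_e G_2$, yielding $CGS^e_{\cA_{sub-VC}} \leq 2n/s$, and Theorem~\ref{thm:lap-cgs} then delivers $\eps$-edge-DP with the same noise scale.

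There is no real obstacle; the heavy lifting was done in Theorem~\ref{thm:sub-VC-cgs}. The only bookkeeping is (i) the arithmetic identity $\tfrac{1}{192} = \tfrac{1}{8 \cdot 24}$ needed to match the noise scales exactly to the corollary's statement, and (ii) confirming that the bijective coupling constructed in the CGS proof is oblivious to whether the differing graph element is an edge or a node, so that a single argument covers both neighbor relations. Hence the proof is effectively a one-line invocation: combine Theorem~\ref{thm:lap-cgs} with Theorem~\ref{thm:sub-VC-cgs} (for both $CGS^v$ and $CGS^e$).
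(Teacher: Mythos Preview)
Your proposal is correct and matches the paper's own proof, which is literally the one-line ``This follows from Theorem~\ref{thm:lap-cgs} and Theorem~\ref{thm:sub-VC-cgs}.'' Your additional discussion of the edge-DP case is consistent with the paper's remark (stated just before Theorem~\ref{thm:sub-VC-cgs}) that the edge-neighboring CGS bound follows as a corollary of the node-neighboring analysis.
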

\begin{proof}
This follows from Theorem~\ref{thm:lap-cgs} and Theorem~\ref{thm:sub-VC-cgs}.
\end{proof}

The following claim gives an accuracy guarantee for $\cA^{DP}_{sub-VC}(G)$. Given that $C$ denotes the minimum vertex cover size and $\tilde{C}$ is the vertex cover whose size is estimated by $\cA_{sub-VC}(G)$ (as stated below), 
\begin{claim}\label{clm:min-vc-claim} [Accuracy of $\cA^{DP}_{sub-VC}(G)$]
Let $\pi$ be a fixed ranking on the (existing and non-existing) edges of $G$, and let $C$ denote the size of the minimum vertex cover. Let $\tilde{C} := \cA_{sub-VC}(G)$. Then with probability $1-(2/n^4 + 1/n^{96 \eps/\rho } )$, 
\[  C - \frac{\rho n}{2} \leq \tilde{C} + \Lap\left(\frac{n\rho^2}{8 \cdot 24\eps \ln n}\right)  \leq 2 C  + \frac{3\rho n}{2} \]
for some $\rho>0$, where $\eps$ is the privacy parameter.
\end{claim}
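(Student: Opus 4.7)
The plan closely mirrors the analysis of Claim~\ref{clm:max-match-claim}, adapted to the vertex cover estimator $\tilde C = \tfrac{n}{s}\sum_{i=1}^s X_i + \tfrac{\rho n}{4}$ returned by $\cA_{sub-VC}$. I will first show that the non-private part of $\tilde C$ concentrates around $2|M_\pi|$ (twice the size of the greedy maximal matching induced by $\pi$), then sandwich $2|M_\pi|$ between $C$ and $2C$, and finally absorb the Laplace noise via the tail bound of Fact~\ref{fact:lap}.

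Fix $\pi$ and let $M_\pi$ be the greedy maximal matching, with endpoint set $V(M_\pi)$ of size $2|M_\pi|$. By the definition of Algorithm~\ref{alg:VC-SLA-sampling}, $X_i$ is the indicator that the sampled vertex $v_i$ lies in $V(M_\pi)$, so $\E[X_i] = 2|M_\pi|/n$ and $\E[X] = 2s|M_\pi|/n$ for $X = \sum_i X_i$. Fact~\ref{fact:conc} lets me apply the standard multiplicative Chernoff bound despite sampling without replacement, yielding $\Pr[|X - \E X| \geq \sqrt{12\,\E X\,\ln n}] \leq 2\exp(-4\ln n) = 2/n^4$. Substituting $s = 16 \cdot 24 \ln n / \rho^2$ into the deviation bound gives $\bigl|\tfrac{Xn}{s} - 2|M_\pi|\bigr| \leq \tfrac{n}{s}\sqrt{12\,\E X\,\ln n} = \tfrac{\rho \sqrt{n|M_\pi|}}{4} \leq \tfrac{\rho n}{4}$, where the last inequality uses $|M_\pi| \leq n/2$. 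Hence on this high-probability event, $2|M_\pi| \leq \tilde{C} \leq 2|M_\pi| + \rho n/2$. The standard matching/cover sandwich now applies: $V(M_\pi)$ is itself a valid vertex cover so $C \leq 2|M_\pi|$, while every matching has size at most $C$ so $|M_\pi| \leq C$, together yielding $C \leq \tilde C \leq 2C + \rho n/2$.

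To handle the Laplace term $Y \sim \Lap\bigl(n\rho^2/(8\cdot 24\,\eps\ln n)\bigr)$, Fact~\ref{fact:lap} with $\ell = 96\,\eps\ln n/\rho$ gives $|Y| < \rho n/2$ with probability at least $1 - n^{-96\eps/\rho}$. A union bound combining this with the Chernoff event then yields, with probability at least $1 - (2/n^4 + n^{-96\eps/\rho})$,
\[
    C - \tfrac{\rho n}{2} \;\leq\; \tilde{C} + Y \;\leq\; 2C + \rho n \;\leq\; 2C + \tfrac{3\rho n}{2},
\]
as required. I do not foresee any substantive obstacle beyond careful bookkeeping; the only conceptual change relative to the matching case is replacing the $M/2 \leq |M_\pi| \leq M$ sandwich with $C/2 \leq |M_\pi| \leq C$ (equivalently $C \leq 2|M_\pi| \leq 2C$), together with the $+\rho n/4$ additive shift and the recalibrated Laplace scale that produces the exponent $96\eps/\rho$ in the failure probability.
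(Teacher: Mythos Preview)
Your proposal is correct and follows essentially the same approach as the paper's proof: concentration of $\tfrac{n}{s}X$ around $2|M_\pi|$ via Chernoff (invoking Fact~\ref{fact:conc} for sampling without replacement), the standard sandwich $C \le 2|M_\pi| \le 2C$, and the Laplace tail bound from Fact~\ref{fact:lap} with $\ell = 96\eps\ln n/\rho$. The paper simply cites~\cite{behnezhad21} for the non-private bound $C \le \tilde C \le 2C + \rho n$, whereas you spell out the computation and in fact obtain the slightly sharper $\tilde C \le 2C + \rho n/2$; either suffices for the stated claim.
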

\begin{proof}
The correctness analysis is identical to that in~\cite{behnezhad21}. The only difference in our algorithm is that we sample without replacement, but using Fact~\ref{fact:conc}, we can use the same concentration bounds as used when sampling with replacement. From~\cite{behnezhad21}, we know that with probability $1-2/n^4$, $C  \leq \tilde{C} \leq 2  C  + \rho n$.

And using Fact~\ref{fact:lap}, we have, 
$$\Pr \left[ \left\vert \Lap\left(\frac{n\rho^2}{8 \cdot 24\eps \ln n}\right) \right\vert \geq \frac{\rho n}{2} \right] \leq \exp \left(- \frac{4 \cdot 24 \cdot \eps \ln n}{\rho}\right) \;.$$
Therefore with probability $1-(2/n^4 + 1/n^{96 \eps/\rho } )$ our claim follows.
\end{proof}

We observe that we can always add $\frac{\rho n}{2}$ to $ \tilde{C} +\Lap\left(\frac{n\rho^2}{8 \cdot 24\eps \ln n}\right) $ to ensure that our estimate lies in the range $\left[ C , 2 C + 2 \rho n \right]$ with probability $1-(2/n^4 + 1/n^{96 \eps/\rho } )$.

\begin{proof}[Proof of Theorem~\ref{thm:main-sub-vc}]
The query/time complexity analysis of Algorithm~\ref{alg:VC-SLA-sampling} follows from~\cite{behnezhad21}. The privacy guarantee follows from Corollary~\ref{corol:dp-VC-SLA}. The accuracy guarantee follows from Claim~\ref{clm:min-vc-claim}.
\end{proof}

\section{Conclusions and open questions}\label{sec:open}

In this work we give a differentially-private sublinear-time $(1+\rho)$-approximation algorithm for estimating the average degree of the graph. We achieve a running time comparable to its non-private counterpart, which is also tight in terms of its asymptotic behaviour with respect to the number of vertices of the graph. We also give the first differentially-private approximation algorithms for the problems of estimating maximum matching size and vertex cover size of a graph. 

To analyze the privacy of our algorithms, we proposed the notion of coupled global sensitivity, as a generalization of global sensitivity, which is applicable to randomized approximation algorithms. We show that coupled global sensitivity implies differential privacy, and use it to show that previous non-private algorithms from the literature, or  variants, can be made private by finely tuning the amounts of noise added in various steps of the algorithms.

We propose several directions of investigation for developing the notion of coupled global sensivity further and open problems pertaining to differentially-private sublinear-time algorithms for graphs.

{\bf Other applications and limitations of CGS} In particular, what are the limitations of the CGS method? Can we characterize the set of algorithms with small  CGS? Are there other natural problems for which we already have algorithms with small CGS, and hence that are easily amenable to privacy analogues? Are there algorithms for which we can prove large lower bounds on the CGS and yet they provide differential privacy?  

{\bf Better approximations for  maximum matching problems}  
In~\cite{nguyen2008constant,yoshida2012improved}, the authors also give a $(1,\rho n)$-approximation of maximum matching size with a query complexity that is exponential in $d$. Their analysis involves iterating over a sequence of oracles to augment paths of small length, in increasing order of lengths. The matching oracle considered in this work is used only in the first iteration. Analyzing the coupled global sensitivity of that algorithm appears to be much more involved, and we leave it as an open problem.

{\bfseries Better time complexity guarantees for $(2,\rho n)$-approximation matching and vertex cover algorithms. } Note that our results in Theorems~\ref{thm:main-sub-mm} and~\ref{thm:main-sub-vc} achieve an expected running time. In contrast, the results  in~\cite{behnezhad21} achieve a high-probability bound on the time-complexity. This can be done by running multiple instances of the resulting approximation algorithm for enough time and returning the output of the instance that terminates first (the analysis involves a simple application of Markov inequality). Achieving this step in a way that preserves privacy would result in a degradation of the privacy parameter $\eps$, due to composition. We leave it as an open question to provide a tighter privacy vs time-complexity analysis.



\section{Acknowledgements} 
We thank several anonymous reviewers for their valuable feedback on a preliminary version of this work. We thank Soheil Behnezhad for bringing to our attention the subtlety in the analysis of \cite{ORR12}, first discovered by \cite{ChenKK20}, and finally resolved in his recent work \cite{behnezhad21}.

\bibliographystyle{plain}
\bibliography{references}

\begin{thebibliography}{10}

\bibitem{alabi2020differentially}
Daniel Alabi, Audra McMillan, Jayshree Sarathy, Adam~D. Smith, and Salil~P.
  Vadhan.
\newblock Differentially private simple linear regression.
\newblock {\em CoRR}, abs/2007.05157, 2020.

\bibitem{bardenet2015concentration}
R{\'e}mi Bardenet and Odalric-Ambrym Maillard.
\newblock Concentration inequalities for sampling without replacement.
\newblock {\em Bernoulli}, 21(3):1361--1385, 2015.

\bibitem{behnezhad21}
Soheil Behnezhad.
\newblock Time-optimal sublinear algorithms for matching and vertex cover.
\newblock {\em CoRR}, abs/2106.02942, 2021.

\bibitem{blocki2013differentially}
Jeremiah Blocki, Avrim Blum, Anupam Datta, and Or~Sheffet.
\newblock Differentially private data analysis of social networks via
  restricted sensitivity.
\newblock In Robert~D. Kleinberg, editor, {\em Innovations in Theoretical
  Computer Science, {ITCS} '13, Berkeley, CA, USA, January 9-12, 2013}, pages
  87--96. {ACM}, 2013.

\bibitem{BCS2015}
Christian Borgs, Jennifer~T. Chayes, and Adam~D. Smith.
\newblock Private graphon estimation for sparse graphs.
\newblock In Corinna Cortes, Neil~D. Lawrence, Daniel~D. Lee, Masashi Sugiyama,
  and Roman Garnett, editors, {\em Advances in Neural Information Processing
  Systems 28: Annual Conference on Neural Information Processing Systems 2015,
  December 7-12, 2015, Montreal, Quebec, Canada}, pages 1369--1377, 2015.

\bibitem{Borgs_2018}
Christian Borgs, Jennifer~T. Chayes, Adam~D. Smith, and Ilias Zadik.
\newblock Revealing network structure, confidentially: Improved rates for
  node-private graphon estimation.
\newblock In Mikkel Thorup, editor, {\em 59th {IEEE} Annual Symposium on
  Foundations of Computer Science, {FOCS} 2018, Paris, France, October 7-9,
  2018}, pages 533--543. {IEEE} Computer Society, 2018.

\bibitem{CV_NIPS2013}
Kamalika Chaudhuri and Staal~A. Vinterbo.
\newblock A stability-based validation procedure for differentially private
  machine learning.
\newblock In Christopher J.~C. Burges, L{\'{e}}on Bottou, Zoubin Ghahramani,
  and Kilian~Q. Weinberger, editors, {\em Advances in Neural Information
  Processing Systems 26: 27th Annual Conference on Neural Information
  Processing Systems 2013. Proceedings of a meeting held December 5-8, 2013,
  Lake Tahoe, Nevada, United States}, pages 2652--2660, 2013.

\bibitem{Chen_2013}
Shixi Chen and Shuigeng Zhou.
\newblock Recursive mechanism: towards node differential privacy and
  unrestricted joins.
\newblock In Kenneth~A. Ross, Divesh Srivastava, and Dimitris Papadias,
  editors, {\em Proceedings of the {ACM} {SIGMOD} International Conference on
  Management of Data, {SIGMOD} 2013, New York, NY, USA, June 22-27, 2013},
  pages 653--664. {ACM}, 2013.

\bibitem{ChenKK20}
Yu~Chen, Sampath Kannan, and Sanjeev Khanna.
\newblock Sublinear algorithms and lower bounds for metric {TSP} cost
  estimation.
\newblock In Artur Czumaj, Anuj Dawar, and Emanuela Merelli, editors, {\em 47th
  International Colloquium on Automata, Languages, and Programming, {ICALP}
  2020, July 8-11, 2020, Saarbr{\"{u}}cken, Germany (Virtual Conference)},
  volume 168 of {\em LIPIcs}, pages 30:1--30:19. Schloss Dagstuhl -
  Leibniz-Zentrum f{\"{u}}r Informatik, 2020.

\bibitem{DKS14}
Anirban Dasgupta, Ravi Kumar, and Tam{\'{a}}s Sarl{\'{o}}s.
\newblock On estimating the average degree.
\newblock In Chin{-}Wan Chung, Andrei~Z. Broder, Kyuseok Shim, and Torsten
  Suel, editors, {\em 23rd International World Wide Web Conference, {WWW} '14,
  Seoul, Republic of Korea, April 7-11, 2014}, pages 795--806. {ACM}, 2014.

\bibitem{Dwork_McSherry_Nissim_Smith_2017}
Cynthia Dwork, Frank McSherry, Kobbi Nissim, and Adam Smith.
\newblock Calibrating noise to sensitivity in private data analysis.
\newblock {\em Journal of Privacy and Confidentiality}, 7(3):17–51, May 2017.

\bibitem{DR14}
Cynthia Dwork and Aaron Roth.
\newblock The algorithmic foundations of differential privacy.
\newblock {\em Found. Trends Theor. Comput. Sci.}, 9(3-4):211--407, 2014.

\bibitem{feige2006sums}
Uriel Feige.
\newblock On sums of independent random variables with unbounded variance and
  estimating the average degree in a graph.
\newblock {\em {SIAM} J. Comput.}, 35(4):964--984, 2006.

\bibitem{fichtenberger2021differentially}
Hendrik Fichtenberger, Monika Henzinger, and Wolfgang Ost.
\newblock Differentially private algorithms for graphs under continual
  observation.
\newblock In Petra Mutzel, Rasmus Pagh, and Grzegorz Herman, editors, {\em 29th
  Annual European Symposium on Algorithms, {ESA} 2021, September 6-8, 2021,
  Lisbon, Portugal (Virtual Conference)}, volume 204 of {\em LIPIcs}, pages
  42:1--42:16. Schloss Dagstuhl - Leibniz-Zentrum f{\"{u}}r Informatik, 2021.

\bibitem{gehrke2011towards}
Johannes Gehrke, Edward Lui, and Rafael Pass.
\newblock Towards privacy for social networks: {A} zero-knowledge based
  definition of privacy.
\newblock In Yuval Ishai, editor, {\em Theory of Cryptography - 8th Theory of
  Cryptography Conference, {TCC} 2011, Providence, RI, USA, March 28-30, 2011.
  Proceedings}, volume 6597 of {\em Lecture Notes in Computer Science}, pages
  432--449. Springer, 2011.

\bibitem{goldreich2004estimating}
Oded Goldreich and Dana Ron.
\newblock Approximating average parameters of graphs.
\newblock {\em Random Struct. Algorithms}, 32(4):473--493, 2008.

\bibitem{gupta2010differentially}
Anupam Gupta, Katrina Ligett, Frank McSherry, Aaron Roth, and Kunal Talwar.
\newblock Differentially private combinatorial optimization.
\newblock In Moses Charikar, editor, {\em Proceedings of the Twenty-First
  Annual {ACM-SIAM} Symposium on Discrete Algorithms, {SODA} 2010, Austin,
  Texas, USA, January 17-19, 2010}, pages 1106--1125. {SIAM}, 2010.

\bibitem{hay2009accurate}
Michael Hay, Chao Li, Gerome Miklau, and David~D. Jensen.
\newblock Accurate estimation of the degree distribution of private networks.
\newblock In Wei Wang, Hillol Kargupta, Sanjay Ranka, Philip~S. Yu, and Xindong
  Wu, editors, {\em {ICDM} 2009, The Ninth {IEEE} International Conference on
  Data Mining, Miami, Florida, USA, 6-9 December 2009}, pages 169--178. {IEEE}
  Computer Society, 2009.

\bibitem{hoeffding1994probability}
Wassily Hoeffding.
\newblock Probability inequalities for sums of bounded random variables.
\newblock In {\em The collected works of Wassily Hoeffding}, pages 409--426.
  Springer, 1994.

\bibitem{karwa2011private}
Vishesh Karwa, Sofya Raskhodnikova, Adam~D. Smith, and Grigory Yaroslavtsev.
\newblock Private analysis of graph structure.
\newblock {\em {ACM} Trans. Database Syst.}, 39(3):22:1--22:33, 2014.

\bibitem{kasiviswanathan2013analyzing}
Shiva~Prasad Kasiviswanathan, Kobbi Nissim, Sofya Raskhodnikova, and Adam~D.
  Smith.
\newblock Analyzing graphs with node differential privacy.
\newblock In Amit Sahai, editor, {\em Theory of Cryptography - 10th Theory of
  Cryptography Conference, {TCC} 2013, Tokyo, Japan, March 3-6, 2013.
  Proceedings}, volume 7785 of {\em Lecture Notes in Computer Science}, pages
  457--476. Springer, 2013.

\bibitem{LG2014}
Wentian Lu and Gerome Miklau.
\newblock Exponential random graph estimation under differential privacy.
\newblock In Sofus~A. Macskassy, Claudia Perlich, Jure Leskovec, Wei Wang, and
  Rayid Ghani, editors, {\em The 20th {ACM} {SIGKDD} International Conference
  on Knowledge Discovery and Data Mining, {KDD} '14, New York, NY, {USA} -
  August 24 - 27, 2014}, pages 921--930. {ACM}, 2014.

\bibitem{nguyen2008constant}
Huy~N. Nguyen and Krzysztof Onak.
\newblock Constant-time approximation algorithms via local improvements.
\newblock In {\em 49th Annual {IEEE} Symposium on Foundations of Computer
  Science, {FOCS} 2008, October 25-28, 2008, Philadelphia, PA, {USA}}, pages
  327--336. {IEEE} Computer Society, 2008.

\bibitem{NRS07}
Kobbi Nissim, Sofya Raskhodnikova, and Adam~D. Smith.
\newblock Smooth sensitivity and sampling in private data analysis.
\newblock In David~S. Johnson and Uriel Feige, editors, {\em Proceedings of the
  39th Annual {ACM} Symposium on Theory of Computing, San Diego, California,
  USA, June 11-13, 2007}, pages 75--84. {ACM}, 2007.

\bibitem{ORR12}
Krzysztof Onak, Dana Ron, Michal Rosen, and Ronitt Rubinfeld.
\newblock A near-optimal sublinear-time algorithm for approximating the minimum
  vertex cover size.
\newblock In Yuval Rabani, editor, {\em Proceedings of the Twenty-Third Annual
  {ACM-SIAM} Symposium on Discrete Algorithms, {SODA} 2012, Kyoto, Japan,
  January 17-19, 2012}, pages 1123--1131. {SIAM}, 2012.

\bibitem{parnas2007approximating}
Michal Parnas and Dana Ron.
\newblock Approximating the minimum vertex cover in sublinear time and a
  connection to distributed algorithms.
\newblock {\em Theor. Comput. Sci.}, 381(1-3):183--196, 2007.

\bibitem{raskhodnikova2015efficient}
Sofya Raskhodnikova and Adam~D. Smith.
\newblock Efficient lipschitz extensions for high-dimensional graph statistics
  and node private degree distributions.
\newblock {\em CoRR}, abs/1504.07912, 2015.

\bibitem{Ron19}
Dana Ron.
\newblock Sublinear-time algorithms for approximating graph parameters.
\newblock In {\em Computing and Software Science}, volume 10000 of {\em Lecture
  Notes in Computer Science}, pages 105--122. Springer, 2019.

\bibitem{sealfon2019efficiently}
Adam Sealfon and Jonathan~R. Ullman.
\newblock Efficiently estimating erdos-renyi graphs with node differential
  privacy.
\newblock {\em J. Priv. Confidentiality}, 11(1), 2021.

\bibitem{seshadhri2015simpler}
C.~Seshadhri.
\newblock A simpler sublinear algorithm for approximating the triangle count.
\newblock {\em CoRR}, abs/1505.01927, 2015.

\bibitem{sivasubramaniamdifferentially}
Harry Sivasubramaniam, Haonan Li, and Xi~He.
\newblock Differentially private sublinear average degree approximation.

\bibitem{SLMVC18}
Shuang Song, Susan Little, Sanjay Mehta, Staal~A. Vinterbo, and Kamalika
  Chaudhuri.
\newblock Differentially private continual release of graph statistics.
\newblock {\em CoRR}, abs/1809.02575, 2018.

\bibitem{west2001introduction}
Douglas~Brent West et~al.
\newblock {\em Introduction to graph theory}, volume~2.
\newblock Prentice hall Upper Saddle River, 2001.

\bibitem{yoshida2012improved}
Yuichi Yoshida, Masaki Yamamoto, and Hiro Ito.
\newblock Improved constant-time approximation algorithms for maximum matchings
  and other optimization problems.
\newblock {\em {SIAM} J. Comput.}, 41(4):1074--1093, 2012.

\bibitem{ZCP15}
Jun Zhang, Graham Cormode, Cecilia~M. Procopiuc, Divesh Srivastava, and Xiaokui
  Xiao.
\newblock Private release of graph statistics using ladder functions.
\newblock In Timos~K. Sellis, Susan~B. Davidson, and Zachary~G. Ives, editors,
  {\em Proceedings of the 2015 {ACM} {SIGMOD} International Conference on
  Management of Data, Melbourne, Victoria, Australia, May 31 - June 4, 2015},
  pages 731--745. {ACM}, 2015.

\end{thebibliography}
 
\newpage 
\appendix
\section{A Simple Example of CGS}\label{sec:cgs-example}
\paragraph{Example.} As a simple motivating example, suppose we have access to records of individuals in the form of their name and profession with entries sorted in lexicographic order (by name). Consider the function $f(D):=${number of doctors in dataset }$D$, along with the following approximation algorithm $\cA_f(D)$: (1) Sample each record of $D$ (with probability 1/2) without replacement. Let $S$ denote the resulting sample.  (2) Return $f(S)$. We can make $\cA_f(D)$ differentially private by adding noise proportional to $CGS_{\cA_f}$ (see Theorem~\ref{thm:lap-cgs}). For accuracy purposes, we need to show that $CGS_{\cA_f}$ is small. Observe that the set of random coin tosses $\cR$ is defined over the sampling procedure itself, i.e., if heads, $\cA_f$ includes the record in the sample; otherwise, it does not. Let $D_1, D_2$ be two neighboring datasets i.e., we can find $d^*_1 \in D_1 $ and $d^*_2 \in D_2 $ such that $D_1 \setminus \{d_1^*\} = D_2 \setminus \{d_2^*\}$. 

We can argue that $CGS_{\cA_f}$ is at most $GS_f$, which in this case is 1.  However, it is important to note that the coupling between the randomized execution of $\cA_f$ on $D_1$ and $D_2$ needs to be chosen carefully. For the sake of concreteness, suppose $D_1:= [(\text{Al}, \text{Doctor}),(\text{Ben}, \text{Mechanic}) \allowbreak,(\text{Cal} \allowbreak, \text{Doctor})]$, and $D_2:= [(\text{Ben}, \text{Mechanic}) \allowbreak ,(\text{Cal}, \text{Doctor}), \allowbreak (\text{Dan}, \text{Professor})]$ are neighboring datasets considered in lexicographic order. And let $R = IEI$ be an arbitrary sequence of coin tosses where $I$ means the record was included in the sample and $E$ means the record was excluded from the sample. If we simply choose the identity coupling, then $\vert \cA_f(D_1;R) - \cA_f(D_2;R)\vert = 2$. In general, if $D_1$ alternates between doctors and non-doctors (in lexicographic order) and $D_2$ is obtained by changing the name of the first individual (e.g., Aardvark) so that the individual appears last (e.g., Zuri) then we would have $\vert \cA_f(D_1;IEIE\ldots ) - \cA_f(D_2;IEIE\ldots)\vert = n/2$. Thus, choosing the identity coupling does not give us the tightest upper bound on $CGS_{\cA_f}$ in this case.

To show that $CGS_{\cA_f} \leq 1$ we need to find a coupling ${C \in {\sf Couple}(\cA(D_1),\cA(D_2))}$ such that $(z_1,z_2) \in C$
 minimizes the maximum difference of $\vert z_1- z_2\vert$. The key observation here is that $D_1$ and $D_2$ only differ on one entry, so excluding the differing entries in both $D_1,D_2$, we can ``couple'' the random execution for the rest of the entries in $D_1 \setminus \{d^*_1\}$ to match the random execution of the corresponding identical entries in $D_2 \setminus \{d^*_2\}$. In other words, there is some coupling ${C \in {\sf Couple}(\cA(D_1),\cA(D_2))}$ that maintains equivalence (excluding $d^*_1$ and $d^*_2$).

\section{Coupled Global Sensitivity Implies Differential Privacy} \label{sec:cgs}

\begin{proof}[Proof of Theorem~\ref{thm:lap-cgs}]

Let $D_1,D_2 \in \cD$ such that $D_1\sim D_2$ and $\cA: \cD \times \cR \to \bbR^k$. Given $D_1,D_2$, there exists a coupling $C \in {\sf Couple}(\cA(D_1),\cA(D_2))$ such that $\max_{z_1,z_2 \in C} \vert z_1-z_2 \vert \leq CGS_\cA$. Fix an arbitrary point $w \in \bbR^k$, then 
\begin{align*}
&\frac{\Pr[ \cM_L(D_1) =w]}{\Pr[ \cM_L(D_2) =w]}  &\\ 
&= \frac{\Pr_{\{Y_i\}^k_{i=1}}[\cA(D)+(Y_1, \ldots, Y_k)=w]}{\Pr_{\{Y'_i\}^k_{i=1}}[\cA(D')+(Y'_1, \ldots, Y'_k)=w]} &\text{where }Y_i,Y'_i \sim \text{Lap}(CGS_\cA/\epsilon) \\
&=  \frac{\Pr_{(z_1,z_2)\sim C,\{Y_i\}^k_{i=1}}[z_1+(Y_1, \ldots, Y_k)=w]}{\Pr_{(z_1,z_2)\sim C,\{Y'_i\}^k_{i=1}}[z_2+(Y'_1, \ldots, Y'_k)=w]} & \\
&\leq \max_{(z_1,z_2)\sim C}\ \frac{\Pr_{\{Y_i\}^k_{i=1}}[z_1+(Y_1, \ldots, Y_k)=w]}{\Pr_{\{Y'_i\}^k_{i=1}}[z_2+(Y'_1, \ldots, Y'_k)=w]} &\\
    &= \prod^k_{i=1}\left(\frac{\exp \left( - \frac{\epsilon \vert w_i-(z_1)_i \vert }{CGS_\cA}\right) }{\exp \left( - \frac{\epsilon \vert w_i-(z_2)_i \vert }{CGS_\cA}\right)}\right) &\text{applying the def of Laplace distribution}\\
    &= \max_{(z_1,z_2)\sim C}\ \prod^k_{i=1}\exp \left( \frac{\epsilon (\vert w_i- (z_2)_i \vert - \vert w_i-(z_1)_i \vert )}{CGS_\cA}\right) &\\
    &\leq \max_{(z_1,z_2)\sim C}  \prod^k_{i=1} \exp \left( \frac{\epsilon (\vert (z_1)_i - (z_2)_i \vert )}{CGS_\cA}\right) & \text{by triangle inequality}\\ 
    &\leq \max_{(z_1,z_2)\sim C}\ \exp \left( \frac{\epsilon \cdot \| z_1 - z_2 \|_1 }{CGS_\cA}\right) &\\ 
    &\leq \exp(\epsilon)\hspace{0.5cm} & by our assumption
\end{align*}

\end{proof}
In particular, for a randomized algorithm $\cA: \cD\times \cR \to \bbR^k$ the mechanism $\cA(D;R)+(Y_1, \ldots, Y_k)$ is $\epsilon$-differentially private whenever $Y_1,\ldots, Y_k \sim \text{Lap}(CGS_\cA/\epsilon)$ are sampled from the Laplace distribution.   
\end{document}